\documentclass[12pt]{article}
\usepackage{amscd,amssymb,amsmath,latexsym,enumerate}
\usepackage[mathscr]{euscript}
\usepackage{graphicx}
\usepackage[final]{pdfpages}
\usepackage{mathrsfs}
\usepackage{epsfig}
\usepackage{verbatim}
\usepackage{color}
\usepackage{amsfonts,amsthm,hyperref,mathrsfs,ulem,tikz}
\usepackage{tikz}
\usepackage{tikz-cd}
\usetikzlibrary{calc}
\usepackage{float}
\usepackage{todonotes}
\usepackage{multicol}
\usepackage{mathtools}
\usepackage[T1]{fontenc}

\usetikzlibrary{arrows,patterns}

\usepackage{xcolor}
\definecolor{MyBlue}{cmyk}{1,0.13,0,0.63}
\definecolor{MyGreen}{cmyk}{0.91,0,0.88,0.52}
\newcommand{\mylinkcolor}{MyBlue}
\newcommand{\mycitecolor}{MyGreen}
\newcommand{\myurlcolor}{black}

\usepackage{hyperref}
\hypersetup{%
  bookmarksnumbered=true,bookmarksopen=false,%
  plainpages=false,
  linktocpage=true,%
  colorlinks=true,breaklinks=true,%
  linkcolor=\mylinkcolor,citecolor=\mycitecolor,urlcolor=\myurlcolor,%
  pdfpagelayout=OneColumn,%
  pageanchor=true,%
}

\title{Logarithmic transport at topological phase transitions \\ of one-dimensional chiral quantum systems}

\author{Dragan Markovi\'c, Hermann Schulz-Baldes
\\
\\
{\small Department Mathematik, Friedrich-Alexander-Universit\"at Erlangen-N\"urnberg, Germany}
}

\date{ }

\newtheorem{theo}{Theorem}

\newtheorem{proposi}[theo]{Proposition}
\newtheorem{lemma}[theo]{Lemma}
\newtheorem{coro}[theo]{Corollary}
\newtheorem{rem}[theo]{Remark}

\newcommand{\CM}{{\mathbb C}}
\newcommand{\NM}{{\mathbb N}}
\newcommand{\RM}{{\mathbb R}}

\newcommand{\ZM}{{\mathbb Z}}

\newcommand{\EM}{{\mathbb E}}

\newcommand{\PM}{{\mathbb P}}

\newcommand{\Ee}{{\cal E}}

\newcommand{\Oo}{{\cal O}}

\newcommand{\Tt}{{\cal T}}

\newcommand{\Nn}{{\cal N}}

\newcommand{\Hh}{{\cal H}}

\newcommand{\spec}{{\mbox{\rm spec}}}

\newcommand{\one}{{\bf 1}}

\newcommand{\sgn}{\mbox{\rm sgn}}

\newcommand{\argmax}{{\mbox{\rm argmax}}}
\newcommand{\argmin}{{\mbox{\rm argmin}}}

\newcommand{\bsm}{\left(\begin{smallmatrix}} 
\newcommand{\esm}{\end{smallmatrix}\right)}  
\definecolor{GR}{rgb}{.35,.7,.35}

\begin{document}

\maketitle

\begin{abstract}
The random hopping Hamiltonian is a toy model for a disorder-driven topological phase transitions in one-dimensional chiral Hamiltonians. It has a vanishing Lyapunov exponent and a Dyson peak in the density of states at zero energy. This work shows how the quantum dynamics leads to a logarithmic growth of the moments of the position operator, similar as in classical Sinai diffusion. The mechanism at the origin of this phenomenon is that the eigenfunction of the smallest eigenvalue is well-approximated by the exponential of a classical random walk, leading to two essentially independent subexponential localization centers.

\vspace{.1cm}

\noindent {\bf Keywords:} quantum diffusion, Anderson localization, topological phase transition
\\
\noindent {\bf  MSC2020:} 60H25, 47B36, 82C10 

\end{abstract}




\section{Introduction and resum\'e}

Topological phases of non-interacting chiral Fermions in odd space dimensions can be distinguished by topological invariants given by (possibly higher) winding numbers. For disordered systems of this type, there are non-commutative versions of the winding number which due to index theorems remain integer-valued \cite{MSHP,PS}. By definition, a topological phase transition is a jump of this integer invariant under the change of an external parameter.  If this parameter is the disorder strength, one speaks of a disorder-driven topological phase transition. By general principles, the quantum dynamics at such a transition point cannot be Anderson localized \cite{PS,SSt1}. The best known example of this type (in a system without chiral symmetry) is the bending of the Landau level under the addition of a random potential, but here the focus is on quasi-one-dimensional chiral models like the disordered Su-Schrieffer-Heeger model \cite{SSH} for which the disorder-driven phase transition was first studied in \cite{MSHP}. In particular, it can be characterized by a vanishing of the Lyapunov exponent at zero energy \cite{MSHP,DSS2} and the existence of a Dyson peak in the density of states \cite{DSS}. The toy model for the transition point is the widely studied random hopping model. Its zero energy state and its transmission properties have been studied in the physics literature for a long time, {\it e.g.} \cite{TC,SE,ITA}. On another front, it has been argued in \cite{MHMD} that these transitions correspond to a quantum critical point that can be studied by a strong-disorder renormalization group as introduced for the analysis of spin chains \cite{Fis}, but later on transposed to many other models of statistical physics \cite{BF,McK,IM}. Partially based on this insight, the work \cite{BAK} argues that the quantum diffusion for such a model at the topological phase transition is very slow, namely moments of the position grow only logarithmically in time -- just as for classical Sinai diffusion in a random environment.  Of course, such logarithmic behavior is notoriously hard to see in numerical studies ({\it e.g.} \cite{PaS}), but there are experimental realizations showing that there is at least no Anderson localization at the transition points \cite{BSS} and also the effect on the ac conductivity has been studied \cite{KHQ}. This paper proves rigorously a quantitative logarithmic lower bound on the quantum dynamics on appropriate time and length scales. For sake of simplicity, the study is carried out for the random hopping model. 

\vspace{.2cm}

Before explaining the mathematical set-up and results in more detail, let us offer another perspective on the results of this paper, notably from the point of view of quantum dynamics in one dimension. As already pointed out, the topological phase transition can be characterized by a vanishing of the Lyapunov exponent $\gamma$ at zero energy, and furthermore also by the appearance of a Dyson peak \cite{Dys} in the density of states $\Nn$. More precisely, there exist computable positive constants $C$ and $C'$ such that for small $E$ and up to lower order terms 
\begin{equation}
\Nn(E)-\Nn(0)\,\sim\,\frac{C}{\log(E)^2}
\;,
\qquad
\gamma(E)\,\sim\,\frac{C'}{|\log(E)|}
\;.
\label{eq-Asymptotics}
\end{equation}
These expansions can be found in the physics literature \cite{EM,HJ,TC} on the random hopping model, but rigorous proofs were only given more recently in \cite{KV,DSS, DSS2}. Hence the localization length (aka inverse Lyapunov exponent) diverges slowly for states near the critical energy $E_c=0$, and there are typically many such states due to the Dyson peak. In another random quantum model, the so-called random dimer model, a vanishing of the Lyapunov exponent was argued to lead to quantum diffusion \cite{DWP}. This was rigorously confirmed in \cite{JSS,JS} where also the precise multi-scaling was determined. Yet another random model with a vanishing Lyapunov exponent is a particular type of random Kronig-Penney model which also exhibits quantum diffusion \cite{DKS}. Compared to \eqref{eq-Asymptotics}, these models behave considerably different because the Lyapunov exponents only grow with a power law $\gamma(E_c+\epsilon)\sim |\epsilon|^\alpha$ with $\alpha>0$ near the critical energy. This slower growth leads to more extended states and thus much faster quantum diffusion than the slow logarithmic growth associated with \eqref{eq-Asymptotics}. Clearly, the transport in all these models with a vanishing Lyapunov exponent at a critical energy is provided by the few states which are close to the critical energy, by a mechanism that is unrelated to the continuity of the spectral measures which can be exploited for quasiperiodic systems \cite{Gua}. If the Lyapunov exponent only vanishes at a finite number of points, it is possible to prove that the spectral measures of the infinite-volume Hamiltonian are almost surely pure-point. For the case of the random dimer model this is proved in \cite{BG} and for the critical Su-Schriefer-Heeger chain in \cite{Sha}, but we are not aware of a treatment of the random hopping model (even though the approach of \cite{BG} based on the positivity of the Lyapunov exponent away from $0$ possibly works; in particular, let us stress that the work \cite{Ran} allows for random hoppings parameters, but also requires the potentials to be random). Hence in this class of models, merely the extended nature of the eigenfunctions is responsible for the quantum transport. In fact, at (disorder-driven) Anderson transitions the eigenfunctions can be multifractal \cite{EM} and, as will be explained in this introduction further down, for the models studied here, the eigenfunctions of the low-lying eigenvalues do {\it not} have single localization centers which are Poisson distributed, which is a behavior known to hold in the Anderson-localized phase \cite{GK,Nak}. Finally, let us add that it has recently been shown that Sinai diffusion can also appear in almost periodic models \cite{JLM}.

\vspace{.2cm}

Let us now describe the main result in mathematical terms. The random hopping Hamiltonian $H^N$, with $N=2N'$ on $\phi=(\phi_n)_{n=1,\ldots,N}\in\ell^2(\{1,\ldots,N\})\cong \CM^N$ with Dirichlet boundary conditions is defined by
$$
(H^N\phi)_n
\;=\;
t_{n+1}\phi_{n+1}+t_{n}\phi_{n-1}
\;,
\qquad
\phi_0=\phi_{N+1}=0
\,,
$$ 
where $(t_{n})_{n=2,\ldots, N}$ are independent and identically distributed positive random variables drawn from a compact interval $[a,b]$ with $a>0$. Given some initial condition $\phi(0)\in \CM^N$, its quantum time-evolution is $\phi(t)=e^{-\imath H^N t}\phi(0)$. For technical convenience, the normalized initial state localized near the middle $N'$ of the sample will be chosen to be
\begin{equation}
\label{eq-InitialState}
\phi(0)
\;=\;
\frac{1}{\sqrt{2}}\big(|N'\rangle+|N'+1\rangle\big)
\;.
\end{equation}
Here $|n\rangle$ is the standard Dirac Ket notation for the normalized state at site $n\in\{1,\ldots,N\}$. Then the spreading of the state $\phi(t)$ will be studied via its time- and disorder-averaged $q$-th moments w.r.t. the shifted position operator defined by $X^N|n\rangle=(n-N')|n\rangle$, namely
$$
M^N_q(T)
\;=\;
\EM
\int_0^T \frac{dt}{T} \;
\| |X^N|^{\frac{q}{2}} \phi(t)\|^2
\;,
$$
where $\EM$ denotes the expectation over the random configurations. The main result gives a quantitative lower bound on $M^N_q(T)$, albeit only in a regime where time $T$ and length $N$ are roughly connected by $T\sim e^{\sqrt{N}}$, just as in \cite{BAK}.

\begin{theo}
\label{theo-Intro}
Let $\alpha>\frac{1}{2}$. There exist positive constants $c_2>c_1>2$ and $A_{q,\alpha}<\infty$ such that for $T$ satisfying $e^{c_1 N^\alpha}\leq T \leq e^{c_2 N^\alpha}$ one has
$$
M^N_q(T)
\;\geq\; A_{q,\alpha}\, \log(T)^{\frac{q-1}{\alpha}}\,-\,\mathcal{O}(N^{q+5}e^{-N^\alpha})
\;.
$$
\end{theo}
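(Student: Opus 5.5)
The plan is to use the eigenfunction of the smallest positive eigenvalue $E_1$ of $H^N$. The abstract suggests this eigenfunction is approximately the exponential of a random walk, so it should carry weight far from the middle with positive probability.

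First, the time-averaged spectral decomposition. Write $H^N=\sum_k E_k|\psi_k\rangle\langle\psi_k|$; the spectrum is simple and symmetric by chirality, $E_{-k}=-E_k$. Then
$$
\int_0^T\frac{dt}{T}\,\||X^N|^{\frac q2}\phi(t)\|^2\;\ge\;\sum_k |\langle\psi_k|\phi(0)\rangle|^2\,\||X^N|^{\frac q2}\psi_k\|^2\;-\;\mathcal{E}_T ,
$$
where $\mathcal{E}_T$ collects the off-diagonal terms with factors $\frac{1}{T|E_k-E_l|}$. Keeping only the pair $\pm E_1$ and using positivity of the diagonal terms, the main term is $\ge |\langle\psi_1|\phi(0)\rangle|^2\,\||X^N|^{q/2}\psi_1\|^2$. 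The off-diagonal error is controlled if the level spacings are at least $e^{-c N^\alpha}$ with overwhelming probability. Each matrix element is bounded by $N^q$, which is where the $N^{q+5}$ in the error term should come from. I would restrict to the event $\Omega_N$ on which all gaps, and $E_1$ itself, exceed $e^{-c_1N^\alpha}$, and bound the complement by $e^{-N^\alpha}$ through large deviations for $\sum\log t_n$.

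Second, the structure of $\psi_1$. At $E=0$ the equation decouples on odd and even sites, giving $\phi_{2m+1}=\prod_j(-t_{2j}/t_{2j+1})\phi_1$. This is the exponential of a centered random walk $S_m=\sum_j(\log t_{2j+1}-\log t_{2j})$. For small $E_1$, perturbation theory via transfer matrices, which are close to the zero-energy ones up to $\mathcal{O}(E N e^{\max|S|})$, shows that $\psi_1$ is well approximated by the normalized combination of the odd-site solution $e^{-S_m}$ and the even-site solution $e^{S_m}$. Moreover $E_1\approx \|e^{-S}\|^{-1}\|e^{S}\|^{-1}$, roughly $e^{-(\max S-\min S)}$. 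Since $S$ is diffusive, $E_1\ge e^{-c_2N^\alpha}$ only fails with probability $e^{-N^{2\alpha-1}}$. This is what needs $\alpha>\frac12$, and it fixes $c_1,c_2$ so that $T E_1\gg1$ on $\Omega_N$.

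Third, the lower bound on the moment. By the invariance principle, with probability bounded below uniformly in $N$, the maximum of $S$ sits at distance $\ge \epsilon N$ from $N'$. In the same event, the path near $N'$ is within $\mathcal{O}(1)$ of its local scale, so $|\langle\psi_1|\phi(0)\rangle|^2\ge e^{-C}$ relative to the norms. This requires the walk near the middle to reach near its maximum or minimum, which I would restrict to by conditioning on a Brownian event of positive probability; choosing $\phi(0)$ on two adjacent sites captures both sublattices. On this event $\||X^N|^{q/2}\psi_1\|^2\gtrsim (\epsilon N)^q$. Since $\log T\sim N^\alpha$ gives $N\gtrsim\log(T)^{1/\alpha}$, this yields a bound of order $\log(T)^{q/\alpha}$.

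The requirement that the overlap with the middle be non-negligible costs a factor of order $1/N$. The weight of $\psi_1$ at $N'$ is $e^{S_{N'}-\max S}$, and the probability that this is $\mathcal{O}(1)$ for an unconditioned walk is about $N^{-1}$, which turns the bound into $\log(T)^{(q-1)/\alpha}$. The main obstacle is this joint estimate: a sharp lower bound on the probability that the walk's maximum is attained within $\mathcal{O}(1)$ of $N'$ while the argmax of the other sublattice profile lies far away, combined with uniform control of the $E\ne0$ perturbation of $\psi_1$. I would try first to prove it using Brownian path decompositions at the maximum, together with Kolmogorov-type estimates for the random walk.
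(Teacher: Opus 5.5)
Your overall strategy is the paper's: keep the diagonal contribution of $\pm E^N_1$, approximate the eigenfunction by exponentials of the random walk, pay a factor $\frac{1}{N}$ to pin one localization center at the middle while the other sits at distance of order $N$, and convert $N^{q-1}$ into $\log(T)^{(q-1)/\alpha}$. However, Step~1 contains a step that does not go through.

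You put all off-diagonal terms $k\neq l$ into $\mathcal{E}_T$, so you need \emph{every} level spacing of $H^N$ to exceed $e^{-c_1N^\alpha}$ outside an event of probability $o(\frac{1}{N})$. Large deviations for $\sum\log t_n$ cannot give this. Through deterministic estimates the walk controls only the bottom of the spectrum, namely $E^N_1$ and $E^N_2-E^N_1$. At energies of order one the model is Anderson localized, and eigenvalues with distant centers are essentially uncorrelated. Bounding their spacings would need a Minami-type estimate, which is not available for off-diagonal disorder; the deterministic Last--Simon bound only gives $e^{-cN}$.

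The missing idea is positivity. Use $|X^N|^q\geq L^q(\one-\Pi^L)$, split $|N'\rangle=\chi_I|N'\rangle+\chi_{I^c}|N'\rangle$ with $I$ isolating $E^N_1$, and discard the block $\langle N'|\chi_{I^c}e^{\imath H^Nt}(\one-\Pi^L)e^{-\imath H^Nt}\chi_{I^c}|N'\rangle\geq 0$. Then only the cross terms between $E^N_1$ and the rest of the spectrum must be averaged out, which needs only the isolation of $E^N_1$. The paper proves this isolation deterministically, $E^N_2-E^N_1\geq\frac{a^4}{2b^3N^2}e^{-2(\tilde W_+-\tilde W_-)}$. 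It uses the Last--Simon variation-of-constants argument with the explicit zero-energy transfer matrices (Lemma~\ref{lem-LastSimon}, Propositions~\ref{prop-LowerE2-1} and \ref{prop-EnergyBoundOBC}). Your proposal has no mechanism for this gap bound either. On the other hand, your exponential large-deviation control of the range of the walk is the right way to make the remaining bad event $o(\frac{1}{N})$.

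The part you defer as the main obstacle is where the paper's analytic work lies, and the tools you suggest would not suffice. An additive bound $\mathcal{O}(ENe^{\max|S|})$ on transfer matrices does not give a multiplicative comparison of $\Phi^{E^N_1}$ with the normalized zero modes. Zero-energy transfer products have entries as large as $e^{\tilde W_+-\tilde W_-}$, so the naive correction $E\sum_l\|\mathcal{T}^0(n,l+1)\|\,\|\mathcal{T}^0(l,m)\|$ is not small in general.

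The paper instead uses oscillation theory: for $0<E\leq E^N_1$ it forces $\hat\phi^E_p$ and $\hat\psi_p$ to have the same sign. Hence the correction term in the variation-of-constants identity is one-signed. This gives monotone ratios and the one-sided bounds of Proposition~\ref{prop-EigEst}, with an error quadratic in $E$. Combined with the min-max bound $(E^N_1)^2\leq t_2^2/\|\hat\psi\|^2$, these give Corollary~\ref{coro-QuotientRandomWalkBound}.

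Making that error small at the middle site requires the additional event $\Ee_3(\eta)$. There the depth $\mu\sqrt{N''}$ of the minimum must beat the largest drop $\eta\sqrt{N''}$ of the walk on the right half (Proposition~\ref{prop-PhiLower}). Finally, the normalized zero mode being of order one at its maximum is a lattice-scale fact: $\EM\sum_n e^{2(\tilde w_n-\tilde W_+)}\leq C$ uniformly in $N$ (Davies--Kr\"amer, Proposition~\ref{prop-PsiLower}). Brownian path decompositions cannot see it; the meander invariance principles are used only for the macroscopic events $\Ee_2(\mu)$ and $\Ee_3(\eta)$.
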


Let us follow up with a short discussion of the key elements of the proof of Theorem~\ref{theo-Intro}, which we also believe to unravel the physical mechanism of the quantum transport. First of all, one can express the moments $M^N_q(T)$ in a orthonormalized eigenbasis $(\phi^{E^N_n})_{n=1,\ldots,N}$ of $H^N$, having energies $(E^N_n)_{n=1,\ldots,N}$:
\begin{align}
M^N_q(T)
&
\;=\;
\EM
\sum_{n,m=1}^N
\langle \phi^{E^N_n}||X^N|^q|\phi^{E^N_m}\rangle\,\langle \phi(0)|\phi^{E^N_n}\rangle\,\langle \phi^{E^N_m}|\phi(0)\rangle
\,
\int^T_0\frac{dt}{T}\,e^{\imath (E^N_n-E_m^N)t}
\;.
\label{eq-heuristics}
\end{align}
The diagonal terms $n=m$ give positive contributions, while the off-diagonal terms $n\not=m$ are smaller provided that the time $T$ is sufficiently large compared to the inverse of the energy difference. The time scales in Theorem~\ref{theo-Intro} allow to show that these contributions are negligible so that one can focus on the diagonal ones. For those summands to provide a large contribution, it is necessary that both factors $\langle \phi^{E^N_n}||X^N|^q|\phi^{E^N_m}\rangle$ and $|\langle \phi(0)|\phi^{E^N_n}\rangle|^2$ are large. For states $\phi^{E^N_n}$ with one localization center, this is not possible because the second factor requires this localization center to be close to $N'$ which then, however, implies that the first factor is small. The crucial insight is now that the eigenstate with the smallest positive energy, denoted by $E^N_1$, has {\it two} localization centers which are, moreover, far apart, see Figure~\ref{fig-intro}. Theorem~\ref{theo-Intro} then follows by proving a lower bound on merely this one diagonal summand, combined with a rigorous control of the error terms. 

\vspace{.2cm}

The existence of two localization centers is rooted in the chiral (or sublattice) symmetry of the random hopping Hamiltonian, with a symmetry operator given by the parity of the position. In Section~\ref{sec-ChiralJac}, it will be shown that this implies that the spectrum is symmetric around $0$ and that the eigenfunctions $\phi^E$ for any eigenvalue $E$ satisfy
$$
\sum_{i=1}^{N'}(\phi^E_{2i-1})^2
\;=\;
\sum_{i=1}^{N'}(\phi^E_{2i})^2
\;,
$$
namely the odd and even components of $\phi^E=(\tilde{\phi}^E,\hat{\phi}^E)$ have the same weight. Moreover, both components are eigenstates of two naturally associated positive random Jacobi matrices which have localized eigenstates, albeit only with a subexponential decay from the localization center.  This is corroborated by a connection between the eigenvectors of the random hopping model and a centered random walk induced by the Hamiltonian is established which was already in the physics literature, {\it e.g.} \cite{ITA}. The random walk is defined by
\begin{equation}
\label{eq-RandWalkDef0}
\tilde{w}_1=0\;, \qquad 
\tilde{w}_{n+1}\,=\,\tilde{w}_{n}\,+\,\log(\tilde{\kappa}_n)
\;,
\qquad
\tilde{\kappa}_n
\,=\,
\frac{t_{2n}}{t_{2n+1}}\;, \qquad n=1,\ldots, N'-1.
\end{equation}
Note that the increments $\log(\tilde{\kappa}_n)$ are centered i.i.d. random variables. Under appropriate conditions and only on relevant parts of the sample, it is then shown that the components of the eigenvector of the smallest positive eigenvalue $E=E^N_1$ can be approximated by the exponential of this random walk:
\begin{equation}
\label{eq-IntroLink}
\tilde{\phi}^E\,\approx\,e^{\tilde{w}}
\;,
\qquad
\hat{\phi}^E\,\approx\,e^{\hat{w}}
\;,
\end{equation}
where $\hat{w}\approx-\tilde{w}$. Figure~\ref{fig-intro} shows the eigenfunction of the smallest eigenvalue and the random walk for a typical realization. The numerical agreement in \eqref{eq-IntroLink} is remarkably good (see also Figure~\ref{fig:wrapfig2} below). The control of the errors terms to these identities exploits algebraic features of chiral Jacobi matrices. Furthermore, new precise bounds on the smallest eigenvalue and the gap above it are proved using the random walks as trial states. These deterministic results formulated in detail in Section~\ref{sec-ChiralJac} make up the core analytic novelties of this work.

\vspace{.2cm}

Once the link \eqref{eq-IntroLink} between the eigenstate and the random walk is established, one can access the distribution of the two localization centers of $\phi^E$ by using classical results on random walks, such as the Sparre-Anderson theorem \cite{Fel} giving the distribution of the maximum of a random walk. Finer properties like the distribution of the separation of the distance between the maximum and minimum (aka the distance between the two localization centers) can be accessed using appropriate Donsker-type invariance theorems \cite{Bol,Igl} and explicit formulas for Brownian motion and Brownian meanders \cite{BS,MMS,SH}. The probabilistic statements which this implies on the eigenvalues and eigenstates of the random hopping Hamiltonian are given in Section~\ref{sec-Probabilistic}. These results are tailored for the proof of Theorem~\ref{theo-Intro} which is spelled out in Section~\ref{sec-RealEnergies} (this is partially based on the first {\tt arXiv} version of \cite{JSS}). The final Section~\ref{sec-NumIll} presents a few numerical results which illustrate the analytical findings, but also discusses the optimality of Theorem~\ref{theo-Intro} and raises some natural follow-up questions. 

\begin{figure}
    \centering
    \includegraphics[width=0.48\linewidth]{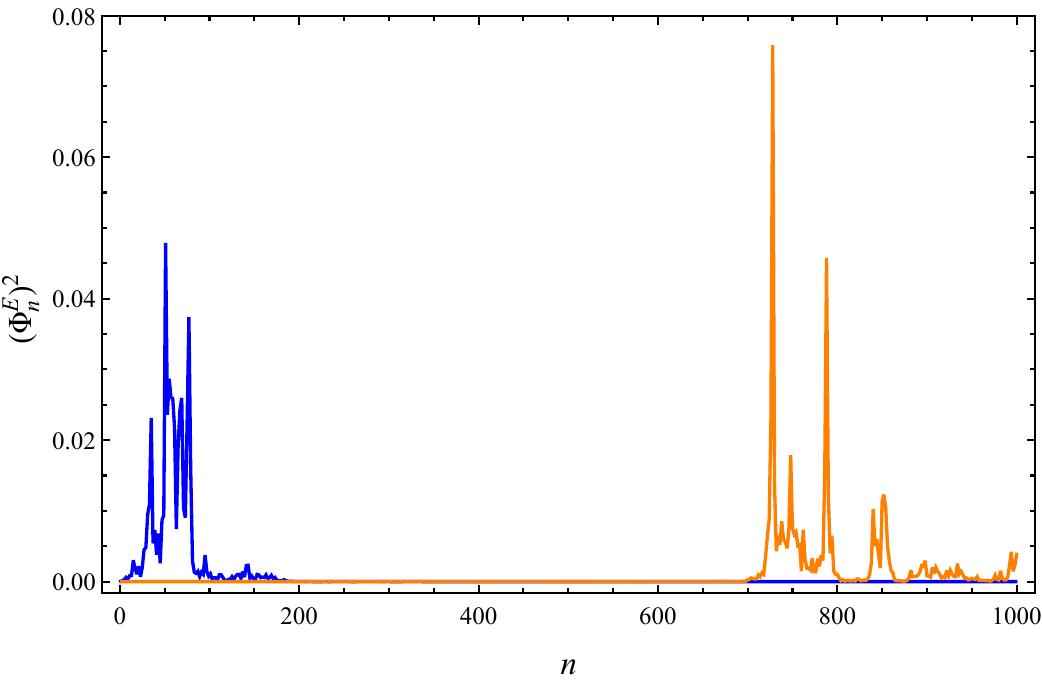}    
\hspace{.1cm}
    \includegraphics[width=0.48\linewidth]{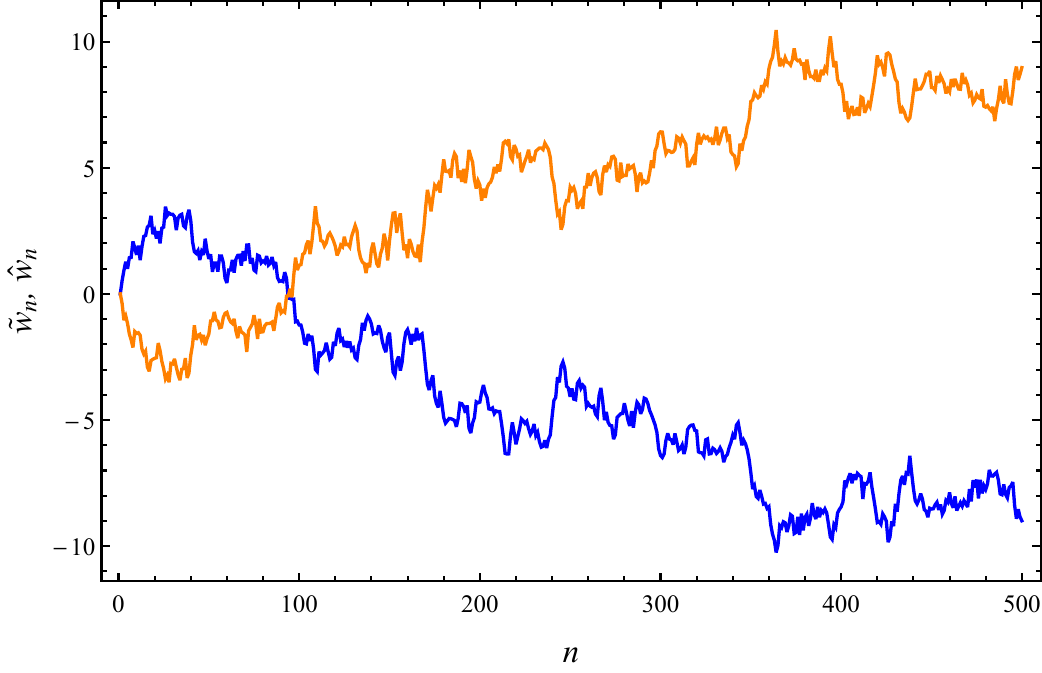}
\caption{\textit{The first plot shows  $n\in\{1,\ldots,N\}\mapsto |\phi^{E}_n|^2$ of $\phi^{E}\in\CM^N$ of the smallest positive eigenvalue $E=E^N_1$ for a typical realization $H^N$ of the random hopping Hamiltonian for $N =1000 $ and $t_n$ drawn uniformly from $[1.7,4.5]$. Here the two colors show the odd and even components of $\phi^{E}$ on the odd and even sites respectively. The right plot shows the associated random walks $\tilde{w}$ and $\hat{w}$ for the same configuration.
}}
\label{fig-intro}
\end{figure}

\vspace{.3cm}

\noindent {\bf Acknowledgements:} H.~S-B. thanks Matt Forster for several discussions on the project. This work was supported by the DFG grant SCHU 1358/8-1.

\section{Chiral Jacobi matrices}
\label{sec-ChiralJac}

The hopping Hamiltonian $H^N$ is a Jacobi matrix with vanishing diagonal
\begin{equation}
\label{eq-matrix}
H^N
\;=\;
\begin{pmatrix}
      & t_2  &        &        &         &        \\
t_2      &     &  t_3  &        &         &        \\
            & t_3 &     & \ddots &         &        \\
            &        & \ddots & \ddots & \ddots  &        \\
            &        &        & \ddots &  & t_N   \\
    &        &        &        & t_N  & 
\end{pmatrix}
\;,
\end{equation}
with the positive hopping coefficients $t_2,\ldots,t_N$ which are kept fixed in this section. The Hamiltonian possesses a crucial property which will be exploited throughout, namely, it has a {\it chiral symmetry} 
$$
J^NH^NJ^N
\;=\;
-\,H^N
\;,
$$
where $J^N$ is a selfadjoint which in position basis $(|n\rangle)_{1\leq n\leq N}$ is given by
$$
J^N|n\rangle
\;=\;
(-1)^{n-1}|n\rangle
\;.
$$
This section studies the general properties of the spectrum of $H^N$, its eigenvectors and especially focuses on the analysis of the eigenvector for the smallest positive eigenvalue. 

\subsection{Basic consequences of the chiral symmetry}

The spectrum of $H^N$ is simple and symmetric which means that $E \in \spec(H^N)$ implies $-E \in \spec(H^N)$. Indeed, denote the eigenvector for an eigenvalue $E$ by $\phi^E$, namely $H^N\phi^E=E\phi^E$, then the symmetry relation implies $-J^NH^NJ^N\phi^E=E\phi^E$ or $H^N(J^N\phi^E)=-E(J^N\phi^E)$. Hence $\phi^{-E}=J^N\phi^E$ and one deduces that 
$$
\phi^E_{2k-1}
\;=\;
\phi^{-E}_{2k-1}
\quad \text{and}\quad 
\phi^E_{2k}\;=\;
-\phi^{-E}_{2k}
\;, 
\qquad k=1,\dots,N' \,.
$$ 
This suggests to decompose the Hilbert space into odd and even sectors w.r.t. $J^N$, notably the Hilbert space is split $\Hh=\tilde{\Hh}\oplus \hat{\Hh}$ into a direct sum of the eigenspaces of $J^N$ for the eigenvalues $-1$ and $1$. Therefore any eigenvector of $H^N$ can be written as $\phi^E=\binom{\tilde{\phi}^E}{\hat{\phi}^E}$ with $\tilde{\phi}^E\in\tilde{\Hh}$ and $\hat{\phi}^E\in \hat{\Hh}$. Explicitly,
\begin{equation}
\label{eq-PhiTildeHat}
\tilde{\phi}^E_k\;=\;\phi^E_{2k-1}
\;,
\qquad
\hat{\phi}^E_k\;=\;\phi^E_{2k}
\;,
\qquad
k= 1,\ldots,N'
\;.
\end{equation}
By the above, $\phi^{-E}=\binom{\tilde{\phi}^E}{-\hat{\phi}^E}$. If now $E\not=0$, the orthogonality of $\phi^E$ and $\phi^{-E}$ implies $\|\tilde{\phi}^E\|^2=\|\hat{\phi}^E\|^2$. For the associated normalized state $\Phi^E=\|\phi^E\|^{-1}\phi^E$, one thus has  $\|\tilde{\Phi}^E\|^2=\frac{1}{2}=\|\hat{\Phi}^E\|^2$. Another important insight is obtained by writing $H^N$ in the grading of $J^N$:
\begin{equation}
\label{eq-ChiralA}
H^N
\;=\;
\begin{pmatrix} 0 & (A^N)^* \\ A^N & 0 \end{pmatrix}
\;,
\end{equation}
for some linear operator $A^N:\tilde{\Hh}\to\hat{\Hh}$.  Then 
$$
(H^N)^2
\;=\;
\begin{pmatrix}
(A^N)^*A^N & 0 \\ 0 & A^N(A^N)^*
\end{pmatrix}
\;.
$$
Concretely for the random hopping Hamiltonian, one finds
\begin{equation}
\label{eq-AevenIntro}
A^N
\;=\;
\begin{pmatrix}
t_2       & t_3  &        &        &         &        \\
   & t_4    &   t_5 &        &         &        \\
            &  & t_6    & \ddots &         &        \\
            &        &  & \ddots & \ddots  &        \\
            &        &        &  & t_{N-2} &t_{N-1}   \\
                 &        &        &        &   & t_{N}
\end{pmatrix}
\;,
\end{equation}
and
\begin{align}
\label{eq-A^*AIntro}
(A^N)^*A^N
&
\;=\;
\begin{pmatrix}
t_2^2       &  t_2t_3 &        &        &                \\
t_3t_2      & t_3^2+t_4^2    & t_4t_5   &                 &        \\
            & t_5t_4 & t_5^2+t_6^2   & \ddots &                 \\
            &        &  \ddots & \ddots &  t_{N-2}t_{N-1}         \\
     &        &        &         t_{N-1}t_{N-2}  & t_{N-1}^2+t_N^2
\end{pmatrix}
\;,
\end{align}
and similarly for $A^N(A^N)^*$. In particular, it is remarkable that $(A^N)^*A^N$, $A^N(A^N)^*$ and thus $(H^N)^2$ are again Jacobi matrices (hermitian tridiagonal) which by construction are positive definite. Inversely, given a positive Jacobi matrix, one can readily check that it can be written as such a product, a fact that will be exploited in a future work.  For the study of the eigenvalues of $H^N$, it is sufficient to carry out the spectral analysis of $(A^N)^*A^N$ because 
$$
(A^N)^*A^N\tilde{\phi}^E\;=\;E^2\tilde{\phi}^E
\quad\Longrightarrow\quad 
E\in \spec(H^N)
\;.
$$
Furthermore, if one also determines the eigenfunction $\hat{\phi}^E$ of $A^N(A^N)^*$ for the eigenvalue $E^2$,  one can readily construct the eigenstates $\phi^E$ and $\phi^{-E}$ of $H^N$. 

\subsection{Zero energy solutions as trial states}
\label{sec-ZeroEnergy}

Zero energy solutions $\psi$ (or so-called zero modes) satisfy $H^N\psi=0$. For even $N$, they cannot exist because all eigenvalues appear in pairs $(-E,E)$ and hence $0$ would have to be twice degenerate, which is impossible due to the simplicity of the spectrum of any Jacobi matrix (with periodic boundary conditions not considered here, a two-fold degenerate kernel is possible for so-called Brownian bridge configurations). On the other hand, by the same argument there must be exactly one zero energy solution for odd $N$ and this then has to lie in one of the sectors of $J^N$, more precisely the odd one due to the choice of $J^N$. In the following where $N=2N'$, zero energy solutions are solutions of the recurrence relation at energy $E=0$, but they do not satisfy the boundary conditions and are hence not eigenstates. Nevertheless, one expects them to be good approximations for eigenstates of energies which are close to zero, and this will be made explicit in Sections~\ref{sec-EigenfunctionEst} and \ref{sec-EigenfunctionEstNormalized}.

\vspace{.2cm}

By definition, the zero energy solution $\tilde{\psi}=(\tilde{\psi}_n)_{n=1,\ldots,N'}=(\phi^0_{2n-1})_{n=1,\ldots,N'}$ satisfies
$$
\tilde{\psi}_1\,=\,1
\;,
\qquad
t_{2n+1}\tilde{\psi}_{n+1}\,+\,t_{2n}\tilde{\psi}_n\,=\,0\;,
$$
 for $n=1,\ldots,N'-1$. Then one verifies
 \begin{equation}
\label{eq-Apsi}
 A^N{\tilde{\psi}}\;=\;
 \begin{pmatrix}
 0 \\
 \vdots \\
 0 \\
 t_{N}\tilde{\psi}_{N'}
 \end{pmatrix}
 \;.
 \end{equation}
Explicitly, the recurrence relation gives
\begin{equation}
\label{eq-PsiTilde}
\tilde{\psi}_n
\;=\;
(-1)\frac{t_{2n-2}}{t_{2n-1}}\,\tilde{\psi}_{n-1}
\;=\;
(-1)^{n-1}\prod^{n-1}_{k=1}\frac{t_{2k}}{t_{2k+1}}
\;,
\qquad
n=2,\ldots,N'
\;.
\end{equation}
In a similar way one can deduce the zero mode for the even part of the eigenvector:
$$
\hat{\psi}_{1}\,=\,1
\;,
\qquad
t_{2n}\hat{\psi}_n\,+\,t_{2n-1}\hat{\psi}_{n-1}\,=\,0\;,
$$
for $n=2,\dots, N'$. Then one sees that:
\begin{equation}
\label{eq-Astarpsi}
(A^N)^* \hat{\psi}
\;=\;
\begin{pmatrix}
t_2\hat{\psi}_1
\\
0
\\
\vdots
\\
0
\end{pmatrix}
\;=
\begin{pmatrix}
t_2
\\
0
\\
\vdots
\\
0
\end{pmatrix}
\end{equation}
The recurrence relation for the even sector gives
$$
\hat{\psi}_{n}\;=\;(-1)^{n-1}\prod_{k=1}^{n-1}\frac{t_{2k+1}}{t_{2k+2}}\;, \qquad
n=2,\ldots,N'
\;.
$$
Multiplying the expressions for $\tilde{\psi}_n$ and $\hat{\psi}_n$ directly leads to the relation
\begin{equation}
\label{eq-2RandomWalks}
\tilde{\psi}_n\hat{\psi}_n
\;=\;
\frac{t_2}{t_{2n}}\;, 
\qquad
n=1,\ldots,N'
\;.
\end{equation}
In the following sections, it will be shown that these zero energy solutions give, on certain parts of the system, good approximations of the eigenfunction of the smallest positive eigenvalue $E^N_1$ of $H^N$. As trial states, they will also provide good estimates on $E^N_1$.

\subsection{Reminder on oscillation theory}
\label{sec-Osci}

For any energy $E\in\RM$, the formal solution $\phi^E=(\phi^E_n)_{n=1,\ldots,N}$ of the eigenvalue equation $H^N\phi^E=E\phi^E$ with Dirichlet boundary condition at the left edge satisfies the three-term recurrence relation 
\begin{equation}
\label{eq-DirichletSolution}
t_{n+1}\phi^E_{n+1}+t_{n}\phi^E_{n-1}
\;=\;
E\,\phi^E_n
\;,
\qquad
\phi^E_0=0\;,
\;\;\;
\phi^E_{1}=1
\;,
\end{equation}
where $n=1,\ldots,N$. This solution is called formal because these relations produce some value $\phi^E_{N+1}\in\RM$ so that the right boundary condition $\phi^E_{N+1}=0$ may not be satisfied. Those energies $E\in\RM$ for which $\phi^E_{N+1}=0$ are precisely the eigenvalues of $H^N$. For $N$ odd, it was argued above that $\phi^0=\binom{\tilde{\psi}}{0}$ is an eigenstate. It has exactly $\frac{N-1}{2}$ oscillations, as can be read off from \eqref{eq-PsiTilde}. For $E>0$ but small, the zeros of $\phi^E$ at all even sites are lifted, and the signs of the components can be read off the recurrence relation, giving
$$
\sgn(\phi^E_{2k-1})
\;=\;
\sgn(\phi^E_{2k})
\;=\;
(-1)^{k-1}
\;.
$$
This holds as long as $E$ is less than or equal to the first positive eigenvalue $E^N_1>0$. According to \eqref{eq-PhiTildeHat} and \eqref{eq-PsiTilde} this also implies for all non-negative $E\leq E^N_1$
$$
\sgn(\phi^E_{2k-1})
\;=\;
\sgn(\tilde{\phi}^E_{k})
\;=\;
\sgn(\tilde{\psi}_{k})
\;=\;
(-1)^{k-1}
\;,
$$
and similarly
$$
\sgn(\phi^E_{2k})
\;=\;
\sgn(\hat{\phi}^E_{k})
\;=\;
\sgn(\hat{\psi}_{k})
\;=\;
(-1)^{k-1}
\;.
$$
This reflects that $E=0$ lies in the middle of the spectrum and therefore there are $\frac{N}{2}$ space oscillations of the solution $\phi^0$ (and solutions nearby). It is a special property of a chiral Hamiltonian that these oscillations are $2$-periodic.

\subsection{Control of eigenstate for smallest eigenvalue}
\label{sec-EigenfunctionEst}

The state $\psi=\binom{\tilde{\psi}}{0}$ and the eigenstate $\phi^{E^N_1}$ both satisfy the same left Dirichlet boundary condition and they almost satisfy the same three-term recurrence relation, the difference only resulting from the (small) difference in energy between $0$ and $E^N_1>0$. The following statement controls the quotient of these states for all energies $E\leq E^N_1$ in an efficient quantitative manner. In fact the result implies that this quotient is close to $1$ on the left side of the sample $\{1,\ldots,N\}$, but necessarily deviates towards the right side. Similarly, the state $\psi=\binom{0}{\hat{\psi}}$ and $\phi^{E^N_1}$ satisfy after suitable normalization the same right Dirichlet boundary condition and almost satisfy the same three-term recurrence relation, so for sites close to the right boundary they are expected to be almost the same. This is shown in the second claim of the next proposition.

\begin{proposi}
\label{prop-EigEst} 
For all positive $E\leq E^N_1$ and even $N=2N'$, the functions
$$
k\in\{1,\ldots,N'\}\;\mapsto\;\frac{\tilde{\phi}^{E}_{k}}{\tilde{\psi}_k}
\quad
\mbox{ and }
\quad
k\in\{1,\ldots,N'\}\;\mapsto\;\frac{\hat{\phi}^{E}_{N'-k}}{\hat{\psi}_{N'-k}}
$$
are decreasing and satisfy the bounds
\begin{equation}
\label{eq-LowestStateBound1}
1-\left(\frac{E}{t_2}\right)^{\!2}
\;\sum_{p=1}^{k-1}\hat{\psi}_{p}^2\sum_{l=1}^p\tilde{\psi}_{l}^2
\;\leq \;
\frac{\tilde{\phi}^{E}_{k}}{\tilde{\psi}_k}
\;\leq\;
1
\;,
\end{equation}
as well as
\begin{equation}
\label{eq-LowestStateBound2}
1\,-\,\left(\frac{E}{t_2}\right)^{\! 2}
\sum_{p=0}^{k-1}\tilde{\psi}_{N'-p}^2\sum_{l=0}^p\hat{\psi}_{N'-l}^2
\;\leq\;
\frac{\hat{\psi}_{N'}}{\hat{\phi}^{E}_{N'}}
\frac{\hat{\phi}^{E}_{N'-k}}{\hat{\psi}_{N'-k}}
\;\leq\; 
1\;.
\end{equation}
\end{proposi}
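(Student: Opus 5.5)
The plan is to derive exact first-order recursions for the two ratios $r_k=\tilde\phi^E_k/\tilde\psi_k$ and $s_k=\hat\phi^E_k/\hat\psi_k$. These recursions should be summation identities in which every increment has a definite sign. Both bounds then follow by inserting one identity into the other. Throughout, $\phi^E$ is the formal left-started solution \eqref{eq-DirichletSolution}, split as in \eqref{eq-PhiTildeHat}. Its rows read
$$
t_{2k}\tilde\phi^E_k+t_{2k+1}\tilde\phi^E_{k+1}=E\hat\phi^E_k\;,\qquad t_{2k-1}\hat\phi^E_{k-1}+t_{2k}\hat\phi^E_k=E\tilde\phi^E_k\;,
$$
for $k=1,\ldots,N'-1$ and $k=1,\ldots,N'$ respectively, with $\hat\phi^E_0=0$ and $\tilde\phi^E_1=1$. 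There is one remaining row, $t_N\tilde\phi^E_{N'}+t_{N+1}\phi^E_{N+1}=E\hat\phi^E_{N'}$, in which $\phi^E_{N+1}$ need not vanish.

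\textbf{Step 1 (the two identities).} Divide the first relation by $\tilde\psi_{k+1}$ and use $t_{2k+1}\tilde\psi_{k+1}=-t_{2k}\tilde\psi_k=-t_2/\hat\psi_k$, which follows from \eqref{eq-2RandomWalks}. This gives
$$
r_{k+1}=r_k-\tfrac{E}{t_2}\,\hat\phi^E_k\hat\psi_k\;,\qquad r_1=1\;.
$$
Analogously, $t_{2k}\hat\psi_k=-t_{2k-1}\hat\psi_{k-1}$ together with $t_{2k}\hat\psi_k=t_2/\tilde\psi_k$ gives
$$
s_k=s_{k-1}+\tfrac{E}{t_2}\,\tilde\phi^E_k\tilde\psi_k\;,\qquad s_0=0\;.
$$
By the sign information from Section~\ref{sec-Osci}, valid for $0<E\le E^N_1$, the products $\tilde\phi^E_k\tilde\psi_k$ and $\hat\phi^E_k\hat\psi_k$ are nonnegative. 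Hence $r$ is decreasing and $s$ is increasing, so $k\mapsto s_{N'-k}/s_{N'}$ is decreasing. Since $r\ge0$ and $r_1=1$, we get $0\le r_k\le1$; moreover $s_{N'}>0$ and $s_{N'-k}/s_{N'}\le 1$. This settles the monotonicity claims and both upper bounds.

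\textbf{Step 2 (lower bound \eqref{eq-LowestStateBound1}).} From the $s$-identity and $r_l\le 1$ we obtain $\hat\phi^E_p\hat\psi_p=s_p\hat\psi_p^2\le \frac{E}{t_2}\hat\psi_p^2\sum_{l\le p}\tilde\psi_l^2$. Summing the $r$-identity from $p=1$ to $k-1$ then yields \eqref{eq-LowestStateBound1}.

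\textbf{Step 3 (lower bound \eqref{eq-LowestStateBound2}, for the left-started solution as stated).} Summing the $s$-identity backwards gives
$$
s_{N'-k}=s_{N'}-\tfrac{E}{t_2}\sum_{p=0}^{k-1} r_{N'-p}\tilde\psi_{N'-p}^2\;.
$$
I then bound $r_{N'-p}$ from above by summing the $r$-identity backwards:
$$
r_{N'-p}=r_{N'}+\tfrac{E}{t_2}\sum_{l=1}^{p}\hat\phi^E_{N'-l}\hat\psi_{N'-l}\;.
$$
The key point, and the main obstacle, is the boundary term $r_{N'}$, because $\phi^E_{N+1}\neq 0$ when $E<E^N_1$. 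Using $\tilde\psi_{N'}\hat\psi_{N'}=t_2/t_N$, one has $r_{N'}=\frac{t_N}{t_2}\tilde\phi^E_{N'}\hat\psi_{N'}$. The last row then gives
$$
r_{N'}=\tfrac{E}{t_2}\hat\phi^E_{N'}\hat\psi_{N'}-\tfrac{t_{N+1}}{t_2}\phi^E_{N+1}\hat\psi_{N'}\;.
$$
So it suffices to show that $\phi^E_{N+1}\hat\psi_{N'}\ge0$, i.e.\ $\sgn(\phi^E_{N+1})=(-1)^{N'-1}$ or $\phi^E_{N+1}=0$. I would take this from oscillation theory in the form of Section~\ref{sec-Osci}. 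At $E=0$ one has $\phi^0_{N+1}=-\frac{t_N}{t_{N+1}}\tilde\psi_{N'}$, which carries the sign $(-1)^{N'-1}$ because $N'$ is even (using the formal continuation $t_{N+1}>0$; the sign claim itself does not depend on its value). Moreover, $E\mapsto\phi^E_{N+1}$ is a polynomial whose first positive zero is $E^N_1$, so the sign persists on $[0,E^N_1]$. With this, $r_{N'}\le\frac{E}{t_2}\hat\phi^E_{N'}\hat\psi_{N'}$, and therefore
$$
r_{N'-p}\le \tfrac{E}{t_2}\sum_{l=0}^{p} s_{N'-l}\hat\psi_{N'-l}^2\le \tfrac{E}{t_2}\,s_{N'}\sum_{l=0}^p\hat\psi_{N'-l}^2\;,
$$
where the last step uses that $s$ is increasing. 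Inserting this into the backward $s$-sum and dividing by $s_{N'}=\hat\phi^E_{N'}/\hat\psi_{N'}>0$ gives exactly \eqref{eq-LowestStateBound2}. The only delicate input is this sign of $\phi^E_{N+1}$; everything else is algebra with \eqref{eq-2RandomWalks}.
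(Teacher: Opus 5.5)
Your Steps 1 and 2 are correct. The two first-order identities are exactly the differenced forms of the paper's \eqref{eq-Bound0} and \eqref{eq-QuotientIncrease}. You obtain them directly from the recurrence and \eqref{eq-2RandomWalks}, instead of via variation of constants and Lemma~\ref{lem-T0}. Monotonicity, both upper bounds and \eqref{eq-LowestStateBound1} then follow as you say.

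The gap is the sign claim in Step 3, which is wrong. Since $\sgn(\tilde\psi_{N'})=(-1)^{N'-1}$, the formula $\phi^0_{N+1}=-\frac{t_N}{t_{N+1}}\tilde\psi_{N'}$ gives $\sgn(\phi^0_{N+1})=(-1)^{N'}$, whatever the parity of $N'$. This is just the pattern $\sgn(\phi^E_{2k-1})=(-1)^{k-1}$ of Section~\ref{sec-Osci} continued to $k=N'+1$. Hence $\phi^E_{N+1}\hat\psi_{N'}<0$ on $[0,E^N_1)$. Your boundary identity then gives $r_{N'}>\frac{E}{t_2}\hat\phi^E_{N'}\hat\psi_{N'}$, which is the wrong direction.

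This cannot be repaired for $E<E^N_1$, because the lower bound in \eqref{eq-LowestStateBound2} is false for all sufficiently small $E>0$. By your own identities, $r_l=1-\mathcal{O}(E^2)$ as $E\downarrow 0$. Hence $s_{N'-k}/s_{N'}\to\sum_{l\le N'-k}\tilde\psi_l^2/\sum_{l\le N'}\tilde\psi_l^2<1$, while the left-hand side of \eqref{eq-LowestStateBound2} tends to $1$. Concretely, take $N=4$ and $t_2=t_3=t_4=1$. Then $\tilde\psi=\hat\psi=(1,-1)$, $\hat\phi^E=(E,E^3-2E)$ and $E^N_1=\frac{\sqrt5-1}{2}$. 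For $k=1$ the claim reads $1-E^2\le \frac{1}{2-E^2}$, which fails at $E=0.1$ ($0.99$ versus $1/1.99$).

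The bound does hold at $E=E^N_1$. There $\phi^E_{N+1}=0$, so the boundary term vanishes and $r_{N'}=\frac{E}{t_2}s_{N'}\hat\psi_{N'}^2$ exactly. The rest of your Step 3 then goes through verbatim and yields \eqref{eq-LowestStateBound2}.

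This is also the only energy at which the paper's reflection argument is valid. The mirrored state $\phi^E_{N+1-n}/\phi^E_N$ satisfies the left Dirichlet condition only when $\phi^E_{N+1}=0$. It is also the case actually used later, in Proposition~\ref{prop-PhiLower}. So state Step 3 for $E=E^N_1$ only. With that restriction, your backward summation is a correct and more direct alternative to the paper's mirroring of the Hamiltonian and of the zero modes.
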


\begin{rem}
{\rm
As the energy $E^N_1$ will be shown to be very small of the order $e^{-\sqrt{N}}$, the lower bound in \eqref{eq-LowestStateBound1} for $E=E^N_1$ is close to $1$ for small $k$ , showing that the eigenfunction $\phi^{E^N_1}$ is very well approximated by the zero energy state $\binom{\tilde{\psi}}{0}$ close to the left edge (sufficiently small $k$). However, the quality of the lower bound  in \eqref{eq-LowestStateBound1} decreases as $k$ increases, and, in particular, it is of no interest at the right boundary. On the other hand, the lower bound in \eqref{eq-LowestStateBound2} shows that $\phi^{E^N_1}$ is well approximated by the zero energy state $\binom{0}{\hat{\psi}}$ close to the right  edge (again $k$ small), up to the suitable normalization factor $\frac{\hat{\psi}_{N'}}{\hat{\phi}^{E}_{N'}}$ assuring that one starts with Dirichlet boundary condition at the right edge $N'$. Again the lower bound in \eqref{eq-LowestStateBound2} is of little value at the left boundary (for $k$ close to $N'$). Somewhere in the middle of the sample is a transition from $\binom{\tilde{\psi}}{0}$ to $\binom{0}{\hat{\psi}}$ resulting from one oscillation. Typically this transition is in a small spatial region and is linked to bursts in a certain random dynamical system. This will be analyzed in more detail elsewhere.  
}
\hfill $\diamond$
\end{rem}

\begin{rem}
{\rm
The proof of \eqref{eq-LowestStateBound1} can be transposed to the Hamiltonian restricted to $\{2,\ldots,N\}$ for which the Dirichlet boundary condition corresponds to the anti-Dirichlet boundary condition of $H^N$. More precisely, if $\eta^E=(\eta^E_n)_{n=2,\ldots,N}$ is defined as the solution of the three-term recurrence relation with initial conditions $\eta^E_1=0$ and $\eta^E_2=1$, then  for $k\geq 2$
\begin{equation}
\label{eq-LowestStateBound3}
1-\left(\frac{E}{t_3}\right)^{\!2}
\;\sum_{p=2}^{k-1}\tilde{\psi}_{p}^2\sum_{l=2}^p\hat{\psi}_{l}^2
\;\leq \;
\frac{\tilde{\eta}^{E}_{k}}{\hat{\psi}_k}
\;\leq\;
1
\;.
\end{equation}
While similar to the bound \eqref{eq-LowestStateBound2}, the latter has the advantage to involve the solution $\phi^E$ with Dirichlet boundary condition on the left edge, and hence the eigenstate $\phi^{E^N_1}$.
}
\hfill $\diamond$
\end{rem}

The proof of Proposition~\ref{prop-EigEst} and many other statements below will be using the standard transfer matrices at energy $E\in\RM$ . Recall ({\it e.g.} \cite{JSS}) that for integer sites $n\geq m\geq 1$ they are defined by
$$
\Tt^E_n
\;= \;
\begin{pmatrix}
E \,\frac{1}{t_n} & -t_n \\
\frac{1}{t_n} & 0
\end{pmatrix}
\;,
\qquad
\mathcal{T}^E(n,m) \;=\; \Tt^E_{n-1}  \cdots \Tt^E_m
\;,
$$
where $\mathcal{T}^E(m,m)=\one$ and $t_1=1$. For $m>n$, one also sets $\Tt^E(n,m)=\Tt^E(m,n)^{-1}$. The transfer matrices allow to rewrite the solution $\phi^E$ of the three-term recurrence relation \eqref{eq-DirichletSolution} as
\begin{equation}
\label{eq-transferrelOld}
\left(\begin{array}{c} t_{n+1}\,\phi^E_{n+1} \\ \phi^E_{n}
\end{array} \right)
\;=\;
{\cal T}^E(n+1,m)\;
\left(\begin{array}{c}\, t_m\,\phi^E_m \\ \phi^E_{m-1}
\end{array} \right)
\;.
\end{equation}
For another energy $\epsilon\in\RM$, each matrix can be decomposed as
$$
\Tt^E_n
\;=\; 
\Tt^\epsilon_n\,+\, (E-\epsilon)\frac{1}{t_n}
\begin{pmatrix}
1 & 0 \\
0 & 0
\end{pmatrix}
\;.
$$
Applying this iteratively leads to the well-known variation of parameters formula
\begin{equation}
\label{eq-transferid3}
{\cal T}^{E}(n+1,m)
\;=\;
{\cal T}^\epsilon(n+1,m)
+(E-\epsilon)\,
\sum_{l=m}^{n}\,
{\cal T}^{\epsilon}(n,l+1)\;
\frac{1}{t_l}\,
\left(
\begin{array}{cc} 1 & 0 \\
0 & 0 
\end{array}
\right)\,
{\cal T}^E(l,m)
\;.
\end{equation}
Replacing this for $m=1$ in \eqref{eq-transferrelOld} and extracting the lower component leads to
\begin{align}
\phi^E_n
&
\;=\;
\binom{0}{1}^*
{\cal T}^\epsilon(n+1,1)
\binom{1}{0}
\nonumber
\\
&
\;=\;
\phi^\epsilon_n+(E-\epsilon)\sum_{l=1}^n \binom{0}{1}^*
{\cal T}^\epsilon(n+1,l+1)
\binom{1}{0}\frac{1}{t_{l+1}}\phi^E_l
\;.
\label{TM-Sol}
\end{align}
where $\phi^E, \phi^\epsilon$ satisfy the same initial condition, but are not normalized. Of particular importance will be the transfer matrices at zero energy. They can readily be computed explicitly:

\begin{lemma}
\label{lem-T0}
If $n$ and $l$ are of same parity, the matrix ${\cal T}^0(n,l)$ is diagonal. If $n$ and $l$ are of different parity, ${\cal T}^0(n,l)$ is off-diagonal and its matrix elements are given by
\begin{equation}
\label{eq-T0matrix}
\binom{0}{1}^*
\mathcal{T}^{0}(2k+1,2p)\begin{pmatrix}1\\0\end{pmatrix}
\;=\;
\frac{\hat{\psi}_k}{\hat{\psi_p}}
\;,
\qquad
\binom{0}{1}^*
\mathcal{T}^{0}(2k,2p+1)\begin{pmatrix}1\\0\end{pmatrix}
\;=\;
\frac{\tilde{\psi}_{k}}{\tilde{\psi}_{p+1}}
\end{equation}
\end{lemma}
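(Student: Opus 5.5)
The plan is a direct computation from the explicit form of the one-step matrices at zero energy. For $E=0$ one has
$$
\Tt^0_n\;=\;\begin{pmatrix} 0 & -t_n \\ \frac{1}{t_n} & 0\end{pmatrix},
$$
so every single factor is off-diagonal. The parity statement will then follow by induction on $n-l$, because a product of two off-diagonal $2\times2$ matrices is diagonal, while a diagonal matrix times an off-diagonal one is off-diagonal. Since ${\cal T}^0(n,l)=\Tt^0_{n-1}\cdots\Tt^0_l$ contains $n-l$ factors, it is diagonal for $n-l$ even and off-diagonal for $n-l$ odd. For $n<l$ the same conclusion holds because inverses preserve both classes.

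The key identity is the product of two consecutive factors:
$$
\Tt^0_{j+1}\Tt^0_j\;=\;\begin{pmatrix} -\frac{t_{j+1}}{t_j} & 0\\ 0 & -\frac{t_j}{t_{j+1}}\end{pmatrix}.
$$

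For the first formula in \eqref{eq-T0matrix}, I will write ${\cal T}^0(2k+1,2p)=\big(\prod_{j=k-1}^{p}\Tt^0_{2j+2}\Tt^0_{2j+1}\big)\,\Tt^0_{2p}$. The ordered product of the diagonal pair matrices is diagonal, and its lower-right entry is $\prod_{j=p}^{k-1}\big(-\frac{t_{2j+1}}{t_{2j+2}}\big)$. Applying $\Tt^0_{2p}$ to $\binom{1}{0}$ gives $\binom{0}{1/t_{2p}}$. Pairing with $\binom{0}{1}^*$ therefore yields the product above times the entry of the last single factor. Comparing with the explicit formula $\hat{\psi}_n=(-1)^{n-1}\prod_{j=1}^{n-1}\frac{t_{2j+1}}{t_{2j+2}}$, the telescoping product $\prod_{j=p}^{k-1}(-\frac{t_{2j+1}}{t_{2j+2}})$ is exactly $\hat{\psi}_k/\hat{\psi}_p$.

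The second formula is analogous, with ${\cal T}^0(2k,2p+1)=\big(\prod\Tt^0_{2j+1}\Tt^0_{2j}\big)\,\Tt^0_{2p+1}$ over $j=p+1,\dots,k-1$. Here the lower-right entries are $-\frac{t_{2j}}{t_{2j+1}}$. Comparing with \eqref{eq-PsiTilde}, their product telescopes to $\tilde{\psi}_k/\tilde{\psi}_{p+1}$.

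The only delicate point is bookkeeping of conventions, and this is where I expect the obstacle. The isolated last factor $\Tt^0_{2p}$ (respectively $\Tt^0_{2p+1}$) contributes a boundary weight $1/t_{2p}$ (respectively $1/t_{2p+1}$). This has to be reconciled with the normalization used in \eqref{eq-T0matrix}, namely with how the factor $\frac{1}{t_{l+1}}$ in \eqref{TM-Sol} is split off and with the index shift $t_1=1$. I will therefore check carefully which matrix element, and which placement of the endpoint factor, matches the stated ratio. I expect the pure ratio of zero modes to be the contribution of the paired factors, with the endpoint weight absorbed into the prefactor $\frac{1}{t_{l+1}}$ in the subsequent application of \eqref{TM-Sol}.

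Beyond this, the signs need to be tracked: each pair contributes a factor $-1$, matching the alternating signs $(-1)^{k-p}$ of the zero modes. Finally, the degenerate case $k=p$ (a single factor) can be checked directly.
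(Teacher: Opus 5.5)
Your computation is correct, and it is the explicit calculation the paper has in mind; the paper gives no argument beyond saying the matrix elements ``can readily be computed explicitly''. The parity claim follows because every $\Tt^0_n$ is off-diagonal, and the pair products $\Tt^0_{j+1}\Tt^0_j$ telescope to the zero-mode ratios with the right signs.

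The gap is the last step, which you leave open, and it cannot be closed the way you hope. With the paper's definition $\Tt^E_n=\bigl(\begin{smallmatrix} E/t_n & -t_n\\ 1/t_n & 0\end{smallmatrix}\bigr)$, what you have actually proved is
\[
\binom{0}{1}^*\Tt^0(2k+1,2p)\binom{1}{0}\;=\;\frac{1}{t_{2p}}\,\frac{\hat{\psi}_k}{\hat{\psi}_p}\;,
\qquad
\binom{0}{1}^*\Tt^0(2k,2p+1)\binom{1}{0}\;=\;\frac{1}{t_{2p+1}}\,\frac{\tilde{\psi}_k}{\tilde{\psi}_{p+1}}\;.
\]
So \eqref{eq-T0matrix} as printed is wrong by exactly the endpoint weight. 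Already for $k=p$ the left side is the lower-left entry $1/t_{2p}$ of the single factor $\Tt^0_{2p}$, while the printed right side equals $1$. No choice of ``which matrix element'' or ``placement of the endpoint factor'' makes the stated identity true for the object defined in the statement. Your proof should end by stating this corrected version and flagging the misprint. Equivalently, the corrected statement is that $t_{l+1}\binom{0}{1}^*\Tt^0(n+1,l+1)\binom{1}{0}$ is the pure ratio of zero modes. Deferring a check whose outcome your computation already determines is not a conclusion.

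Your proposed way out is to absorb the weight into the prefactor $\frac{1}{t_{l+1}}$ of \eqref{TM-Sol}. This has the right net effect on the later application, but it is not an argument, and it rests on a second misprint: \eqref{TM-Sol} should carry no such prefactor. Take $m=1$ in \eqref{eq-transferid3}, whose first factor in the sum should read $\Tt^\epsilon(n+1,l+1)$. Using $\frac{1}{t_l}\bigl(\begin{smallmatrix}1&0\\0&0\end{smallmatrix}\bigr)\Tt^E(l,1)\binom{1}{0}=\binom{\phi^E_l}{0}$, one gets
\[
\phi^E_n\;=\;\phi^\epsilon_n\,+\,(E-\epsilon)\sum_{l=1}^{n}\binom{0}{1}^*\Tt^\epsilon(n+1,l+1)\binom{1}{0}\,\phi^E_l\;.
\]
The paper's two slips cancel: $1/t_{l+1}$ is missing in the lemma and spurious in \eqref{TM-Sol}. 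Hence \eqref{eq-Bound0} and \eqref{eq-QuotientIncrease} are correct. However, the factor belongs in the matrix element, where your computation puts it, and must not be counted twice.

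A quick check with $k=2$: the recurrence gives directly $\phi^E_3=-\frac{t_2}{t_3}+\frac{E^2}{t_2t_3}$. The corrected matrix element together with the corrected formula above reproduces this. The printed lemma together with the correct formula would give $-\frac{t_2}{t_3}+\frac{E^2}{t_2}$ instead.
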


\noindent {\bf Proof} of Proposition~\ref{prop-EigEst}.
Choosing $\epsilon=0$ in \eqref{TM-Sol} gives
$$
\phi^{E}_n 
\;=\; 
\phi^{0}_n \,+\, E\sum_{l=1}^n 
\binom{0}{1}^*
\mathcal{T}^0(n+1,l+1)\begin{pmatrix}1\\0\end{pmatrix}
\frac{1}{t_{l+1}}\,\phi^{E}_l 
\;.
$$
Focussing first on odd $n=2k-1$ and using the first claim of Lemma~\ref{lem-T0}, this becomes
\begin{align*}
\tilde{\phi}^{E}_{k} 
& 
\;=\; 
\tilde\psi_k \,+\, E\sum_{l=1}^{2k-1} \binom{0}{1}^*
\Tt^{0}(2k,l+1)\begin{pmatrix}1\\0\end{pmatrix}
\frac{1}{t_{l+1}}\,\phi^{E}_l 
\\
&
\;=\; 
\tilde\psi_k \,+\, E\sum_{p=1}^{k-1} \binom{0}{1}^*
\mathcal{T}^{0}(2k,2p+1)\begin{pmatrix}1\\0\end{pmatrix}
\frac{1}{t_{2p+1}}\,\phi^{E}_{2p}
\\
&
\;=\;\tilde\psi_k \,+\, E\sum_{p=1}^{k-1} \binom{0}{1}^*
\mathcal{T}^{0}(2k,2p+1)\begin{pmatrix}1\\0\end{pmatrix}
\frac{1}{t_{2p+1}}\,\hat{\phi}^{E}_{p} 
\;. 
\end{align*}
Thus the second part of Lemma~\ref{lem-T0} implies
$$
\tilde{\phi}^{E}_{k} 
\;=\; 
\tilde\psi_k + E\sum_{p=1}^{k-1}\frac{\tilde{\psi}_{k}}{\tilde{\psi}_{p+1}}
\frac{1}{t_{2p+1}}\,\hat{\phi}^{E}_{p}
\;=\;
\tilde\psi_k - E\sum_{p=1}^{k-1}\frac{\tilde{\psi}_{k}}{t_{2p}\tilde{\psi}_{p}}
\,\hat{\phi}^{E}_{p}
\;,
$$
where in the second equality the recurrence relation was used. Taking the ratios gives
\begin{equation}
\label{eq-Bound0}
\frac{\tilde{\phi}^{E}_{k}}{\tilde{\psi}_k}
\;=\;
1-E\sum_{p=1}^{k-1}\frac{\hat{\psi}_{p}}{t_{2p}\tilde{\psi}_{p}}
\,\frac{\hat{\phi}^{E}_{p}}{\hat{\psi}_p}
\;=\;
1-\frac{E}{t_2}
\sum_{p=1}^{k-1}\hat{\psi}_{p}^2
\,\frac{\hat{\phi}^{E}_{p}}{\hat{\psi}_p}
\;,
\end{equation}
where \eqref{eq-2RandomWalks} was used. By Sturm-Liouville oscillation theory described in Section~\ref{sec-Osci}, $\hat{\phi}^{E}_{p}$ and $\hat{\psi}_p$ have the same (alternating) sign for all $p$, because $E$ is less than or equal to the smallest positive eigenvalue $E^N_1$ by assumption. Hence all summands on the r.h.s. are positive and one concludes that
\begin{equation}
\label{eq-Bound1}
\frac{\tilde{\phi}^{E}_{k}}{\tilde{\psi}_k}
\;\leq\;1
\;.
\end{equation}
In order to provide a lower bound, let us consider the case of even $n=2k$ and $m=1$ in \eqref{eq-transferid3} with $\epsilon=0$. Then extracting the lower left corner one finds
\begin{align*}
\hat{\phi}^{E}_{k} 
&
\;=\; E\sum_{l=1}^{2k} \binom{0}{1}^*
\mathcal{T}^{0}(2k+1,l+1)\begin{pmatrix}1\\0\end{pmatrix}
\frac{1}{t_{l+1}}\,\phi^{E}_l 
\\
&
\;=\;
E\sum_{p=1}^{k} \binom{0}{1}^*
\mathcal{T}^{0}(2k+1,2p)\begin{pmatrix}1\\0\end{pmatrix}
\frac{1}{t_{2p}}\,\phi^{E}_{2p-1}
\\
&
\;=\;
E\sum_{p=1}^{k} \binom{0}{1}^*
\mathcal{T}^{0}(2k+1,2p)\begin{pmatrix}1\\0\end{pmatrix}
\frac{1}{t_{2p}}\,\tilde{\phi}^{E}_{p}
\\
&
\;=\;
E\sum_{p=1}^{k} 
\frac{\hat{\psi}_k}{\hat{\psi_p}t_{2p}}
\,\tilde{\phi}^{E}_{p}
\;.
\end{align*}
Hence
\begin{equation}
\label{eq-QuotientIncrease}
\frac{\hat{\phi}^{E}_{k}}{\hat{\psi}_k}
\;=\;
E\sum_{p=1}^{k}\frac{\tilde{\psi}_{p}}{t_{2p}\hat{\psi}_{p}}
\,\frac{\tilde{\phi}^{E}_{p}}{\tilde{\psi}_p}
\;=\;
\frac{E}{t_2}\sum_{p=1}^{k}\tilde{\psi}_{p}^2 \,\frac{\tilde{\phi}^{E}_{p}}{\tilde{\psi}_p}
\;\leq\; 
\frac{E}{t_2}\sum_{p=1}^{k}\tilde{\psi}_{p}^2
\;,
\end{equation}
because of  \eqref{eq-Bound1}. Replacing this into  \eqref{eq-Bound0} gives the lower bound in \eqref{eq-LowestStateBound1}.

\vspace{.1cm}

For the upper bound in \eqref{eq-LowestStateBound2}, let us note that the quotient on the l.h.s. of \eqref{eq-QuotientIncrease} is increasing in $k$ because all summands after the first equality are non-negative. Hence the largest value is taken at $k=N'$. Dividing by this largest value immediately implies the upper bound in \eqref{eq-LowestStateBound2}. For the lower bound, one can mirror the Hamiltonian as well as the states involved via
$$
t_n \,\mapsto\, t_{N+2-n}\;,
\qquad
\phi^E_n\,\mapsto\,\frac{\phi^E_{N+1-n}}{\phi^E_N}
\;,
\qquad
\hat{\psi}_p \,\mapsto\, 
\frac{\tilde{\psi}_{N'-p+1}}{\tilde{\psi}_{N'}}\;, 
\qquad 
\tilde{\psi}_l\,\mapsto\,  \frac{\hat{\psi}_{N'-l+1}}{\hat{\psi}_{N'}} 
\;. 
$$
The normalization factors assure that the reflected states satisfy the left boundary condition (at $n=0$ and $n=1$). Therefore the lower bound in \eqref{eq-LowestStateBound1} implies
$$
\frac{\hat{\psi}_{N'}}{\hat{\psi}_{N'-k}}\,
\frac{\hat{\phi}^{E}_{N'-k}}{\hat{\phi}^{E}_{N'}}
\; \geq \; 
1\,-\,
\left(\frac{E}{t_N}\right)^{\!2} \sum_{p=1}^{k}\left(\frac{\tilde{\psi}_{N'-p+1}}{\tilde{\psi}_{N'}}\right)^2\sum_{l=1}^p\left(\frac{\hat{\psi}_{N'-l+1}}{\hat{\psi}_{N'}}\right)^2 
\;.
$$
Due to the identities $\phi^E_N=\hat{\phi}^E_{2N'}$ and $t_N\tilde{\psi}_{N'}\hat{\psi}_{N'}=t_2$ this is the lower bound in \eqref{eq-LowestStateBound2}.
\hfill $\Box$

\subsection{Control of normalized eigenstate for smallest eigenvalue}
\label{sec-EigenfunctionEstNormalized}

In this short section, Proposition~\ref{prop-EigEst} will be extended to obtain information on the normalized eigenstate of the smallest eigenvalue. Let introduce the notations 
$$
\Phi^{E^N_1}
\;=\;
\frac{1}{\|\phi^{E^N_1}\|}\,\phi^{E^N_1}
\;=\;
\frac{1}{\sqrt{2}}\left(\frac{\tilde{\phi}^{E^N_1}}{\|\tilde{\phi}^{E^N_1}\|},\frac{\hat{\phi}^{E^N_1}}{\|\hat{\phi}^{E^N_1}\|}\right)^T
\;=\;
(\tilde{\Phi}^{E^N_1},\hat{\Phi}^{E^N_1})^T
\;.
$$
Then $\|\tilde{\Phi}^{E^N_1}\|^2=\|\hat{\Phi}^{E^N_1}\|^2=\frac{1}{2}$. Similarly set 
\begin{equation}
\label{eq-PsiNormalized}
\tilde{\Psi}
\;=\;
\frac{1}{\sqrt{2}\|\tilde{\psi}\|}\,\tilde{\psi}
\;,
\qquad
\hat{\Psi}
\;=\;
\frac{1}{\sqrt{2}\|\hat{\psi}\|}\,\hat{\psi}
\;.
\end{equation}
Let us first bound the normalized states using Proposition~\ref{prop-EigEst}, namely for all $k\leq N'$ and $E\leq E^N_1$, one has due to $\|\tilde{\phi}^E\|^2\leq \|\tilde{\psi}\|^2$:
\begin{align}
|\tilde{\Phi}_k^E|
&
\;\geq \;
\frac{|\tilde{\phi}^E_k|}{\sqrt{2}\,\|\tilde{\psi}\|}
\nonumber
\\
&
\;\geq\; 
\frac{|\tilde{\psi}_k|}{\sqrt{2}\,\|\tilde{\psi}\|}\left(1-\left(\frac{E}{t_2}\right)^{\!2}
\;\sum_{p=1}^{k-1}\hat{\psi}_{p}^2\sum_{l=1}^p\tilde{\psi}_{l}^2\right)
\nonumber
\\
&
\;=\;
\tilde{|\Psi}_k|\left(1-\left(\frac{E}{t_2}\right)^{\!2}
\;\sum_{p=1}^{k-1}\hat{\psi}_{p}^2\sum_{l=1}^p\tilde{\psi}_{l}^2\right)
\;.
\label{eq-PhipsiBound0}
\end{align}
Another lower bound on $\tilde{\Phi}_k^E$ by $\tilde{\Psi}_k$ which does not involve correction terms can be obtained when spatial averages are taken:

\begin{proposi}
\label{prop-PhiPsiBound}
One has for any $n\leq N'$ and $E\leq E^1_N$
$$
\sum_{k=1}^n
(\tilde{\Phi}^{E}_k)^2
\;\geq\;
\sum_{k=1}^n
(\tilde{\Psi}_k)^2
\;.
$$
\end{proposi}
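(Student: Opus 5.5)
The plan is to reduce the claim to an elementary weighted-average inequality. The only input needed is the monotonicity of the quotient $k\mapsto \tilde{\phi}^E_k/\tilde{\psi}_k$ from Proposition~\ref{prop-EigEst}. Set $r_k=\tilde{\phi}^E_k/\tilde{\psi}_k$ and $w_k=\tilde{\psi}_k^2>0$ for $k=1,\ldots,N'$.

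First I record positivity. By the oscillation theory of Section~\ref{sec-Osci}, $\tilde{\phi}^E_k$ and $\tilde{\psi}_k$ have the same sign for $0\le E\leq E^N_1$, so $r_k>0$. Proposition~\ref{prop-EigEst} gives that $r_k$ is decreasing in $k$, so $f_k=r_k^2$ is also decreasing. For $E=0$ one has $r\equiv 1$ and the claim is an equality.

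Next I rewrite both sides using the normalization $\tilde{\Phi}^E=\tilde{\phi}^E/(\sqrt{2}\|\tilde{\phi}^E\|)$ and \eqref{eq-PsiNormalized}. This gives
$$
\sum_{k=1}^n(\tilde{\Phi}^E_k)^2=\frac{1}{2}\,\frac{\sum_{k\le n} f_k w_k}{\sum_{k\le N'} f_k w_k}\;,
\qquad
\sum_{k=1}^n(\tilde{\Psi}_k)^2=\frac{1}{2}\,\frac{\sum_{k\le n} w_k}{\sum_{k\le N'} w_k}\;.
$$
Write $W_1=\sum_{k\le n}w_k$, $W_2=\sum_{k>n}w_k$, and let $a=W_1^{-1}\sum_{k\le n}f_kw_k$ and $b=W_2^{-1}\sum_{k>n}f_kw_k$ be the $w$-weighted averages of $f$ on the head and tail. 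The claim then becomes $\frac{aW_1}{aW_1+bW_2}\ge\frac{W_1}{W_1+W_2}$. Cross-multiplying, this is equivalent to $aW_2\ge bW_2$, i.e.\ $a\ge b$ (the case $n=N'$ is trivial).

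The inequality $a\ge b$ follows directly from monotonicity. Since $f$ is decreasing, $a\ge f_n\ge f_{n+1}\ge b$, because an average over indices $\le n$ is at least $f_n$ and an average over indices $>n$ is at most $f_{n+1}$. This is a Chebyshev-type sum inequality. There is no real obstacle: the only points needing care are the strict positivity of $r_k$, which makes the denominators nonzero and lets squaring preserve monotonicity, and the interpretation of $\tilde{\Phi}^E$ for $E<E^N_1$ as the normalized odd part of the formal solution. Notably, no error terms from \eqref{eq-LowestStateBound1} are needed.
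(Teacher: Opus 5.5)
Your proof is correct and is essentially the paper's argument. The paper cross-multiplies the two normalized sums, cancels the antisymmetric block $k,l\leq n$, and is left with $\sum_{k=1}^{n}\sum_{l=n+1}^{N'}\tilde{\psi}_k^2\tilde{\psi}_l^2\big(\tilde{\alpha}_k^2-\tilde{\alpha}_l^2\big)\geq 0$, where $\tilde{\alpha}_k=\tilde{\phi}^E_k/\tilde{\psi}_k$; this is exactly your head-versus-tail comparison multiplied through by $W_1W_2$ (the paper's display has this factor with the opposite sign, a harmless typo), and your explicit note that the quotients are strictly positive, so that squaring preserves the monotonicity from Proposition~\ref{prop-EigEst}, supplies a step the paper leaves implicit.
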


\noindent {\bf Proof.} Let us set
$$
\tilde{\alpha}_k\;=\;\frac{\tilde{\phi}^{E}_{k}}{\tilde{\psi}_k}
\;.
$$
By Proposition~\ref{prop-EigEst}, the function $k\in\{1,\ldots,N'\}\mapsto\tilde{\alpha}_k$ is non-increasing. The claim is equivalent to
$$
\sum_{k=1}^n
(\tilde{\phi}^{E}_k)^2
\sum_{l=1}^{N'}
(\tilde{\psi}_l)^2
\;\geq\;
\sum_{k=1}^n
(\tilde{\psi}_k)^2
\sum_{l=1}^{N'}
(\tilde{\phi}^{E}_l)^2
\;,
$$
or equivalently
$$
\sum_{k=1}^n
\tilde{\alpha}_k^2(\tilde{\psi}_k)^2
\sum_{l=1}^{N'}
(\tilde{\psi}_l)^2
\;\geq\;
\sum_{k=1}^n
(\tilde{\psi}_k)^2
\sum_{l=1}^{N'}
\tilde{\alpha}_l^2(\tilde{\psi}_l)^2
\;.
$$
All terms with $k,l\leq n$ cancel out, so that the statement is equivalent to 
$$
\sum_{k=1}^n
\sum_{l=n+1}^{N'}
(\tilde{\psi}_k)^2
(\tilde{\psi}_l)^2
\big(\tilde{\alpha}_l^2-\tilde{\alpha}_k^2\big)
\;\geq\;
0
\;.
$$
This holds because $l>k$.
\hfill $\Box$

\vspace{.2cm}

For the even part, let us first note that the right bound in \eqref{eq-LowestStateBound2} gives
$$
\|\hat{\phi}^{E}\|
\;\leq\;
\frac{\hat{\phi}^{E}_{N'}}{\hat{\psi}_{N'}}
\,
\|\hat{\psi}^{E}\|
\;,
$$
and the first bound in \eqref{eq-LowestStateBound2} then implies by the same argument leading to \eqref{eq-PhipsiBound0}
\begin{equation}
\label{eq-PhipsiBound}
|\hat{\Phi}^E_{N'-k}|
\;\geq\; 
|\hat{\Psi}_{N'-k}|\left(1\,-\,
\left(\frac{E}{t_2}\right)^{\!2} \sum_{p=1}^{k}\tilde{\psi}_{N'-p+1}^2\sum_{l=1}^p\hat{\psi}_{N'-l+1}^2\right)
\;.
\end{equation}
%

\subsection{Bounds on the smallest energy}
\label{sec-E1Bounds}

\begin{proposi}
\label{prop-MinMax}
The following bounds hold:
\begin{equation}
\label{eq-MinMax}
(E^N_1)^2 
\;\leq\; 
\frac{t_N^2\tilde{\psi}_{N'}^2}{\langle\tilde{\psi}|\tilde{\psi}\rangle}
\;,
\qquad
(E^N_1)^2
\;\leq\;
\frac{t_2^2}{\langle\hat{\psi}|\hat{\psi}\rangle}
\;.
\end{equation}
\end{proposi}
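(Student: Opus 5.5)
The plan is to use the min-max (Rayleigh quotient) characterization of the smallest eigenvalue of the positive Jacobi matrices $(A^N)^*A^N$ and $A^N(A^N)^*$, with the zero energy solutions $\tilde\psi$ and $\hat\psi$ as trial states. By the decomposition \eqref{eq-ChiralA}, the spectrum of $(H^N)^2$ is the union of the spectra of $(A^N)^*A^N$ and $A^N(A^N)^*$. Each of these equals $\{(E^N_n)^2\}$ for the positive eigenvalues $E^N_n$ of $H^N$: for even $N$, $A^N$ is square and invertible, since $0\notin\spec(H^N)$, so both products are positive definite with the same eigenvalues. Hence
$$
(E^N_1)^2\;=\;\min \spec\big((A^N)^*A^N\big)\;=\;\min\spec\big(A^N(A^N)^*\big)\,.
$$

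For the first bound, insert $\tilde\psi$ into the Rayleigh quotient of $(A^N)^*A^N$:
$$
(E^N_1)^2\;\leq\;\frac{\langle\tilde\psi|(A^N)^*A^N|\tilde\psi\rangle}{\langle\tilde\psi|\tilde\psi\rangle}\;=\;\frac{\|A^N\tilde\psi\|^2}{\langle\tilde\psi|\tilde\psi\rangle}\,.
$$
By \eqref{eq-Apsi}, the vector $A^N\tilde\psi$ has a single nonzero entry, $t_N\tilde\psi_{N'}$. So the numerator is $t_N^2\tilde\psi_{N'}^2$, which gives the first inequality in \eqref{eq-MinMax}.

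For the second bound, insert $\hat\psi$ into the Rayleigh quotient of $A^N(A^N)^*$. The numerator is $\|(A^N)^*\hat\psi\|^2$, and by \eqref{eq-Astarpsi} this equals $t_2^2$, using $\hat\psi_1=1$. This gives the second inequality.

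No step is a real obstacle. The only point to verify is that both products share the lowest eigenvalue $(E^N_1)^2$, which follows from $A^N$ being square and invertible. Alternatively, one can note directly that $\tilde\phi^{E^N_1}$ and $\hat\phi^{E^N_1}$ are eigenvectors of the two products for the eigenvalue $(E^N_1)^2$, and that no smaller eigenvalue occurs because every eigenvalue of either product is the square of an eigenvalue of $H^N$.
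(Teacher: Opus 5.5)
Your proposal is correct and is essentially the paper's proof: the min-max principle for $(A^N)^*A^N$ and $A^N(A^N)^*$ with trial states $\tilde\psi$ and $\hat\psi$, with the numerators evaluated via \eqref{eq-Apsi} and \eqref{eq-Astarpsi}. You also justify that both products have lowest eigenvalue $(E^N_1)^2$, because $A^N$ is square and bidiagonal with positive diagonal and hence invertible; the paper leaves this point implicit.
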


\noindent {\bf Proof.} The min-max principle implies
$$
(E^N_1)^2 
\;\leq\; 
\frac{\langle \tilde{\psi}|(A^N)^*A^N|\tilde{\psi}\rangle}{\langle\tilde{\psi}|\tilde{\psi}\rangle}
\;=\;
\frac{t_N^2\tilde{\psi}_{N'}^2}{\langle\tilde{\psi}|\tilde{\psi}\rangle}\; ,
\qquad
(E^N_1)^2 
\;\leq\; 
\frac{\langle \hat{\psi}|A^N(A^N)^*|\hat{\psi}\rangle}{\langle\hat{\psi}|\hat{\psi}\rangle}
\;=\;
\frac{t_2^2}{\langle\hat{\psi}|\hat{\psi}\rangle}\; ,
$$
where the equalities follow from (\ref{eq-Apsi}) and (\ref{eq-Astarpsi}).
\hfill $\Box$

\vspace{.2cm}

Let us note that, while these bounds look somewhat different, they are essentially the same because the first one results from the second one when one starts from the right side of the sample with boundary conditions $\tilde{\psi}_{N'}=1$. This does not mean, however, that both bounds are equally tight. In fact, it will be discussed in Remark~\ref{rem-OBC1} following up on Proposition~\ref{prop-EnergyBoundOBC} which of the bounds is more effective for a given realization. Both of the upper bounds on $E^N_1$ given in Proposition~\ref{prop-MinMax} can be combined with the lower bounds in \eqref{eq-PhipsiBound0} and \eqref{eq-PhipsiBound} (here the focus is on the second bound in \eqref{eq-MinMax}; moreover, combining with \eqref{eq-LowestStateBound1} and \eqref{eq-LowestStateBound2} is possible as well, but not spelled out). This immediately implies the following:

\begin{coro}
\label{coro-QuotientRandomWalkBound}
One has for $E\leq E^N_1$
$$
\frac{\tilde{\Phi}^{E}_{k}}{\tilde{\Psi}_k}
\;\geq \;
1\,-\,
\frac{1}{\langle\hat{\psi}|\hat{\psi}\rangle}
\;\sum_{p=1}^{k-1}\hat{\psi}_{p}^2\sum_{l=1}^p\tilde{\psi}_{l}^2
\;.
$$
as well as
$$
\frac{\hat{\Phi}^E_{N'-k}}{\hat{\Psi}_{N'-k}}
\;\geq\; 
1\,-\,
\frac{1}{\langle\hat{\psi}|\hat{\psi}\rangle}
\sum_{p=1}^{k}\tilde{\psi}_{N'-p+1}^2\sum_{l=1}^p\hat{\psi}_{N'-l+1}^2
\;.
$$
\end{coro}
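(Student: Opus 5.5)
The plan is to substitute the second upper bound of Proposition~\ref{prop-MinMax}, namely $(E/t_2)^2\le (E^N_1/t_2)^2\le 1/\langle\hat{\psi}|\hat{\psi}\rangle$, directly into the lower bounds \eqref{eq-PhipsiBound0} and \eqref{eq-PhipsiBound}. Since $0<E\le E^N_1$, the prefactor $(E/t_2)^2$ in both estimates is bounded by $1/\langle\hat{\psi}|\hat{\psi}\rangle$. The double sums multiplying it are sums of squares, hence non-negative. Replacing the prefactor by the larger quantity therefore only decreases the bracketed expressions $1-(\cdots)$, and the resulting bounds remain valid lower bounds.

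For the first claim, I would start from \eqref{eq-PhipsiBound0}, which gives $|\tilde{\Phi}^E_k|\ge|\tilde{\Psi}_k|\,(1-(E/t_2)^2 S_k)$ with $S_k=\sum_{p=1}^{k-1}\hat{\psi}_p^2\sum_{l=1}^p\tilde{\psi}_l^2$. Then I would pass from absolute values to the quotient $\tilde{\Phi}^E_k/\tilde{\Psi}_k$. This needs the sign information from Section~\ref{sec-Osci}: for $0\le E\le E^N_1$ one has $\sgn(\tilde{\phi}^E_k)=\sgn(\tilde{\psi}_k)$. Both normalizations are by positive constants, so the quotient is non-negative and equals $|\tilde{\Phi}^E_k|/|\tilde{\Psi}_k|$. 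Dividing by $|\tilde{\Psi}_k|\neq 0$ and inserting the energy bound gives the first inequality.

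The second claim goes the same way from \eqref{eq-PhipsiBound}. The even components $\hat{\phi}^E_{N'-k}$ and $\hat{\psi}_{N'-k}$ share the sign $(-1)^{N'-k-1}$, again by Section~\ref{sec-Osci}, so the quotient equals the ratio of absolute values. Applying the same bound on $(E/t_2)^2$ to the non-negative double sum yields the stated inequality.

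I expect the only delicate point to be the bookkeeping in the derivation of \eqref{eq-PhipsiBound} from \eqref{eq-LowestStateBound2}. Two things must be checked there. First, after the reflection the prefactor really is $(E/t_2)^2$, which uses $t_N\tilde{\psi}_{N'}\hat{\psi}_{N'}=t_2$. Second, the normalization step $\|\hat{\phi}^E\|\le(\hat{\phi}^E_{N'}/\hat{\psi}_{N'})\|\hat{\psi}\|$ has to be used in the right direction. Once both are confirmed, the corollary is an immediate substitution.
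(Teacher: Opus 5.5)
Your proposal is correct and is essentially the paper's own argument: the paper obtains the corollary by inserting the second bound $(E^N_1)^2\leq t_2^2/\langle\hat{\psi}|\hat{\psi}\rangle$ of Proposition~\ref{prop-MinMax} into \eqref{eq-PhipsiBound0} and \eqref{eq-PhipsiBound}, and the non-negativity of the double sums justifies this. Your use of the sign agreement from Section~\ref{sec-Osci} to pass from absolute values to the quotients is a step the paper leaves implicit, and your checks of the prefactor $t_N\tilde{\psi}_{N'}\hat{\psi}_{N'}=t_2$ and of the direction of the normalization inequality agree with the paper's derivation.
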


The next aim is to provide a (still deterministic) lower bound on the smallest positive eigenvalue $E^N_1$ of a chiral Jacobi matrix $H^N$. 

\begin{proposi}
\label{prop-LowerE1}
One has
\begin{equation}
\label{eq-LowerE1}
E^N_1
\;\geq\; 
\frac{a}{2\,N}
\,
\min_{1\leq l\leq n\leq N'}
\Big\{\Big|\frac{\hat{\psi}_l}{\hat{\psi}_n}\Big|,\Big|\frac{\tilde{\psi}_l}{\tilde{\psi}_n}\Big|\Big\}
\;.
\end{equation}
\end{proposi}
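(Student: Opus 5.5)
The plan is to bound $E^N_1$ from below through the norm of the inverse. Since $N=2N'$ is even, $0\notin\spec(H^N)$ (Section~\ref{sec-ZeroEnergy}), so $H^N$ is invertible. By the symmetry of the spectrum, $E^N_1=\|(H^N)^{-1}\|^{-1}$. Using the chiral block form \eqref{eq-ChiralA}, the inverse is off-diagonal:
$$
(H^N)^{-1}\;=\;\begin{pmatrix} 0 & (A^N)^{-1} \\ ((A^N)^*)^{-1} & 0\end{pmatrix}
\;,
\qquad\text{hence}\qquad
\|(H^N)^{-1}\|\;=\;\|(A^N)^{-1}\|
\;.
$$
The task therefore reduces to an upper bound on $\|(A^N)^{-1}\|$, where $A^N$ is the upper bidiagonal matrix \eqref{eq-AevenIntro}, with diagonal $t_2,t_4,\ldots,t_N$ and superdiagonal $t_3,t_5,\ldots,t_{N-1}$.

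Next I would compute $(A^N)^{-1}$ explicitly by back-substitution. To get the $k$-th column, solve $A^Nx=e_k$. This gives $x_p=0$ for $p>k$ and $x_k=1/t_{2k}$. For $p<k$ the relation $t_{2p}x_p+t_{2p+1}x_{p+1}=0$ then yields
$$
|x_p|\;=\;\frac{1}{t_{2k}}\prod_{j=p}^{k-1}\frac{t_{2j+1}}{t_{2j}}\;=\;\frac{1}{t_{2p}}\prod_{j=p}^{k-1}\frac{t_{2j+1}}{t_{2j+2}}\;=\;\frac{1}{t_{2p}}\,\Big|\frac{\hat{\psi}_k}{\hat{\psi}_p}\Big|
\;,
$$
where the last step uses the explicit product formula for $\hat{\psi}$ from Section~\ref{sec-ZeroEnergy}. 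Since all $t_n\geq a$, every entry obeys $|((A^N)^{-1})_{pk}|\leq a^{-1}|\hat\psi_k/\hat\psi_p|$ for $p\leq k$, and vanishes otherwise. Alternatively, relation \eqref{eq-2RandomWalks}, $\tilde\psi_n\hat\psi_n=t_2/t_{2n}$, rewrites this ratio as $|\tilde\psi_p/\tilde\psi_k|$ up to factors $t_{2k}/t_{2p}\in[a/b,b/a]$. Either way, only ratios $|\psi_n/\psi_l|$ with $l\leq n$ occur, which is exactly the type of quantity in \eqref{eq-LowerE1}. Including both the tilde and hat ratios in the minimum only weakens the bound, so one of them suffices.

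Finally, I would bound the operator norm crudely by the Hilbert--Schmidt norm, or simply by $N'$ times the maximal entry:
$$
\|(A^N)^{-1}\|\;\leq\; N'\max_{p\leq k}|((A^N)^{-1})_{pk}|\;\leq\;\frac{N'}{a}\,\max_{1\leq l\leq n\leq N'}\Big|\frac{\hat\psi_n}{\hat\psi_l}\Big|
\;.
$$
Inverting, and noting that the reciprocal of the maximum of $|\hat\psi_n/\hat\psi_l|$ is the minimum of $|\hat\psi_l/\hat\psi_n|$, gives $E^N_1\geq \frac{a}{N'}\min_{l\leq n}|\hat\psi_l/\hat\psi_n|$. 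Since $N'\leq 2N$, this is at least the stated bound \eqref{eq-LowerE1}. The only step requiring care is the index bookkeeping in the back-substitution: one has to match the telescoping product with $\hat\psi_k/\hat\psi_p$ and check that the orientation $l\leq n$ comes out as in the statement. The rest is routine.
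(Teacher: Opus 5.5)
Your proof is correct, but it follows a different route from the paper's.

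\textbf{The paper's route.} It applies the variation-of-constants Lemma~\ref{lem-LastSimon} to the adjacent pair $-E^N_1<E^N_1$ with $\epsilon=0$. This gives $2E^N_1\geq\|K^0_N\|^{-1}$, and $\|K^0_N\|$ is then bounded by the Schur test using the zero-energy transfer matrices of Lemma~\ref{lem-T0}.

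\textbf{Your route.} You use the chiral block form to get the exact identity $E^N_1=\|(H^N)^{-1}\|^{-1}=\|(A^N)^{-1}\|^{-1}$, and then invert the bidiagonal $A^N$ by hand. Your entries $t_{2p}^{-1}|\hat\psi_k/\hat\psi_p|$, $p\leq k$, are, in modulus, exactly the odd-to-even block of $K^0_N$. The $\tilde\psi$-ratios in the stated bound come from the even-to-odd block of $K^0_N$. Since $K^0_N$ only encodes the left boundary condition, that block differs from the corresponding block $(A^N)^{-1}$ of $(H^N)^{-1}$.

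\textbf{What each route buys.} Yours is more elementary and also strictly sharper. You get $E^N_1\geq\frac{a}{N'}\min_{1\leq l\leq n\leq N'}|\hat\psi_l/\hat\psi_n|$, with a constant four times better and only one family of ratios. For example, for a monotonically increasing walk $\tilde w$ this gives $E^N_1\geq\frac{a^2}{bN'}$, consistent with the $1/N$ behavior in Remark~\ref{rem-OBC1}. The stated bound is then only of order $\frac{1}{N}e^{-\tilde W_+}$. The paper's route buys uniformity. Lemma~\ref{lem-LastSimon} applies to any Jacobi matrix, any pair of adjacent eigenvalues, and any $\epsilon$ between them. It is reused with $\epsilon=E^N_1$ in Proposition~\ref{prop-LowerE2-1} to bound the gap $E^N_2-E^N_1$, where nothing as explicit as your bidiagonal inverse is available.

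\textbf{A small slip in your aside.} Via \eqref{eq-2RandomWalks} one has exactly $|((A^N)^{-1})_{pk}|=t_{2k}^{-1}|\tilde\psi_p/\tilde\psi_k|$ for $p\leq k$. These are ratios $|\tilde\psi_l/\tilde\psi_n|$ with $l\leq n$, the reverse orientation. So the $\tilde\psi$ form would produce $\min_{l\leq n}|\tilde\psi_n/\tilde\psi_l|$, not the $\tilde\psi$-term of the statement, and your claim that ``either way'' the same type of ratio occurs is not right. This does not affect the proof, because your final estimate uses only the $\hat\psi$ form.
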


The proof of this statement is based on a variation of constants argument contained in the work of Last and 
Simon \cite{LaSi}. It holds for arbitrary Jacobi matrices, namely does not require the chiral symmetry.

\begin{lemma}[based on a modification of the proof of Theorem~2.3 in \cite{LaSi}] 
\label{lem-LastSimon}
For $\epsilon\in\RM$, introduce a matrix $K^\epsilon_N\in\CM^{N\times N}$ by setting
\begin{equation}
\langle n|K^\epsilon_N|l\rangle
\;=\;
\delta_{l\leq n}\;
\frac{1}{t_{l+1}}\binom{0}{1}^*
{\cal T}^{\epsilon}(n+1,l+1)\,
\binom{1}{0}
\;.
\label{eq-KOpDef2}
\end{equation}
Let $E_n^N$ and $E_{n+1}^N$ be two adjacent eigenvalues of a finite Jacobi matrix $H^N$. Then for any $\epsilon\in [E_n^N,E_{n+1}^N]$,
\begin{equation}
|E^N_{n+1}-E^N_n|
\;\geq\;
\frac{1}{\|K^\epsilon_N\|}
\;,
\label{eq-LowerE1}
\end{equation}
where $\|K^\epsilon_N\|$ denotes the operator norm of $K^\epsilon_N$.
\end{lemma}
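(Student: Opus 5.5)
The plan is to use the variation of parameters formula \eqref{TM-Sol} as an exact linear identity on $\CM^N$, and then exploit the orthogonality of eigenvectors belonging to distinct eigenvalues. First, compare \eqref{TM-Sol} with the definition \eqref{eq-KOpDef2}. For every $E\in\RM$ the formal Dirichlet solution $\phi^E=(\phi^E_n)_{n=1,\ldots,N}$, normalized by $\phi^E_1=1$, satisfies
$$
\phi^E\;=\;\phi^\epsilon\,+\,(E-\epsilon)\,K^\epsilon_N\,\phi^E
\;,
$$
as vectors in $\CM^N$. The summation range $l\leq n$ and the factor $\frac{1}{t_{l+1}}$ are exactly those in \eqref{TM-Sol}. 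The diagonal term $l=n$ vanishes because ${\cal T}^\epsilon(n+1,n+1)=\one$, so it does no harm. The key point is that $\phi^\epsilon$ does not depend on $E$, since all these solutions share the same left initial condition.

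Next, apply this identity with $E=E^N_n$ and with $E=E^N_{n+1}$. In both cases $\phi^E$, restricted to $\{1,\ldots,N\}$, is a genuine (real, unnormalized) eigenvector of $H^N$, because the right boundary condition holds. Subtracting the two identities eliminates $\phi^\epsilon$. Set $a=E^N_{n+1}-\epsilon\geq0$ and $b=\epsilon-E^N_n\geq 0$, which is possible since $\epsilon\in[E^N_n,E^N_{n+1}]$. The difference then reads
$$
w\;:=\;\phi^{E^N_{n+1}}-\phi^{E^N_n}\;=\;K^\epsilon_N\big(a\,\phi^{E^N_{n+1}}+b\,\phi^{E^N_n}\big)
\;.
$$

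Finally, estimate norms using that $\phi^{E^N_n}\perp\phi^{E^N_{n+1}}$, as eigenvectors of a selfadjoint matrix for distinct eigenvalues. Orthogonality gives $\|w\|^2=\|\phi^{E^N_{n+1}}\|^2+\|\phi^{E^N_n}\|^2$, which is nonzero. It also gives
$$
\|a\,\phi^{E^N_{n+1}}+b\,\phi^{E^N_n}\|^2=a^2\|\phi^{E^N_{n+1}}\|^2+b^2\|\phi^{E^N_n}\|^2\leq (a+b)^2\|w\|^2
\;.
$$
Hence $\|w\|\leq\|K^\epsilon_N\|\,(a+b)\,\|w\|$. Dividing by $\|w\|>0$ yields $E^N_{n+1}-E^N_n=a+b\geq \|K^\epsilon_N\|^{-1}$, which is the claim.

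The only step requiring care is the first one: checking that the index conventions of \eqref{TM-Sol} match \eqref{eq-KOpDef2} exactly, so that $\phi^\epsilon$ cancels cleanly. The remaining steps are elementary.
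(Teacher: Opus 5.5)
Your proof is correct to the same extent as the paper's. Both rest on the identity $\phi^E=\phi^\epsilon+(E-\epsilon)K^\epsilon_N\phi^E$ and on orthogonality of the two eigenvectors; they differ only in how orthogonality is used.

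The paper pairs both instances of the identity with $\phi^{E^N_n}$ and subtracts, so that $\langle\phi^{E^N_n}|\phi^\epsilon\rangle$ cancels and $\langle\phi^{E^N_n}|\phi^{E^N_{n+1}}\rangle=0$. It then assumes $\|\phi^{E^N_n}\|\geq\|\phi^{E^N_{n+1}}\|$ and applies Cauchy--Schwarz. You instead subtract the vector identities and apply Pythagoras to both sides of $w=K^\epsilon_N(a\phi^{E^N_{n+1}}+b\phi^{E^N_n})$ (your $a,b$, not the hopping bounds).

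Your version is symmetric, so it needs no norm ordering. The paper justifies that ordering by the eigenvectors being unnormalized, but rescaling one of them would destroy the common $\phi^\epsilon$; the real justification is the freedom to pair with the larger one. Your version is also sharper. Since $a^2\|\phi^{E^N_{n+1}}\|^2+b^2\|\phi^{E^N_n}\|^2\leq\max\{a,b\}^2\|w\|^2$, you actually get $\max\{E^N_{n+1}-\epsilon,\,\epsilon-E^N_n\}\geq\|K^\epsilon_N\|^{-1}$. For $\epsilon=0$ in Proposition~\ref{prop-LowerE1} this gives $E^N_1\geq\|K^0_N\|^{-1}$ rather than $2E^N_1\geq\|K^0_N\|^{-1}$.

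On the step you singled out: \eqref{TM-Sol} and \eqref{eq-KOpDef2} do match each other, but that is not the check that matters. \eqref{TM-Sol} itself does not follow from the transfer matrices as printed, which act on $\binom{t_{n+1}\phi_{n+1}}{\phi_n}$. Telescoping gives $\mathcal{T}^E(n+1,1)-\mathcal{T}^\epsilon(n+1,1)=(E-\epsilon)\sum_{l=1}^n\mathcal{T}^\epsilon(n+1,l+1)\frac{1}{t_l}P\,\mathcal{T}^E(l,1)$ with $P=\left(\begin{smallmatrix}1&0\\0&0\end{smallmatrix}\right)$. Moreover $\frac{1}{t_l}P\,\mathcal{T}^E(l,1)\binom{1}{0}=\phi^E_l\binom{1}{0}$. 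So the kernel that genuinely satisfies the identity is $\binom{0}{1}^*\mathcal{T}^\epsilon(n+1,l+1)\binom{1}{0}$, without the factor $\frac{1}{t_{l+1}}$.

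For $N=2$ one has $\phi^E_2=E/t_2$, whereas the second component of $\phi^\epsilon+(E-\epsilon)K^\epsilon_2\phi^E$ equals $\epsilon/t_2+(E-\epsilon)/t_2^2$. Correspondingly $\|K^\epsilon_2\|=t_2^{-2}$, while the eigenvalues are $\pm t_2$. So the inequality as literally written fails for $t_2>2$.

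The factor $\frac{1}{t_{l+1}}$ is correct if the transfer matrices act on $\binom{\phi_{n+1}}{\phi_n}$, which is also the convention in which Lemma~\ref{lem-T0} holds as stated. Since the paper's proof uses the identical identity, this slip does not distinguish your route from the paper's. Your argument goes through verbatim for the correct kernel. With the printed matrices, it yields the bound with $\|K^\epsilon_N\|$ replaced by $\max_l t_{l+1}\,\|K^\epsilon_N\|$.
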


\noindent {\bf Proof:} Set $E_1=E^N_n$ and $E_2=E^N_{n+1}$ and let two corresponding eigenvectors of $H^N$ be  $\phi^1=\phi^{E^N_n}$ and $\phi^2=\phi^{E^N_{n+1}}$, both calculated with the recurrence relations at $E_1$ and $E_2$ respectively, both with Dirichlet boundary conditions as in \eqref{eq-DirichletSolution}. The right boundary condition then gives $\phi^1(N+1)=0=\phi^2(N+1)$. These states are orthogonal in $\CM^N$, namely
$$
\langle \phi^1|\phi^2\rangle_{N}\;=\;0
\;.
$$
Using the notation \eqref{eq-KOpDef2}, the identity \eqref{TM-Sol} becomes
\begin{equation}
\phi^E
\;=\;
\phi^\epsilon
\,+\,
(E-\epsilon)\,K^\epsilon_N\phi^E
\;.
\label{eq-PertEps3}
\end{equation}
Now consider \eqref{eq-PertEps3} for $E=E_1$ and then take the scalar product with $\phi^1$. One obtains
$$
\|\phi^1\|^2
\;=\;
\langle \phi^1|\phi^\epsilon\rangle_{N}
\,+\,
(E_1-\epsilon)\,
\langle \phi^1|K^\epsilon_N|\phi^1\rangle_{N}
\;.
$$
Similarly, taking \eqref{eq-PertEps3} for $E=E_2$ and again the scalar product with $\phi^1$,
$$
\langle \phi^1|\phi^2\rangle_{N}
\;=\;
\langle \phi^1|\phi^\epsilon\rangle_{N}
\,+\,
(E_2-\epsilon)\,
\langle \phi^1|K^\epsilon_N|\phi^2\rangle_{N}
\;,
$$
of which the l.h.s. vanishes. The difference between these equations is
\begin{equation}
\label{eq-Phi1Idbis}
\|\phi^1\|^2
\;=\;
(E_1-\epsilon)\,
\langle \phi^1|K^\epsilon_N|\phi^1\rangle_{N}
\,-\,
(E_2-\epsilon)\,
\langle \phi^1|K^\epsilon_N|\phi^2\rangle_{N}
\;.
\end{equation}
Since the eigenfunctions generated in this way are not normalized, one may assume, without loss of generality:
$$
\|\phi^1\|\;\geq\;\|\phi^2\|
\;.
$$
Then
$$
\|\phi^1\|^2
\;\leq\;
|E_1-\epsilon|\,\|\phi^1\|^2\,\|K^\epsilon_N\|
\,+\,
|E_2-\epsilon|\,\|\phi^1\|\,\|\phi^2\|\,\|K^\epsilon_N\|
\;\leq\;
(|E_1-\epsilon|+|E_2-\epsilon|)\,\|\phi^1\|^2\,\|K^\epsilon_N\|
\;,
$$
which directly implies the statement.
\hfill $\Box$

\vspace{.2cm}

\noindent {\bf Proof} of Proposition~\ref{prop-LowerE1}. Let us apply Lemma~\ref{lem-LastSimon} to $E_{n}^N=-E^N_1$ and $E_{n+1}^N=E^N_1$. Choosing $\epsilon=0\in[-E^N_1,E^N_1]$ one deduces
\begin{equation}
\label{eq-E1K}
2E^N_1
\;\geq\; 
\frac{1}{\|K^0_N\|} 
\;.
\end{equation}
Hence one needs to estimate the operator norm of  $K^0_N$ which can be conveniently bounded by the general bound on a matrix $K\in\CM^{N\times N}$ (following from the Schur test)
\begin{equation}
\label{eq-Schur}
\|K\|
\;\leq\;
\max\Big\{
\max_{n=1,\ldots,N} \sum_{l=1}^N |\langle n|K|l\rangle|
\,,\,
\max_{l=1,\ldots,N} \sum_{n=1}^N|\langle n|K|l\rangle|
\Big\}
\;.
\end{equation}
The matrix elements of $K^0_N$ are given by Lemma~\ref{lem-T0}, hence leading to
\begin{equation}
\label{eq-K0Bound}
\|K^0_N\|
\;\leq\;
\frac{N}{a}\max_{1\leq l\leq n\leq N'}
\Big\{\Big|\frac{\hat{\psi}_n}{\hat{\psi}_l}\Big|,\Big|\frac{\tilde{\psi}_n}{\tilde{\psi}_l}\Big|\Big\}
\;.
\end{equation}
This implies the claim.
\hfill $\Box$

\begin{rem}
{\rm
In the work \cite{LaSi}, the matrix elements of $K^\epsilon_N$ are simply bounded by
$$
|\langle n| K^\epsilon_N|l\rangle|
\;\leq\;
\frac{1}{a}\Big(|\epsilon|\frac{1}{a}\,+\,\frac{1}{a}\,+\, b\Big)^{n-l}
\;,
$$
where $t_n\in [a,b]$ with $a>0$. Then $\epsilon$ is simply bounded by an a priori bound on the spectrum of $H^N$ and this allows to show that there is some constants $\gamma>1$ and $C$ such that $|E^N_{n+1}-E^N_n|\geq C\gamma^{-N}$. For a random Jacobi matrix with Poisson statistics, this uniform bound is optimal. In the present setting of a (random) chiral Jacobi matrix, it will be shown that for typical configuration the bound can be considerably improved for the low lying eigenvalues. Let us also note that for a periodic system, the transfer matrices for energies inside the spectrum are conjugate to a rotation which, based on \eqref{eq-Schur}, allows to show $\|K^\epsilon_N\|\leq CN$, in turn leading to $|E^N_{n+1}-E^N_n|\geq \frac{1}{CN}$. 
}
\hfill $\diamond$
\end{rem}

\subsection{Bound on the gap above the smallest energy}
\label{sec-E1GapBounds}

This section proves a deterministic lower bound on the difference $E_2^N-E^N_1$ of the two smallest positive eigenvalues of a chiral Jacobi matrix $H^N$.

\begin{proposi}
\label{prop-LowerE2-1}
One has
\begin{equation}
\label{eq-LowerE2-1}
E_2^N-E^N_1
\;\geq\; 
\frac{a^2}{2bN^2} \Big(
\min_{1\leq l\leq n\leq N'}
\Big\{\Big|\frac{\hat{\psi}_l}{\hat{\psi}_n}\Big|,\Big|\frac{\tilde{\psi}_l}{\tilde{\psi}_n}\Big|\Big\}
\Big)^2
\;.
\end{equation}
\end{proposi}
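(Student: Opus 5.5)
The plan is to reduce the gap $E^N_2-E^N_1$ of $H^N$ to a gap of the positive Jacobi matrix $(A^N)^*A^N$, and then run the Last--Simon argument of Lemma~\ref{lem-LastSimon} for that matrix at the energy $(E^N_1)^2$.

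\emph{Reduction.} Since $(E^N_1)^2<(E^N_2)^2$ are the two lowest eigenvalues of $(A^N)^*A^N$, and $E^N_1+E^N_2\leq 2\|H^N\|\leq 4b$, one has
$$
E^N_2-E^N_1\;=\;\frac{(E^N_2)^2-(E^N_1)^2}{E^N_1+E^N_2}\;\geq\;\frac{(E^N_2)^2-(E^N_1)^2}{4b}\;.
$$
It therefore suffices to bound the gap of $(A^N)^*A^N$ from below by $\frac{2a^2}{N^2}\,m^2$, where $m$ is the minimum in \eqref{eq-LowerE2-1}. The constants may need some adjustment, but the structure is the point.

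\emph{Last--Simon for $(A^N)^*A^N$.} Lemma~\ref{lem-LastSimon} holds for any finite Jacobi matrix. I would apply it to $(A^N)^*A^N$, whose off-diagonal entries are $c_k=t_{2k}t_{2k+1}\geq a^2$, with $\epsilon=(E^N_1)^2$. This gives a gap of at least $1/\|K^\epsilon\|$, where $K^\epsilon$ is the Volterra kernel of $(A^N)^*A^N$ at energy $\epsilon$.

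\emph{Explicit kernel.} At $\epsilon=(E^N_1)^2$, the Dirichlet solution $u=\tilde\phi^{E^N_1}$ is the ground state of $(A^N)^*A^N$. By Section~\ref{sec-Osci}, after removing the alternating sign $(-1)^{k-1}$, it is strictly positive. Reduction of order (constancy of the Wronskian) then gives a second solution of the form $u_n\sum_k (c_k u_ku_{k+1})^{-1}$. Consequently, up to signs, the kernel is
$$
\langle n|K^\epsilon|l\rangle\;=\;u_nu_l\sum_{k=l}^{n-1}\frac{1}{c_k\,u_ku_{k+1}}\;,\qquad l\leq n\;.
$$
Each summand factors as $\frac{1}{c_k}\,\frac{u_n}{u_{k+1}}\,\frac{u_l}{u_k}$. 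By Proposition~\ref{prop-EigEst}, $\tilde\alpha_k=\tilde\phi^{E}_k/\tilde\psi_k$ is non-increasing and lies in $(0,1]$. Hence for $n\geq k+1$,
$$
\Big|\frac{u_n}{u_{k+1}}\Big|\;\leq\;\Big|\frac{\tilde\psi_n}{\tilde\psi_{k+1}}\Big|\;\leq\;m^{-1}\;.
$$

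\emph{Main obstacle.} The factor $u_l/u_k$ with $l\leq k$ runs in the wrong direction, so monotonicity of $\tilde\alpha$ does not bound it. My first attempt would use the chiral pairing
$$
A^N\tilde\phi^{E}=E\,\hat\phi^{E}\;,
$$
which expresses $\tilde\phi^{E}$ through $\hat\phi^{E}$ via the recurrence. I would then use that $\hat\phi^{E}_k/\hat\psi_k$ is increasing by \eqref{eq-QuotientIncrease}, together with \eqref{eq-2RandomWalks}, $\tilde\psi_k\hat\psi_k=t_2/t_{2k}$. This converts a backward ratio of $\tilde\phi$ into a forward ratio of $\hat\phi$, bounded by $|\hat\psi_k/\hat\psi_l|\leq m^{-1}$ times a factor of order $b/a$. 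If that works, each matrix element is at most of order $N m^{-2}/a^2$.

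\emph{Conclusion.} Applying the Schur test \eqref{eq-Schur} adds another factor of $N$, so $\|K^\epsilon\|\lesssim N^2m^{-2}/a^2$. Combined with the reduction step, this yields \eqref{eq-LowerE2-1}.

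If the backward ratio cannot be controlled this way, the fallback is to work directly with $H^N$. I would take $\epsilon=E^N_1$ in Lemma~\ref{lem-LastSimon} and expand $K^\epsilon_N$ around $K^0_N$ via \eqref{eq-transferid3}. The quadratic term $\epsilon\,K^0_NK^\epsilon_N$ would then produce the squared minimum, with \eqref{eq-K0Bound} controlling $K^0_N$.
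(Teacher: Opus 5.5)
Your reduction to $(A^N)^*A^N$, the reduction-of-order formula for the kernel, and the forward bound $|u_n/u_{k+1}|\le m^{-1}$ (with $m$ the minimum in \eqref{eq-LowerE2-1}) are all fine. The step you flag as the main obstacle, however, is a genuine gap: the backward ratio $|u_l/u_k|$, $l\le k$, is \emph{not} bounded by a constant times $m^{-1}$.

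To see this, normalize $u_1=1$. The last row of $A^N\tilde\phi^{E}=E\,\hat\phi^{E}$ gives $t_N u_{N'}=E^N_1\hat\phi^{E^N_1}_{N'}$. Combining this with \eqref{eq-QuotientIncrease} at $k=N'$ (using $\tilde\alpha\le 1$) and with Proposition~\ref{prop-MinMax} yields
\[
\Big|\frac{u_1}{u_{N'}}\Big|
\;\geq\;
\frac{t_2\,t_N}{(E^N_1)^2\,\|\tilde\psi\|^2\,|\hat\psi_{N'}|}
\;\geq\;
\frac{t_N\,\|\hat\psi\|^2}{t_2\,\|\tilde\psi\|^2\,|\hat\psi_{N'}|}
\;.
\]
Take a walk with $\tilde W_+=0$, $\tilde W_-=-d$ and $\tilde D\geq -1$. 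Then the right-hand side is at least of order $e^{2d}/N$, whereas $m^{-1}\le\frac{b}{a}e^{d}$. So the backward ratio can be of order $m^{-2}$. The reason is that beyond the minimum of $\tilde w$ the ground state stops following $e^{\tilde w}$, since it must meet the right boundary condition, and $\tilde\alpha_k$ becomes exponentially small.

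The chiral pairing cannot repair this. Writing $\hat\alpha_k=\hat\phi^{E}_k/\hat\psi_k$, one has $E|\tilde\phi^E_k|=t_{2k}|\hat\phi^E_k|-t_{2k-1}|\hat\phi^E_{k-1}|=t_{2k}|\hat\psi_k|(\hat\alpha_k-\hat\alpha_{k-1})$. This is a cancelling difference, and by \eqref{eq-QuotientIncrease} it equals $E|\tilde\psi_k|\tilde\alpha_k$, so you are back where you started. If you bound the two factors of each kernel entry separately by their worst cases, you get at best $m^{-1}\cdot m^{-2}$ per entry. That only gives $(E^N_2)^2-(E^N_1)^2\gtrsim m^3/N^3$.

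Your reduction is also lossy. Because it replaces $E^N_1+E^N_2$ by $4b$, the target $(E^N_2)^2-(E^N_1)^2\gtrsim m^2/N^2$ is strictly stronger than \eqref{eq-LowerE2-1}. The paper's results only yield $(E^N_2)^2-(E^N_1)^2\geq 2E^N_1(E^N_2-E^N_1)\gtrsim m^3/N^3$.

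Your fallback is the paper's setting: Lemma~\ref{lem-LastSimon} applied to $H^N$ with $\epsilon=E^N_1$. As stated, though, it is circular. The Volterra identity $K^\epsilon_N=K^0_N+\epsilon K^0_NK^\epsilon_N$ cannot be closed in norm, because $\epsilon\|K^0_N\|\geq\frac12$ by \eqref{eq-E1K}, and the quadratic term still contains the unknown $K^\epsilon_N$.

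The missing ingredient is the entrywise comparison $|\langle n|K^\epsilon_N|l\rangle|\le|\langle n|K^0_N|l\rangle|$ for $n-l$ odd. The paper obtains it by applying the upper bound of Proposition~\ref{prop-EigEst} to the solutions $\phi^{\epsilon,l}$ of the truncated operators $H^{N,l}$. This is legitimate because interlacing gives $E^N_1\le E^{N,l}_1$. For $n-l$ even, the paper uses the recurrence together with the alternating signs, $\epsilon|\phi^{\epsilon,l}_{n}|\le b\,|\phi^{\epsilon,l}_{n-1}|$, and then $1/\epsilon\le 2\|K^0_N\|$; this is where the square of the minimum comes from.

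Once you have the odd-offset comparison, your quadratic term would also work. Since $K^0_N$ vanishes at even offsets, there $\langle n|K^\epsilon_N|l\rangle=\epsilon\sum_j\langle n|K^0_N|j\rangle\langle j|K^\epsilon_N|l\rangle$ involves only odd-offset entries of $K^\epsilon_N$. With $\epsilon\le 2b$ this gives $|\langle n|K^\epsilon_N|l\rangle|\le\frac{2bN}{a^2}m^{-2}$, and the Schur test then reproduces the stated constant. Without that oscillation-theoretic input, neither of your routes closes.
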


\noindent {\bf Proof.} Let us apply Lemma~\ref{lem-LastSimon} for $E_n^N=E_1^N$ and $E_{n+1}^N=E_2^N$, with $\epsilon=E^N_1$. This gives
$$
E_2^N-E^N_1
\;\geq\; 
\frac{1}{\|K_N^\epsilon\|}
\;.
$$
The operator norm $\|K_N^\epsilon\|$ will again be bounded above by \eqref{eq-Schur}, with the matrix elements now given by \eqref{eq-KOpDef2} for $\epsilon=E^N_1$. For this purpose, the matrix elements of $K_N^\epsilon$ have to be computed. Let $\phi^{\epsilon,l}=(\phi^{\epsilon,l}_n)_{n\geq l}$ denote the solution of the eigenvalue equation with initial conditions $\phi^{\epsilon,l}_{l+1}=1$ and $\phi^{\epsilon,l}_{l}=0$. Then  in a similar manner as in \eqref{TM-Sol}
\begin{align}
\phi^{\epsilon,l}_n
\;=\;
\frac{1}{t_{l+1}}\;\binom{0}{1}^*
{\cal T}^\epsilon(n+1,l+1)
\binom{1}{0}
\;=\;
\langle n|K^\epsilon_N|l\rangle
\;,
\qquad
l\leq n
\;.
\label{TM-Sol2}
\end{align}
The states $\phi^{\epsilon,l}$ also solve the eigenvalue equation $H^{N,l}\phi^{\epsilon,l}=\epsilon \phi^{\epsilon,l}$ where $H^{N,l}$ is the restriction of $H^N$ to the sites $[l+1,\ldots,N]$, namely $H^{N,l}$ has Dirichlet boundary conditions at $l$ since $\phi^{\epsilon,l}_{l+1}=1$ and $\phi^{\epsilon,l}_{l}=0$. With these notations, $H^{N,0}=H^N$ and $\phi^{\epsilon,0}=\phi^\epsilon$. The smallest positive eigenvalue $E_1^{N,l}$ of $H^{N,l}$ satisfies $E_1^{N,l}\geq E_1^{N}$ by the interlacing theorem. Note that for odd size $N-l$ the Hamiltonian $H^{N,l}$ has a simple zero eigenvalue $E_0^{N,l}=0$. Again $\phi^{\epsilon,l}$ will be decomposed in even and odd sites, however, with the first site $l+1$ being considered as an odd site (such that Proposition~\ref{prop-EigEst} applies directly). More precisely, let us set
$$
\tilde{\phi}^{\epsilon,l}_{k}
\;=\;
\phi^{\epsilon,l}_{l+2k-1}
\;,
\qquad
\hat{\phi}^{\epsilon,l}_{k}
\;=\;
\phi^{\epsilon,l}_{l+2k}
\;,
$$
with integer $k\geq 1$ such that $l+2k-1\leq N'$ and $l+2k\leq N'$ respectively. Accordingly, the corresponding zero energy solutions will be defined by 
$$
\tilde{\psi}^l_k
\;=\;
\tilde{\phi}^{0,l}_{k}
\;.
$$
They can be expressed in terms of zero energy solution $\tilde{\psi}$ of the full Hamiltonian as defined in Section~\ref{sec-ZeroEnergy}, namely for even and odd $l$ respectively
$$
\tilde{\psi}^{2p}_k
\;=\;
\frac{\tilde{\psi}_{p+k}}{\tilde{\psi}_{p+1}}
\;,
\qquad
\tilde{\psi}^{2p+1}_k
\;=\;
\frac{\hat{\psi}_{p+k}}{\hat{\psi}_{p+1}}
\;.
$$

\vspace{.1cm}
\noindent Let us now apply the upper bound of \eqref{eq-LowestStateBound1} in Proposition~\ref{prop-EigEst} for the state $\phi^{\epsilon,l}$ (which is a solution of $H^{N,l}\phi^{\epsilon,l}=\epsilon \phi^{\epsilon,l}$ with Dirichlet boundary condition at $l+1$)  and $E=\epsilon\leq E_1^{N,l}$:
$$
|\phi^{\epsilon,2p}_{2p+2k-1}|
\;=\;
|\tilde{\phi}^{\epsilon,2p}_{k}|
\;\leq\;
|\tilde{\psi}^{2p}_{k}|
\;=\;
\Big|\frac{\tilde{\psi}_{p+k}}{\tilde{\psi}_{p+1}}\Big|
\;.
$$
Similarly,
$$
|\phi^{\epsilon,2p+1}_{2p+2k-1}|
\;=\;
|\tilde{\phi}^{\epsilon,2p+1}_{k}|
\;\leq\;
|\tilde{\psi}^{2p+1}_{k}|
\;=\;
\Big|\frac{\hat{\psi}_{p+k}}{\hat{\psi}_{p+1}}\Big|
\;.
$$
It only remains to bound $\phi^{\epsilon,2p+1}_{2p+2k}$ and $\phi^{\epsilon,2p}_{2p+2k}$. The three-term recurrence relations reads
$$
\epsilon \phi^{\epsilon,2p+1}_{2p+2k}
\;=\;
t_{2p+2k} \phi^{\epsilon,2p+1}_{2p+2k-1}+ t_{2p+2k+1}\phi^{\epsilon,2p+1}_{2p+2k+1}
\;.
$$
As in Section~\ref{sec-Osci}, this implies for the signs of the solutions
$$
\sgn(\phi^{\epsilon,2p+1}_{2p+2k})
\;=\;
\sgn(\phi^{\epsilon,2p+1}_{2p+2k-1})
\;=\;
-\,\sgn(\phi^{\epsilon,2p+1}_{2p+2k+1})
\;,
$$
and therefore
$$
\epsilon |\phi^{\epsilon,2p+1}_{2p+2k}|
\;=\;
t_{2p+2k} |\phi^{\epsilon,2p+1}_{2p+2k-1}|- t_{2p+2k+1}|\phi^{\epsilon,2p+1}_{2p+2k+1}|
\; \leq \; 
t_{2p+2k} |\phi^{\epsilon,2p+1}_{2p+2k-1}|
\; \leq \;  
t_{2p+2k} |\tilde{\psi}^{2p+1}_{k}|
\;.
$$
Combining this with \eqref{eq-E1K} for $\epsilon=E^N_1$ shows
$$
|\phi^{\epsilon,2p+1}_{2p+2k}|
\; \leq \; 
\frac{t_{2p+2k} |\tilde{\psi}^{2p+1}_{k}|}{\epsilon}
\; \leq \; 
2 t_{2p+2k} \|K^0_N\|\, |\tilde{\psi}^{2p+1}_{k}|
\;.
$$
Similarly, for $2p$ one gets:
$$
 |\phi^{\epsilon,2p}_{2p+2k}|\;  \leq \; 2 t_{2p+2k}\|K^0_N\|\, |\tilde{\psi}^{2p}_{k}|
\;.
$$
Collecting all these estimates in \eqref{eq-Schur} leads to the upper bound 
$$
\|K^\epsilon_N\|
\;\leq \;
\frac{N}{a} \max\{1,2b\|K^0_N\|\}\max_{1\leq l\leq n\leq N} 
\Big\{\Big|\frac{\hat{\psi}_n}{\hat{\psi}_l}\Big|,\Big|\frac{\tilde{\psi}_n}{\tilde{\psi}_l}\Big|\Big\}
\;,
$$
On the r.h.s., let now bound $\|K^0_N\|$ by \eqref{eq-K0Bound}. As $N\geq \frac{a}{2b}$ the first maximum is given by the second entry, and this directly implies the claim. 
\hfill $\Box$

\section{Probabilistic estimates for the random hopping model}
\label{sec-Probabilistic}

Section~\ref{sec-ChiralJac} provided deterministic spectral information on the chiral Hamiltonian $H^N$ for a given fixed configuration $(t_n)_{n=2,\ldots,N}$. In this section, the hoppings $(t_n)_{n=2,\ldots,N}$ are i.i.d. random variables drawn from an  interval $[a,b]$ with $a>0$ and $b<\infty$, with a distribution having a density.

\subsection{Random walk perspective}
\label{sec-RandomWalk}

Let us introduce the notations
\begin{equation}
\label{eq-RandWalkDef}
\tilde{\kappa}_n
\,=\,
\frac{t_{2n}}{t_{2n+1}}
\;,
\qquad
\tilde{w}_{n+1}\,=\,\tilde{w}_{n}\,+\,\log(\tilde{\kappa}_n)
\;,
\end{equation}
for $n=1,\ldots,N'-1$ and with initial condition $\tilde{w}_1=0$. Note that as $t_{2n}$ and $t_{2n+1}$ are i.i.d., the increments $\log(\tilde{\kappa}_n)$ are centered, actually the distribution is even symmetric. Moreover, by assumption the increments have an absolutely continuous distribution and supported by a finite interval. Let us denote their variance by
$$
\sigma^2\;=\;\EM(\log(\tilde{\kappa}_n)^2)
\;.
$$
Then \eqref{eq-RandWalkDef} defines a discrete-time, continuous step random walk $\tilde{w}=(\tilde{w}_n)_{n=1,\ldots,N'}$ with $w_n\in\RM$. In the following, we will consider $\tilde{w}$ as a given random configuration, rather than $(t_n)_{n=1,\ldots,N'}$. With these notations, the odd zero modes are given by
\begin{equation}
\label{eq-zeroE}
\tilde{\psi}_n\,=\,(-1)^{n-1}e^{\tilde{w}_{n}}
\;,
\qquad
n=1,\ldots,N'
\;.
\end{equation}
In the same manner, one can identify the random walk $\hat{w}$ associated with the even zero mode:
\begin{equation}
\label{eq-RandWalkDef2}
\hat{\kappa}_n
\,=\,
\frac{t_{2n+1}}{t_{2n+2}}
\;,
\qquad
\hat{w}_{n+1}\,=\,\hat{w}_{n}\,+\,\log(\hat{\kappa}_n)
\;,
\end{equation}
for $n=1,\ldots,N'-1$ and with initial condition $\hat{w}_1=0$, namely
\begin{equation}
\label{eq-zeroE2}
\hat{\psi}_n\,=\,(-1)^{n-1}e^{\hat{w}_{n}}
\;,
\qquad
n=1,\ldots,N'
\;.
\end{equation}
The identity \eqref{eq-2RandomWalks} implies 
\begin{equation}
\label{eq-HatTildeLink}
\hat{w}_n\;=\;-\tilde{w}_n\,+\,\log(t_2)-\log(t_{2n})
\;,
\end{equation}
so that the random walk $\hat{w}$ is essentially $-\tilde{w}$, up to an error of order $1$. To elaborate more on the bounds derived in Section~\ref{sec-ChiralJac}, let us introduce the following quantities:
$$
\tilde{W}_{\pm}\;=\;\pm\max\left\{\pm \tilde{w}_{n}\,:n=1,\ldots,{N'}\right\}
\;, 
\qquad 
\tilde{D}\,=\,
\tilde{w}_{N'}
\,=\,
\log
\left(
\prod^{N'-1}_{k=1}\frac{t_{2n}}{t_{2n+1}}
\right)
\;,
$$
notably $\tilde{W}_{\pm}$ are the maximal and the minimal value of the random walk $\tilde{w}$ and $\tilde{D}$ is the defect at the right edge. If $\tilde{D}=0$, one speaks of a (discrete) Brownian bridge configuration. By the central limit theorem, both $\tilde{W}_{\pm}$ and $\tilde{D}$ are typically of the order $\sqrt{N'}$. Indeed,
$$
\EM (\tilde{D}^2)
\;=\;
\mathbb{E}\Big(\big(\sum_{i=1}^{N'-1}\log(\tilde{\kappa}_i)\big)^2\Big)
\;=\;
(N'-1)\,\sigma^2
\;.
$$
Similarly, one associates $\hat{W}_{\pm}$ and $\hat{D}$ to $\hat{w}$. Due to \eqref{eq-HatTildeLink}, one has, with $c=\log(\frac{b}{a})>0$,
\begin{equation}
\label{eq-Wconnect}
|\tilde{W}_+ +\hat{W}_-|\,\leq \,c
\;,
\qquad
|\tilde{W}_- +\hat{W}_+|\,\leq \,c
\;,
\qquad
|\tilde{D}+\hat{D}|\,\leq\,c
\;.
\end{equation}
Furthermore, one also checks that $\EM (\tilde{D}^2)=\EM(\hat{D}^2)$. Using these notations for the extrema and the defects, one has the following deterministic statements about the low-lying eigenvalues $E^N_1$ and $E^N_2$:

\begin{proposi}
\label{prop-EnergyBoundOBC}
The smallest positive energy  $E^N_1$ satisfies the bounds
\begin{equation}
\label{eq-EnergyBound1OBC}
\frac{a^2}{2bN}\,e^{-(\tilde{W}_+-\tilde{W}_-)}
\;\leq\;
E^N_1
\;\leq\;
b\,
e^{-\max\{\tilde{W}_+-\tilde{D},|\tilde{W}_-|\}}
\;,
\end{equation}
and the gap above  $E^N_1$ is bounded below by
$$
E^N_2- E^N_1
\;\geq\;
\frac{a^4}{2b^3N^2}
\,
e^{-2(\tilde{W}_+-\tilde{W}_-)}
\;.
$$
\end{proposi}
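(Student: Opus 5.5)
This is a direct translation of Propositions~\ref{prop-MinMax}, \ref{prop-LowerE1} and \ref{prop-LowerE2-1} into random-walk language, using \eqref{eq-zeroE}, \eqref{eq-zeroE2} and the comparison \eqref{eq-HatTildeLink}, \eqref{eq-Wconnect}. Throughout, $|\tilde{\psi}_n|=e^{\tilde{w}_n}$ and $|\hat{\psi}_n|=e^{\hat{w}_n}$.

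\emph{Upper bound.} I would use the two estimates in \eqref{eq-MinMax}. For the first, bound $\langle\tilde{\psi}|\tilde{\psi}\rangle=\sum_n e^{2\tilde{w}_n}\geq e^{2\tilde{W}_+}$ and $t_N^2\tilde{\psi}_{N'}^2\leq b^2e^{2\tilde{D}}$. This gives $E^N_1\leq b\,e^{-(\tilde{W}_+-\tilde{D})}$.

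For the second, bound $\langle\hat{\psi}|\hat{\psi}\rangle\geq e^{2\hat{W}_+}$, so that $E^N_1\leq t_2e^{-\hat{W}_+}$. To express $\hat{W}_+$ through $\tilde{w}$, use \eqref{eq-HatTildeLink} exactly rather than the crude form \eqref{eq-Wconnect}. Let $n_-$ be the index where $\tilde{w}$ attains $\tilde{W}_-$. Then
\[
\hat{W}_+\geq\hat{w}_{n_-}=-\tilde{W}_-+\log t_2-\log t_{2n_-},
\]
hence $t_2e^{-\hat{W}_+}\leq t_{2n_-}e^{\tilde{W}_-}\leq b\,e^{-|\tilde{W}_-|}$, since $\tilde{W}_-\leq\tilde{w}_1=0$. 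Taking the minimum of the two upper bounds yields the factor $b\,e^{-\max\{\cdot,\cdot\}}$.

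\emph{Lower bound.} Here I would evaluate the minimum in \eqref{eq-LowerE1}. For the tilde part,
\[
\Big|\frac{\tilde{\psi}_l}{\tilde{\psi}_n}\Big|=e^{\tilde{w}_l-\tilde{w}_n}\geq e^{-(\tilde{W}_+-\tilde{W}_-)}.
\]
For the hat part, \eqref{eq-HatTildeLink} gives
\[
\hat{w}_l-\hat{w}_n=-(\tilde{w}_l-\tilde{w}_n)+\log(t_{2n}/t_{2l})\geq-(\tilde{W}_+-\tilde{W}_-)-\log(b/a).
\]
So the minimum is at least $\frac{a}{b}e^{-(\tilde{W}_+-\tilde{W}_-)}$, and inserting this into \eqref{eq-LowerE1} gives $\frac{a^2}{2bN}e^{-(\tilde{W}_+-\tilde{W}_-)}$.

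\emph{Gap.} The same minimum estimate, squared and inserted into \eqref{eq-LowerE2-1}, gives
\[
\frac{a^2}{2bN^2}\cdot\frac{a^2}{b^2}\,e^{-2(\tilde{W}_+-\tilde{W}_-)},
\]
which is exactly the stated constant.

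The only mildly delicate point is the second upper bound. Using \eqref{eq-Wconnect} there would cost a factor $b/a$ and break the constant $b$, so the pointwise identity \eqref{eq-HatTildeLink} at the minimizer $n_-$ must be used instead. Everything else is routine bookkeeping of the constants $a$ and $b$.
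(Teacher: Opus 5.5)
Your proof is correct and follows the paper's route: you insert $|\tilde{\psi}_n|=e^{\tilde{w}_n}$ and $|\hat{\psi}_n|=e^{\hat{w}_n}$ into Propositions~\ref{prop-MinMax}, \ref{prop-LowerE1} and \ref{prop-LowerE2-1}. Where the paper cites the extremal comparison \eqref{eq-Wconnect}, you use the pointwise identity \eqref{eq-HatTildeLink}, both at the minimizer $n_-$ and in the hat ratios, and this is the right bookkeeping: \eqref{eq-Wconnect} alone only yields the constant $b^2/a$ in the upper bound and a factor $(a/b)^2$ in the minimum, whereas your version gives exactly the stated constants $b$, $\frac{a^2}{2bN}$ and $\frac{a^4}{2b^3N^2}$.
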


\noindent {\bf Proof.} The lower bound in \eqref{eq-EnergyBound1OBC} follows from Proposition~\ref{prop-LowerE1} when it is combined with
\begin{equation}
\label{eq-MinQuotients}
\min_{1\leq l\leq n\leq N'}
\Big\{\Big|\frac{\hat{\psi}_l}{\hat{\psi}_n}\Big|,\Big|\frac{\tilde{\psi}_l}{\tilde{\psi}_n}\Big|\Big\}
\;\geq\;
e^{-\min\{\tilde{W}_+-\tilde{W}_-, \hat{W}_+-\hat{W}_-\}}
\;.
\end{equation}
Recall from \eqref{eq-Wconnect} that $\tilde{W}_++\hat{W}_-$ and $\hat{W}_++\tilde{W}_-$ are both bounded by a factor $\log(\frac{b}{a})$, so it follows that
$$
\min_{1\leq l\leq n\leq N'}
\Big\{\Big|\frac{\hat{\psi}_l}{\hat{\psi}_n}\Big|,\Big|\frac{\tilde{\psi}_l}{\tilde{\psi}_n}\Big|\Big\}
\;\geq\;
\frac{a}{b}\,e^{-(\tilde{W}_+-\tilde{W}_-)}
\;.
$$
For the proof of the upper bound, let us invoke Proposition~\ref{prop-MinMax}. The two bounds in \eqref{eq-MinMax} imply, when combined with the bounds $\langle\tilde{\psi}|\tilde{\psi}\rangle\geq e^{2\tilde{W}_+}$ and $\langle\hat{\psi}|\hat{\psi}\rangle\geq e^{2\hat{W}_+}$,
$$
E^N_1
\;\leq\; 
\frac{b\,e^{\tilde{D}}}{e^{\tilde{W}_+}}
\;,
\qquad
E^N_1
\;\leq\; 
\frac{b}{e^{-\hat{W}_+}}
\;.
$$
Taking the minimum of the two r.h.s. implies the claim. The lower bound on $E^N_2- E^N_1$ follows in a similar manner from Proposition~\ref{prop-LowerE2-1}, combined with \eqref{eq-Wconnect}.
\hfill $\Box$

\subsection{Probabilistic estimates on the smallest positive eigenvalue}
\label{sec-ProbEstE1}

In this section, it is illustrated how the random walk perspective allows to prove probabilistic spectral statements for the random hopping Hamiltonian. For this purpose, let us recall some well-known facts \cite{BS,Fel,Kal}. First of all, the central limit theorem states that the distribution of $N^{-\frac{1}{2}}\sigma^{-1}\tilde{D}$ is asymptotically Gaussian. Second of all, by Donsker's invariance principle the scaled random walk $N^{-\frac{1}{2}}\sigma^{-1}\tilde{w}_{[tN]}$ with continuous linear interpolations converges weakly in the uniform topology to a standard Brownian motion $(B_t)_{t\in[0,1]}$ (Wiener process). More precisely, for any continuous bounded real function $F$ on the space $(C([0,1]),\|\,.\,\|_\infty)$ one has convergence in expectation. The rescaled maximal value $N^{-\frac{1}{2}}\sigma^{-1}\tilde{W}_+$, in turn, converges to a half-normal distribution with density $\sqrt{\frac{2}{\pi}}\,e^{-\frac{x^2}{2}}\chi(x>0)$, as does $N^{-\frac{1}{2}}\sigma^{-1}|\tilde{W}_-|$. While this distribution has a finite density close to $x=0$, the positive random variable $\tilde{W}_+-\tilde{W}_-$ providing the lower bound in \eqref{eq-EnergyBound1OBC} follows the Feller distribution which has much less weight for small values. Of course, there are realization with much larger extremal values $\tilde{W}_+$ and $\tilde{W}_-$. However, this is untypical. More precisely, let us consider 
$$
\Ee_0
\;=\;
\Big\{\max\{\tilde{W}_+,|\tilde{W}_-|\} \leq \sigma {(N')}^{\frac{1}{2}+\delta}\Big\}
\;,
\qquad
\Ee'_0
\;=\;
\Big\{\min\{\tilde{W}_+,|\tilde{W}_-|\} \geq \sigma {(N')}^{\frac{1}{2}-\delta}\Big\}
\;.
$$
By Kolmogorov's maximal inequality
\begin{equation}
\label{eq-Kolmogorov}
\PM(\Ee_0)
\;\geq\;
1\,-\,\frac{\EM(\tilde{D}^2)}{\sigma^2 N^{1+2\delta}}
\;=\;
1\,-\,\frac{(N'-1)\EM(\log(\tilde{\kappa_i})^2)}{\sigma^2 (N')^{1+2\delta}}
\;\geq\;
1\,-\,(N')^{-2\delta}
\;.
\end{equation}
Moreover, fluctuation theory and renewal equations show that $\PM(\Ee'_0)\geq 1-C_1(N')^{-\delta}$ for some constant $C_1>1$ (a fact that will not be used later on). Hence $\PM(\Ee_0\cap\Ee'_0)\geq 1-2C_1(N')^{-\delta}$. On the set $\Ee_0\cap\Ee'_0$ of asymptotically full probability, Proposition~\ref{prop-EnergyBoundOBC} implies the following statement which hence shows what are the typical values of the smallest eigenvalues. 

\begin{coro}
\label{coro-EnergyBoundOBC}
There is a constant $C$ such that for all $\tilde{w}\in\Ee_0\cap\Ee'_0$ one has
\begin{equation}
\label{eq-EnergyBound1OBCcoro}
\frac{1}{CN}\,e^{-2\sigma N^{\frac{1}{2}+\delta}}
\;\leq\;
E^N_1
\;\leq\;
C\,e^{-\sigma N^{\frac{1}{2}-\delta}}
\;.
\end{equation}
Furthermore, for $\tilde{w}\in\Ee_0$
$$
E^N_2- E^N_1
\;\geq\;
\frac{a^4}{2b^3N^2}
\,
e^{-4\sigma N^{\frac{1}{2}+\delta}}
\;.
$$
\end{coro}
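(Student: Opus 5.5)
The plan is to insert the defining inequalities of $\Ee_0$ and $\Ee'_0$ into the deterministic bounds of Proposition~\ref{prop-EnergyBoundOBC}, using only $N'=N/2\le N$ to pass from $N'$ to $N$. No further probabilistic input is needed.

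\emph{Lower bound on $E^N_1$.} On $\Ee_0$ one has $\tilde{W}_+\le\sigma (N')^{\frac12+\delta}$ and $-\tilde{W}_-=|\tilde{W}_-|\le \sigma (N')^{\frac12+\delta}$. Hence
$$
\tilde{W}_+-\tilde{W}_-\;\le\; 2\sigma (N')^{\frac12+\delta}\;\le\; 2\sigma N^{\frac12+\delta}\;.
$$
The left inequality in \eqref{eq-EnergyBound1OBC} then gives $E^N_1\ge \frac{a^2}{2bN}e^{-2\sigma N^{\frac12+\delta}}$. This is the claimed lower bound for any $C\ge \frac{2b}{a^2}$.

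\emph{Upper bound on $E^N_1$.} The right inequality in \eqref{eq-EnergyBound1OBC} reads $E^N_1\le b\,e^{-\max\{\tilde{W}_+-\tilde{D},|\tilde{W}_-|\}}$. Since the maximum is at least $|\tilde{W}_-|$, it suffices to have $|\tilde{W}_-|\ge\sigma (N')^{\frac12-\delta}$, which holds on $\Ee'_0$. This yields $E^N_1\le b\,e^{-\sigma (N')^{\frac12-\delta}}$.

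Here lies the only mild obstacle: the exponent contains $N'$ rather than $N$. Because $(N')^{\frac12-\delta}=2^{-(\frac12-\delta)}N^{\frac12-\delta}$, the exponent is off from the stated one by a factor $2^{-(\frac12-\delta)}$, and this cannot be absorbed into a multiplicative constant $C$. I would handle it in one of two ways. Either the statement is read with $\sigma$ rescaled by this harmless factor, which is clearly the intended level of precision. Or one checks that the same Kolmogorov and fluctuation estimates hold with $N$ in place of $N'$ in the definitions of $\Ee_0,\Ee'_0$; this costs only constants in \eqref{eq-Kolmogorov}, since $\EM(\tilde D^2)\le \sigma^2 N$. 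The same remark applies to the lower bound, where we used $N'\le N$, which points in the favourable direction. Taking $C=\max\{b,\frac{2b}{a^2}\}$ then covers both sides of \eqref{eq-EnergyBound1OBCcoro}.

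\emph{Gap bound.} This uses only $\Ee_0$. The estimate $\tilde{W}_+-\tilde{W}_-\le 2\sigma N^{\frac12+\delta}$ gives $e^{-2(\tilde{W}_+-\tilde{W}_-)}\ge e^{-4\sigma N^{\frac12+\delta}}$. Inserting this into the gap bound of Proposition~\ref{prop-EnergyBoundOBC} yields the stated inequality with the same prefactor $\frac{a^4}{2b^3N^2}$.
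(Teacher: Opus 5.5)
Your argument is correct and is exactly the paper's: the corollary is obtained by inserting the defining inequalities of $\Ee_0$ and $\Ee'_0$ into the deterministic bounds of Proposition~\ref{prop-EnergyBoundOBC}, using $\tilde{W}_+-\tilde{W}_-\le 2\sigma (N')^{\frac12+\delta}\le 2\sigma N^{\frac12+\delta}$ for the lower bound and the gap, and $|\tilde{W}_-|\ge\sigma (N')^{\frac12-\delta}$ for the upper bound. Your caveat about the upper bound is also accurate. $\Ee'_0$ as defined only yields $E^N_1\le b\,e^{-\sigma (N')^{\frac12-\delta}}$, and the paper silently writes $N$ in place of $N'$ in that exponent, so one of your two fixes (rescaling $\sigma$ by $2^{-(\frac12-\delta)}$, or defining $\Ee'_0$ with $N$) is genuinely needed for the statement as written.
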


\begin{rem}
\label{rem-OBC1}
{\rm
There are, however, (untypical) configurations where both of the deterministic bounds in \eqref{eq-EnergyBound1OBC} are empty and the smallest energy does not satisfy $E^N_1\approx e^{-\sqrt{N}}$. For example, if the random walk is  increasing, one has $\tilde{D}=\tilde{W}^+$ and $\tilde{W}^-=0$, so that in this case, the bound \eqref{eq-EnergyBound1OBC} merely becomes $E^N_1\leq\,C$. Indeed, this happens for a periodic realization for which the eigenvector is a plane wave and the eigenvalue spacing is roughly $E^N_1\sim\frac{1}{N}$ so that there cannot be any subexponential decay. Furthermore, there are configurations for which one of the bounds in Proposition~\ref{prop-MinMax} is effective, and the other not. For example, if the random walk $\tilde{w}$ is positive (namely a meander), then typically $\tilde{W}_+-\tilde{D}\sim\sqrt{N}$ while $\hat{W}_+=0$. Another example is a Brownian bridge configuration, namely $\tilde{D}=0$, for which the respective size of $\tilde{W}_+$ and $|\tilde{W}_-|$ determines which bound in \eqref{eq-EnergyBound1OBC} is more effective. 
\hfill $\diamond$
}
\end{rem}

\subsection{Values of normalized zero modes at localization centers}
\label{sec-LocCenters}

The two zero modes $\tilde{\psi}$ and $\hat{\psi}$ are given by the exponential of the random walk, see \eqref{eq-zeroE} and \eqref{eq-zeroE2}. Hence these states are typically subexponentially localized with a localization centers $\argmax({\tilde{w}})$ and $\argmax({\hat{w}})\approx\argmin({\tilde{w}})$, respectively. While this is weaker than the usual exponential decay of Anderson localized states, one expects nevertheless that the normalized states $\tilde{\Psi}$ and $\hat{\Psi}$ defined in \eqref{eq-PsiNormalized} are of order $1$ near the localization center. This section shows that this holds true with large probability. As these states approximate the eigenfunction $\Phi^{E^N_1}$ of the smallest positive eigenfunction on parts of the sample by Proposition~\ref{prop-PhiPsiBound} and Corollary~\ref{coro-QuotientRandomWalkBound}, this also allows to deduce lower bounds on the $\Phi^{E^N_1}$. 

\vspace{.2cm}

On the mathematical side, the probabilistic lower bounds $\tilde{\Psi}$ and $\hat{\Psi}$ are direct corollaries of a remarkable (but nevertheless unnoticed) work by Davies and Kr\"amer \cite{DK}. In fact, the statement is somewhat surprising at first sight because, while the maximum  $\tilde{W}_+$ of $\tilde{w}$ is unique with probability $1$, values in an interval $[\tilde{W}_+-\delta,\tilde{W}_+)$ arise often, namely order of $\log(N)$ times (with prefactors depending on $\delta$). Nevertheless, the proof in \cite{DK} shows that in expectation these smaller values turn out to be summable.

\begin{proposi}
\label{prop-PsiLower}
Let $\tilde{n}=\argmax(\tilde{w})$. There is a constant $C_2$ such that, uniformly in $N$ and for any $\lambda>1$,
$$
\PM\Big(\Big\{ \tilde{\Psi}_{\tilde{n}}^2\geq\frac{1}{\lambda}\Big\}\Big)
\;\geq\;1\,-\,\frac{C_2}{\lambda}
\;.
$$
The same statement hold for $\hat{\Psi}_{\hat{n}}$ where $\hat{n}=\argmax(\hat{w})$.
\end{proposi}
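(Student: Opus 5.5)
We have $\tilde{\Psi}_{\tilde n}^2=\frac{e^{2\tilde W_+}}{2\sum_{k}e^{2\tilde w_k}}=\frac12\big(\sum_{k=1}^{N'}e^{-2(\tilde W_+-\tilde w_k)}\big)^{-1}$, so it suffices to control the random variable $S=\sum_{k=1}^{N'}e^{-2(\tilde W_+-\tilde w_k)}$. Indeed, $\{\tilde\Psi^2_{\tilde n}<\frac1\lambda\}=\{S>\frac{\lambda}{2}\}$. By Markov's inequality,
$$
\PM\Big(S>\tfrac{\lambda}{2}\Big)\;\leq\;\frac{2\,\EM(S)}{\lambda}\;,
$$
and the whole statement reduces to a bound $\EM(S)\leq C$ uniformly in $N'$, which is precisely the result of Davies and Kr\"amer \cite{DK}. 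We then take $C_2=2C$.

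To obtain the uniform bound on $\EM(S)$, I would split the sum at $\tilde n$ into a right part $\sum_{k>\tilde n}$ and a left part $\sum_{k<\tilde n}$. Consider the right part. Conditioned on $\tilde n=m$, the post-maximum path $j\mapsto\tilde w_{m}-\tilde w_{m+j}$, $j=1,\ldots,N'-m$, is a random walk conditioned to stay strictly positive, i.e.\ a discrete meander. By time reversal, the pre-maximum piece is likewise a walk conditioned to stay nonnegative. So it suffices to show that for a centered walk $v$ with bounded, symmetric, absolutely continuous steps,
$$
\sum_{j\geq1}\EM\big(e^{-2v_j}\,\big|\,v_1,\ldots,v_j>0\big)\,\PM(\text{cond.})\text{-weighted}\;<\;\infty\;,
$$
uniformly in the horizon. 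More concretely, I would write
$$
\EM\sum_{k>\tilde n}e^{-2(\tilde W_+-\tilde w_k)}=\sum_{m}\sum_{j}\EM\big(e^{-2(\tilde w_m-\tilde w_{m+j})};\,\tilde n=m\big)\;,
$$
and factor, by independence of increments, the event $\{\tilde n=m\}$ into a ``$m$ is a strict ladder epoch of the reversed walk on $[1,m]$'' event and a ``walk after $m$ stays below $\tilde w_m$ on $[m,N']$'' event. The left factor does not involve $j$. Summing it over $m$ is controlled by the renewal/Sparre-Andersen theory \cite{Fel}: $\PM(\text{reversed walk on }[1,m]\text{ stays below }0)\sim c\,m^{-1/2}$, and similarly for the right factor with $(N'-m)^{-1/2}$.

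The key estimate, and the main obstacle, is the meander-type bound
$$
\EM\big(e^{-2v_j};\,v_1,\ldots,v_{L}>0\big)\;\leq\;C\,j^{-3/2}\,L^{-1/2}\cdot j^{1/2}\;\lesssim\;\frac{C}{j\,\sqrt{L}}\cdot\text{(decay making }\textstyle\sum_j\text{ finite)}\;.
$$
I expect the true behaviour to be as follows. On $\{v_1,\ldots,v_j>0\}$, having $v_j=O(1)$ costs probability $\sim j^{-3/2}$ (local limit for excursions), and surviving further up to time $L$ costs another $\sim(L-j)^{-1/2}$ with a factor $O(1)$ from starting near $0$. Hence the $j$-sum behaves like $\sum_j j^{-3/2}(L-j)^{-1/2}\lesssim L^{-1/2}$, which matches the normalisation $\PM(v>0\text{ up to }L)\sim L^{-1/2}$. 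I would try to prove the $j^{-3/2}$ bound via the Wiener--Hopf factorization, i.e.\ the renewal function $U$ of strict descending ladder heights: $\PM(v_1,\ldots,v_j>0,\ v_j\in[x,x+1])\leq C(1+x)\,j^{-3/2}$ for bounded steps. This is a standard local estimate for walks killed at the origin (Caravenna, or Vatutin--Wachtel), and it requires the density assumption for the lattice-free local limit theorem. Weighting with $e^{-2x}$ and summing over $x$ gives the required $j^{-3/2}$.

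Combining the left and right factors and summing over $m$ then yields $\sum_m m^{-1/2}(N'-m)^{-1/2}\cdot O(1)=O(1)$, so $\EM(S)\leq C$. The statement for $\hat\Psi_{\hat n}$ follows by the identical argument, since $\hat w$ is again a random walk with i.i.d.\ centered, bounded, absolutely continuous increments $\log\hat\kappa_n$.
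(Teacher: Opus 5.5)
Your reduction is the paper's proof: you apply Markov's inequality to $\|\tilde\psi\|^2/\tilde\psi_{\tilde n}^2=\sum_{k}e^{2(\tilde w_k-\tilde W_+)}$ and take the uniform bound on its expectation from equation (9) of Davies--Kr\"amer. Your threshold $\lambda/2$ (hence $C_2=2C$) is in fact the correct bookkeeping for the normalization $\tilde\Psi=\tilde\psi/(\sqrt2\|\tilde\psi\|)$, where the paper writes $\lambda$; this only affects the constant. Your extra sketch of the Davies--Kr\"amer bound is optional. The intermediate display giving $j^{-1}L^{-1/2}$ would not be summable in $j$; the later heuristic $j^{-3/2}(L-j)^{-1/2}$, which sums to $O(L^{-1/2})$, is the estimate that closes the argument.
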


\noindent {\bf Proof.} Let us look at probability of the complementary set:
$$
\PM\Big(\Big\{ \tilde{\Psi}_{\tilde{n}}^2<\frac{1}{\lambda}\Big\}\Big)
\;=\;
\PM\Big(\Big\{ 
\frac{\|\tilde{\psi}\|^2}{\tilde{\psi}_{\tilde{n}}^2}> \lambda\Big\}\Big)
\;\leq\;
\frac{1}{\lambda}\;
\EM
\Big(
\frac{\|\tilde{\psi}\|^2}{\tilde{\psi}_{\tilde{n}}^2}
\Big)
\;=\;
\frac{1}{\lambda}\;
\EM
\Big(
\sum_{n=1}^{N'}e^{2(\tilde{w}_n-\tilde{W}_+)}
\Big)
\;.
$$
Now equation (9) in \cite{DK} states that the expectation on the r.h.s. is bounded by a constant, uniformly in $N'$. This implies the claim.
\hfill $\Box$

\vspace{.2cm}

For the analysis in Section~\ref{sec-RealEnergies}, let us introduce the set
\begin{equation}
\label{eq-Ee1Def}
\Ee_1(\lambda)
\;=\;
\Big\{ \tilde{\Psi}_{\tilde{n}}^2\geq\frac{1}{\lambda}\Big\}
\cap
\Big\{ \hat{\Psi}_{\hat{n}}^2\geq\frac{1}{\lambda}\Big\}
\;.
\end{equation}
Then Proposition~\ref{prop-PsiLower} shows $\PM(\Ee_1(\lambda))\geq 1-\frac{2C_2}{\lambda}$.

\subsection{Locations of the extrema of the random walk}
\label{sec-RandomWalkExtrema}

Recall that $\text{argmax}(\tilde{w})$ and $\text{argmin}(\tilde{w})$ denote the location of the maximum $\tilde{W}_+$ and the minimum $\tilde{W}_-$ of the random walk $\tilde{w}=(\tilde{w}_n)_{n=1,\ldots, N'}$. As the distribution of the increments $\tilde{w}_n-\tilde{w}_{n-1}$ has a density, these locations are almost surely well-defined, namely almost surely the maximal and minimal values are only taken once. This section is about the event
%
$$
\Ee_2(0)
\;=\;
\left\{\text{argmax}(\tilde{w})\leq N''' \;\;\mbox{and}\;\;\text{argmin}(\tilde{w})=N''\right\}
\;,
$$
%
where $N''=\frac{N'}{2}=\frac{N}{4}$ and $N'''=\frac{N}{8}$. For sake of simplicity, $N$ will hence supposed to be divisible by $8$. The argument $0$ of $\Ee_2(0)$ is added for notational convenience because the event will be further generalized below. As it will be shown that the two extrema of $\tilde{w}$ can essentially be identified with two localization centers of the eigenfunction $\Phi^{E^N_1}$, these localization centers will be a distance $N'''$ of order $N$ apart for configurations in $\Ee_2(0)$. This favors disorder averaged quantum transport, provided that $\PM(\Ee_2(0))$ can be bounded from below. Such a lower bound is precisely what is proved below.

\vspace{.2cm}

Let us provide an intuitive understanding why $\PM(\Ee_2(0))\sim \frac{1}{N}$. First of all, the classical Sparre-Anderson theorem (see \cite[pp.~417]{Fel} or \cite{Kal}) states that, for the discrete time random walk $(\tilde{w}_n)_{n=1,\dots,N'}$ with bounded, centered increments with absolutely continuous distribution, one has $\PM(\text{argmin}(\tilde{w})=N'')\sim\frac{1}{N}$. Now $\mathcal{E}_2(0)$ requires information about both extrema. When they are far apart, one may expect them to be nearly independent so that the probability for both fixed extrema should be of order $\frac{1}{N^2}$. In $\Ee_2(0)$, however, one takes a union over $N'''$ such events, so that $\PM(\Ee_2(0))\approx N'''\frac{1}{N^2}\approx \frac{1}{N}$. The following result confirms this heuristics. We do not claim any novelty here, in fact there are numerous techniques that are available to obtain this and similar results \cite{MMS,SH}. The proof below uses the invariance theorem of Iglehard and Bolthausen \cite{Igl,Bol} and explicit computations for Brownian meanders \cite{BS}.

\begin{lemma}
\label{lem-E2Prob}
There is a constant $C_3>0$ such that
$$
\PM(\Ee_2(0))
\;\geq\; \frac{C_3}{N'}\; .
$$ 
\end{lemma}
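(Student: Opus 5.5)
The plan is to cut the walk at the site $N''$ and use independence. Write $\tilde{w}$ as two walks started at the common point $\tilde{w}_{N''}$. The first is the time-reversed left piece
$$
v_j\;=\;\tilde{w}_{N''-j}-\tilde{w}_{N''}\;,\qquad j=0,\ldots,N''-1\;.
$$
The second is the right piece
$$
u_j\;=\;\tilde{w}_{N''+j}-\tilde{w}_{N''}\;,\qquad j=0,\ldots,N'-N''\;.
$$
They are built from disjoint sets of increments $\log(\tilde{\kappa}_n)$, so they are independent. Because the increment law is symmetric, each is again a random walk with the same step distribution.

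In these variables, $\text{argmin}(\tilde{w})=N''$ is exactly the event $A_L\cap A_R$, where $A_L=\{v_j>0,\ j\geq 1\}$ and $A_R=\{u_j>0,\ j\geq 1\}$. The condition $\text{argmax}(\tilde{w})\leq N'''$ holds as soon as two things happen. First, the maximum of $v$ is attained at some $j\geq N''-N'''=N'/4$. Second, this maximum strictly exceeds $\max_j u_j$. Hence $\Ee_2(0)$ contains
$$
A_L\cap A_R\cap\Big\{\max_{j\geq N'/4}v_j\,>\,\max_{j< N'/4}v_j\Big\}\cap\Big\{\max_j v_j\,>\,\max_j u_j\Big\}\;.
$$

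First, I would invoke the Sparre-Andersen theorem (\cite{Fel}), as recalled above. For a centered walk with a symmetric, bounded, absolutely continuous increment law, it gives $\PM(v_j>0 \text{ for } 1\leq j\leq n)\sim c\,n^{-1/2}$. Therefore $\PM(A_L)\geq c/\sqrt{N'}$ and $\PM(A_R)\geq c/\sqrt{N'}$, and by independence $\PM(A_L\cap A_R)\geq c^2/N'$.

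It remains to show that, conditioned on $A_L\cap A_R$, the two maximum conditions hold with probability bounded below uniformly in $N$. For this I would use the invariance principle of Iglehart and Bolthausen \cite{Igl,Bol}. Conditioned on $A_L$, the rescaled path $(N'')^{-1/2}\sigma^{-1}v_{[tN'']}$ (linearly interpolated) converges weakly in $C([0,1])$ to a Brownian meander $m_1$. Likewise the rescaled $u$, conditioned on $A_R$, converges to an independent meander $m_2$. Both pieces have length $N'/2$ up to an error of one site, so the scales match.

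The event in question becomes, in the limit, the set
$$
O\;=\;\Big\{(f_1,f_2):\ \max_{[1/2,1]}f_1>\max_{[0,1/2]}f_1,\ \ \max_{[0,1]} f_1>\max_{[0,1]} f_2\Big\}\;.
$$
This set is open in $C([0,1])^2$ with the sup norm. By the portmanteau theorem, applied to the product of the two independent conditional laws, the conditional probability is asymptotically at least $\PM((m_1,m_2)\in O)$. One must check that the discrete event dominates the rescaled open event; this is clear up to the discretization by linear interpolation.

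Finally, $\PM((m_1,m_2)\in O)>0$ has to be verified. One route is the explicit meander densities of \cite{BS}, for instance the joint law of $m_1(1)$ and the maximum. A simpler route is to note that the meander law charges every nonempty open set of positive paths starting at $0$. Then take $f_1$ close to $t\mapsto 2t$ and $f_2$ close to $t\mapsto t/10$ in a small tube around each.

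I expect the main obstacle to be the uniform-in-$N$ statement. The Sparre-Andersen asymptotics $\sim c\,n^{-1/2}$ and the convergence of the conditioned walks are only asymptotic, and the finitely many small $N$ need separate treatment. For these I would use a crude positive lower bound: since the increment density charges a neighbourhood of both endpoints of its support, explicit increasing or decreasing paths have positive probability. Adjusting $C_3$ then covers them.
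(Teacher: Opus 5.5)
Your proposal is correct and follows essentially the paper's route: cut the walk at $N''$ into two independent pieces (the paper's $l,r$ are your $v,u$), use Sparre--Andersen for the probability that both pieces are meanders, and use the Iglehart--Bolthausen invariance principle for the remaining shape conditions. The main difference is in how you handle the coupling condition $\max v>\max u$. The paper adds $\argmax(r)\geq N'''$ to make the event symmetric, pays a factor $\frac{1}{2}$ to drop $L_+>R_+$, and then uses an explicit law for the argmax of the meander. You instead keep the coupled condition inside an open set of $C([0,1])^2$ and only need its probability to be positive under the product meander law; you also treat the finitely many small $N$, which the paper passes over.
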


\noindent \textbf{Proof.}
Define the random walks $l=(l_n)_{n=0,\ldots,N''}$ and $r=(r_n)_{n=0,\ldots,N''}$ to the left and right of $N''$ by
$$
l_n\,=\,\tilde{w}_{N''-n}-\tilde{w}_{N''}\;,
\qquad
r_n\,=\,\tilde{w}_{N''+n}-\tilde{w}_{N''}
\;.
$$
Then $\argmin(\tilde{w})=N''$ is equivalent to both $l$ and $r$ being meanders, namely random walks which are conditioned to be positive which will simply be denoted by $l\geq 0$ and $r\geq 0$. Thus
$$
\{\argmin(\tilde{w})=N''\}
\;=\;
\{l\geq 0\mbox{ and }r\geq 0\}
\;.
$$
Furthermore, let $L_+=\max(l)$ and $R_+=\max(r)$. Then
$$
\Ee_2(0)
\;=\;
\big\{l\geq 0\mbox{ and }r\geq 0\mbox{ and }L_+>R_+\mbox{ and }\argmax(l)\geq N''-N'''\big\}
\;.
$$
Another crucial insight is that the two random walks $l$ and $r$ are independent. Note that $N''-N'''=N'''$ which allows to reformulate the last condition in $\Ee_2(0)$. To render $\Ee_2(0)$ into a symmetric event, let us impose a second condition:
$$
\Ee_2(0)
\;\supset\;
\big\{l\geq 0\mbox{ and }r\geq 0\mbox{ and }L_+>R_+\mbox{ and }\argmax(l)\geq N'''\mbox{ and }\argmax(r)\geq N'''\big\}
\;.
$$
Now, by symmetry, both $L_+<R_+$ or $L_+>R_+$ have the same probability. Hence, using independence once the correlation $L_+>R_+$ is eliminated,
$$
\PM(\Ee_2(0))
\;\geq\;
\frac{1}{2}\;
\PM\big(
\{l\geq 0\mbox{ and }\argmax(l)\geq N'''\}\big)
\;
\PM\big(
\{r\geq 0\mbox{ and }\argmax(r)\geq N'''\}\big)
\;.
$$
Of course, the two factors on the r.h.s are equal, so let us focus on the first one. Next let us recall that by Iglehart-Bolthausen invariance principle the scaled meander $(N'')^{-\frac{1}{2}}\sigma^{-1}l_{[tN'']}$ with continuous linear interpolations converges weakly in the uniform topology to a standard Brownian meander $M=(M_t)_{t\in[0,1]}$ \cite{Igl,Bol}. For this process, the random variable $\theta=\argmax(M)$ follows the arcsin distribution $\PM(\theta>x)=\frac{2}{\pi}\arcsin(\sqrt{x})$ \cite{BS}.  In particular, $\PM(\theta>\frac{1}{2})=\frac{1}{2}$. Hence
\begin{align*}
\PM\big(\{l\geq 0\mbox{ and }\argmax(l)\geq N'''\}\big)
&
\;=\;
\PM\big(\{\argmax(l)\geq N''' \,\mid\,l\geq 0\}\big)
\PM\big(\{l\geq 0\}\big)
\\
&
\;=\;
\Big(\PM(\{\theta>\tfrac{1}{2}\})\,+\,o(N^{-1})\Big)
\,
2^{-N''}\binom{N''}{N'''}
\\
&
\;=\;
\Big(\frac{1}{2}\,+\,o(N^{-1})\Big)
\,
2^{-N''}\binom{N''}{N'''}
\\
&
\;=\;
\frac{1}{2}\,\frac{\sqrt{2}}{\sqrt{\pi}}\,\frac{1}{\sqrt{N''}}
\,+\,o(N^{-1})
\;,
\end{align*}
where the equality $\PM\big(\{l\geq 0\}\big)=2^{-N''}\binom{N''}{N'''}$ is the classical ballot theorem (which applies because the distribution of the increments is symmetric \cite{Fel}). Together one gets
$$
\PM(\Ee_2(0))
\;\geq\;
\frac{1}{4\pi}\,\frac{1}{N''}
\,+\,o(N^{-1})
\;,
$$
which implies the claim.
\hfill $\Box$

\vspace{.2cm}

For the proof of the main result of the paper, it is necessary to impose one further condition, namely that the absolute value of the minimum $\tilde{W}_-$ is larger than $\mu\sqrt{N''}$ for some constant $\mu\geq 0$:
\begin{equation}
\label{eq-Ee2bisDef}
\Ee_2(\mu)
\;=\;
\left\{\text{argmax}(\tilde{w})\leq N''' \mbox{ and }\text{argmin}(\tilde{w})=N''\mbox{ and }\tilde{W}_-< -\mu\sqrt{N''}\right\}
\;.
\end{equation}
For $\mu=0$, $\Ee_2(\mu)$ clearly reduces to $\Ee_2(0)$ as defined above. Nevertheless, as the extrema are typically of order $\sqrt{N''}$, one expects the probability of $\Ee_2(\mu)$ still to be of the order $\frac{1}{N'}$.

\begin{lemma}
\label{lem-E2bisProb}
There is a constant $C_4>0$ such that
$$
\PM(\Ee_2(\mu))
\;\geq\; \frac{C_4}{N'}\,e^{-\frac{\mu^2}{2\sigma^2}}\; .
$$ 
\end{lemma}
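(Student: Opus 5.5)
We follow the proof of Lemma~\ref{lem-E2Prob} and add the depth condition. With the same left and right walks $l$ and $r$ from $N''$, we have $\tilde{W}_-=\tilde{w}_{N''}$ on $\{\argmin(\tilde{w})=N''\}$. Moreover $\tilde{w}_1=0$, hence $\tilde{w}_{N''}=-l_{N''-1}$, the (negative of the) endpoint of the left meander; up to an index shift of one, which is irrelevant for the limits below, we write $l_{N''}$. Therefore
$$
\Ee_2(\mu)\;\supset\;\big\{l\geq 0,\ r\geq 0,\ L_+>R_+,\ \argmax(l)\geq N''',\ \argmax(r)\geq N''',\ l_{N''}>\mu\sqrt{N''}\big\}\;.
$$
The extra condition only concerns $l$, so it breaks the symmetry between $l$ and $r$ used before. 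The plan is to restore it by imposing the same condition on $r$, namely $r_{N''}>\mu\sqrt{N''}$, which only shrinks the event. Then the symmetry $L_+\leftrightarrow R_+$ again gives a factor $\frac12$, and independence of $l$ and $r$ yields
$$
\PM(\Ee_2(\mu))\;\geq\;\frac12\,\PM\big(l\geq 0,\ \argmax(l)\geq N''',\ l_{N''}>\mu\sqrt{N''}\big)^2 .
$$

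For each factor we write it as $\PM(l\geq0)$ times a conditional probability. The ballot theorem gives $\PM(l\geq0)\sim\sqrt{2/\pi}\,(N'')^{-1/2}$, as in the proof of Lemma~\ref{lem-E2Prob}. By the Iglehart--Bolthausen invariance principle the conditional probability converges to
$$
p(\mu)\;=\;\PM\big(\theta\geq\tfrac12,\ M_1>\mu/\sigma\big)
$$
for the Brownian meander $M$. The boundary of this set has probability zero, since the joint law of $(\theta,M_1)$ has a density, so the weak convergence applies. The product of the two factors is then of order $p(\mu)^2/N''$.

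It remains to show $p(\mu)\geq c\,e^{-\mu^2/(4\sigma^2)}$, so that $p(\mu)^2\geq c^2 e^{-\mu^2/(2\sigma^2)}$; this is the main obstacle. Write $x=\mu/\sigma$. The endpoint $M_1$ has the Rayleigh law, $\PM(M_1>x)=e^{-x^2/2}$. This alone is too small, since its square gives $e^{-x^2}$. So we should not impose the depth condition on both sides. The better route is to avoid symmetrization for the depth condition: condition only on $l$, keep the original event, and treat $L_+>R_+$ differently. For instance, drop the symmetry argument and use that on $\{l_{N''}>x\sqrt{N''}\}$ one has $L_+\geq x\sqrt{N''}$. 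Then we need $\PM(R_+<L_+)$, which is at least of order one once $x$ is bounded, while for large $x$ it is even close to one. The bound then reads
$$
\PM(\Ee_2(\mu))\;\geq\;\PM\big(l\geq0,\ \argmax(l)\geq N''',\ l_{N''}>x\sqrt{N''}\big)\cdot\PM\big(r\geq0,\ R_+<x\sqrt{N''}\big)\;\gtrsim\;\frac{1}{N'}\,\PM\big(\theta\geq\tfrac12,\ M_1>x\big)\,\PM\big(\max M<x\big).
$$
Uniformity in small $x$ is handled by keeping the original symmetric argument for $x\leq x_0$, where $\PM(\theta\geq\frac12,\ M_1>x)$ is bounded below by continuity. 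For large $x$ we need the joint event $\{\theta\geq\frac12,\ M_1>x\}$ to carry probability of order $e^{-x^2/2}$ up to polynomial factors. This should follow from the explicit meander densities in \cite{BS}, or directly: with probability bounded below, a path ending above $x$ attains its maximum in $[\frac12,1]$. A residual polynomial factor in $x$ can be absorbed into the constant after slightly weakening the Gaussian rate, or removed by a more careful computation of the joint density of $(\theta,M_1)$.
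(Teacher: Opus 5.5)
Your final argument is correct in substance. Once you drop the first attempt, it takes a genuinely different route from the paper.

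\textbf{What the paper does.} The paper keeps the symmetrization. It imposes the argmax and depth conditions on both $l$ and $r$, bounds $\PM(\Ee_2(\mu))$ below by $\frac12\,\PM(l\geq 0,\ \argmax(l)\geq N''',\ l_{N''}>\mu\sqrt{N''})^2$, and passes to the meander. It then asserts that $\{\argmax(M)\geq\frac12\}$ and $\{M_1>\mu/\sigma\}$ are independent ``via the Vervaat transform'', so that the Rayleigh tail factors out. As you noticed, the square then produces $e^{-\mu^2/\sigma^2}$, not the stated $e^{-\mu^2/(2\sigma^2)}$, and the paper passes over this. The asserted independence also looks false. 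Conditionally on $M_1=y$ the meander is a Bessel(3) bridge. For $y\to0$ this is the excursion, whose maximum sits symmetrically about $\frac12$; for $y\to\infty$ the maximum sits near time $1$. In any case only a lower bound on $\PM(\argmax(M)\geq\frac12\mid M_1>x)$ is ever needed.

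\textbf{What your route buys.} On $\{l_{N''}>\mu\sqrt{N''}\}\cap\{R_+<\mu\sqrt{N''}\}$ the condition $L_+>R_+$ is automatic, so independence of $l$ and $r$ applies directly. This is the same device the paper itself uses afterwards for $\Ee_2(\mu)\cap\Ee_3(\eta)$ in \eqref{eq-E2CapE3}. It pays the Gaussian price only once and therefore gives the rate as stated, at the cost of a case distinction in $\mu$. The paper's route is shorter and suffices for Theorem~\ref{theo-Intro}, where $\mu$ is fixed and only the order $1/N'$ matters.

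\textbf{What to tighten.}
\begin{enumerate}
\item The lower bound on $\PM(\theta\geq\frac12\mid M_1>x)$ is the crux and should be proved rather than deferred. Once it is, your fallback with a polynomial loss is unnecessary, and that fallback would not give a $\mu$-independent $C_4$ anyway. A short argument:
\begin{itemize}
\item $\{\theta<\frac12,\ M_1>x\}\subset\{\sup_{[0,1/2]}M>x\}$.
\item Restricted to $[0,\frac12]$, the meander has density at most $\sqrt2$ with respect to the Bessel(3) process.
\item The Bessel(3) process exceeds $x$ before time $\frac12$ with probability at most $2\,\PM(|B_{1/2}|>x)=\mathcal{O}(x\,e^{-x^2})=o(e^{-x^2/2})$, where $B$ is a three-dimensional Brownian motion and the bound follows by a half-space reflection at the hitting time.
\end{itemize}
Hence there is $x_0$ with $\PM(\theta\geq\frac12,\ M_1>x)\geq\frac12 e^{-x^2/2}$ and $\PM(\max M<x)\geq\frac12$ for all $x\geq x_0$.
\item For $x<x_0$ you do not need the symmetric argument. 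Since $\Ee_2(\mu)$ is decreasing in $\mu$, one has $\PM(\Ee_2(\mu))\geq\PM(\Ee_2(\sigma x_0))$.
\item Minor slips: $\PM(\max M<x)$ is of order one when $x$ is bounded away from $0$, not when it is bounded. The depth condition should read $l_{N''}>\mu\sqrt{N''}=x\sigma\sqrt{N''}$ rather than $l_{N''}>x\sqrt{N''}$.
\end{enumerate}
As in the paper, the resulting bound is asymptotic in $N$ for each fixed $\mu$.
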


\noindent {\bf Proof.} The argument is an extension of the proof of Lemma~\ref{lem-E2Prob}. The main supplementary insight is that one has $l_{N''}=-\tilde{W}_-$ by construction, and that $l_{N''}$ is, under the condition $l\geq0$, the end point of a meander. For a standard Brownian meander, it is hence distributed according to the Rayleigh distribution $\rho(x)=x\,e^{-\frac{x^2}{2}}\chi(x>0)$ \cite{BS} so that the tail is given by $\int_{\mu}^\infty \rho(x)\,dx=e^{-\frac{\mu^2}{2}}$. After rescaling this leads to the supplementary factor $\exp(-\frac{\mu^2}{2\sigma^2})$. For the more detailed argument, let us first rewrite the event $\Ee_2(\mu)$ in terms of $l$ and $r$:
\begin{equation}
\label{eq-Ee2bisDef2}
\Ee_2(\mu)
\;=\;
\big\{l\geq 0\mbox{ and }r\geq 0\mbox{ and }L_+>R_+\mbox{ and }\argmax(l)\geq N'''\mbox{ and }l_{N''}\geq \mu\sqrt{N''} \big\}
\;.
\end{equation}
One can again symmetrize the event as above and then condition on being a meander:
\begin{align*}
\PM(\Ee_2(\mu))
&
\;\geq\;
\frac{1}{2}\;
\PM\big(
\{l\geq 0\mbox{ and }\argmax(l)\geq N''\mbox{ and } l_{N''}>\mu\sqrt{N'}\}\big)^2
\\
&
\;=\;
\frac{1}{2}\;
\PM\big(
\{\argmax(l)\geq N''\mbox{ and } l_{N''}>\mu\sqrt{N'}\}\,\big|\,l\geq 0\big)^2
\,\PM\big(\{l\geq 0\}\big)^2
\\
&
\;\geq\;
\frac{1}{2}\;
\Big(
\PM\big(
\{\argmax(M)\geq \tfrac{1}{2}\mbox{ and } M_1>\tfrac{\mu}{\sigma}\}\big)
\,+\,o(N^{-1})
\Big)^2\,
\frac{2}{\pi}\,\frac{1}{N''}
\;,
\end{align*}
where $M=(M_t)_{t\in[0,1]}$ is the standard Brownian meander and the Iglehart-Bolthausen invariance principle was again used in the last step, together with the bound $\PM\big(\{l\geq 0\}\big)^2\geq \frac{2}{\pi}\,\frac{1}{N''}$ which was already appealed to in the proof of Lemma~\ref{lem-E2Prob}. Now the event $\{\argmax(M)\geq \tfrac{1}{2}\}$ only depends on the shape of the meander $M$, while $\{M_1>\tfrac{\mu}{\sigma}\}$ depends on the scale. Hence these two events are independent (via the Vervaat transform) so that
$$
\PM\big(
\{\argmax(M)\geq \tfrac{1}{2}\mbox{ and } M_1>\tfrac{\mu}{\sigma}\}\big)
\;=\;
\PM\big(
\{\argmax(M)\geq \tfrac{1}{2}\}\big)
\,
\PM\big(\{ M_1>\tfrac{\mu}{\sigma}\}\big)
\;.
$$
As explained above, the last factor can be computed from the Rayleigh distribution. Recollecting all facts one readily concludes the proof.
\hfill $\Box$

\subsection{Lower bound on eigenfunction}
\label{sec-LowerEigenfun}

In this section it is shown that the eigenfunction $\hat{\Phi}^{E^N_1}$ satisfies a lower bound at its approximate localization center, at least for suitable configurations. For this purpose, let us again use the right random walk $r=(r_n)_{n=1,\ldots,N''}$ constructed in the proof of Lemma~\ref{lem-E2Prob} as well as its maximum value $R_+$. Then introduce the event
\begin{equation}
\label{eq-E3Def}
\Ee_3(\eta)
\;=\;
\big\{r\geq 0\mbox{ and }R_{+}\leq\eta\sqrt{N''}\big\}
\;,
\end{equation}
where $\eta>0$ is to be chosen later. Then using again the convergence to the Brownian meander $M=(M_t)_{t\in[0,1]}$ 
$$
\PM\big(\Ee_3(\eta)\big)
\;=\;
\big(
\PM\big(\big\{M\leq\eta \big\}\big)+o(N^{-1})\big)
\PM\big(\big\{r\geq 0\big\}\big)
\;\geq\;
\big(1-c'\, e^{-\frac{\eta^2}{2\sigma^2}}+o(N^{-1})\big)
\sqrt{\frac{2}{\pi N''}}
\;,
$$
for some finite constant $c'$. Now using the representation \eqref{eq-Ee2bisDef2} and $L_+\geq l_{N''}$, one finds that for $\mu>\eta$
$$
\Ee_2(\mu)\cap\Ee_3(\eta)
=
\big\{l\geq 0\mbox{ and }\argmax(l)\geq N'''\mbox{ and }l_{N''}\geq \mu\sqrt{N''} \big\}
\cap
\big\{r\geq 0\mbox{ and }R_+\leq\eta\sqrt{N''}\big\}
\,,
$$
because the condition $L_+>R_+$ is automatically satisfied. Hence the two events on the r.h.s. are independent and one can compute the probability of $\Ee_2(\mu)\cap\Ee_3(\eta)$ as the product of the probabilities of the two sets on the r.h.s.. Therefore, still for $\mu>\eta$, and using Lemma~\ref{lem-E2bisProb}
\begin{equation}
\label{eq-E2CapE3}
\PM\big(\Ee_2(\mu)\cap\Ee_3(\eta)\big)\,
\;\geq\;
\frac{C_4}{N'}\,e^{-\frac{\mu^2}{2\sigma^2}}\,\big(1-c'\,e^{-\frac{\eta^2}{2\sigma^2}}\big)
\;.
\end{equation}
%

\begin{proposi}
\label{prop-PhiLower}
For all $\tilde{w}\in\Ee_1(\lambda)\cap\Ee_2(\mu)\cap\Ee_3(\eta)$ with $\eta<\mu$ and $N$ sufficiently large, one has
$$
(\hat{\Phi}^{E^N_1}_{N''})^2 
\;\geq\;
\frac{1}{2\lambda}
\;.
$$
\end{proposi}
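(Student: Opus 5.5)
The plan is to compare $\hat{\Phi}^{E^N_1}$ with the normalized even zero mode $\hat{\Psi}$ at the site $N''$. The comparison uses the second bound of Corollary~\ref{coro-QuotientRandomWalkBound}. The lower bound on $\hat{\Psi}$ at its localization center then comes from $\Ee_1(\lambda)$. Concretely, apply the corollary with $E=E^N_1$ and $k=N'-N''=N''$. This gives
$$
\frac{\hat{\Phi}^{E^N_1}_{N''}}{\hat{\Psi}_{N''}}
\;\geq\;
1-\epsilon_N\;,
\qquad
\epsilon_N\;=\;\frac{1}{\langle\hat{\psi}|\hat{\psi}\rangle}\sum_{p=1}^{N''}\tilde{\psi}_{N'-p+1}^2\sum_{l=1}^p\hat{\psi}_{N'-l+1}^2\;.
$$
The goal is to show $\epsilon_N\to 0$ uniformly on the event $\Ee_2(\mu)\cap\Ee_3(\eta)$. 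Then $(1-\epsilon_N)^2\geq \frac12$ for $N$ large, and hence $(\hat{\Phi}^{E^N_1}_{N''})^2\geq \frac{1}{2}\,\hat{\Psi}_{N''}^2$.

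To bound $\epsilon_N$, first use the trivial estimate $\sum_{l=1}^p\hat{\psi}_{N'-l+1}^2\leq\langle\hat{\psi}|\hat{\psi}\rangle$, which cancels the prefactor. This leaves
$$
\epsilon_N\;\leq\;\sum_{p=1}^{N''}\tilde{\psi}_{N'-p+1}^2\;.
$$
The indices $N'-p+1$ run over the sites $N''+1,\ldots,N'$, which is exactly the range of the right walk $r$. There $\tilde{\psi}_n^2=e^{2\tilde{w}_n}=e^{2(\tilde{W}_-+r_{n-N''})}$, because $\argmin(\tilde{w})=N''$ on $\Ee_2(\mu)$. On $\Ee_2(\mu)$ one has $\tilde{W}_-<-\mu\sqrt{N''}$, and on $\Ee_3(\eta)$ one has $r_n\leq R_+\leq\eta\sqrt{N''}$. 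Hence
$$
\epsilon_N\;\leq\; N''\,e^{-2(\mu-\eta)\sqrt{N''}}\;,
$$
which tends to $0$ since $\eta<\mu$. This is where the hypothesis $\eta<\mu$ and ``$N$ sufficiently large'' enter.

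It remains to bound $\hat{\Psi}_{N''}^2$ from below using $\Ee_1(\lambda)$, which gives $\hat{\Psi}_{\hat{n}}^2\geq\frac1\lambda$ with $\hat{n}=\argmax(\hat{w})$. This is the step I expect to be delicate. By \eqref{eq-HatTildeLink}, $\hat{w}_n=-\tilde{w}_n+\log(t_2)-\log(t_{2n})$. So $\argmin(\tilde{w})=N''$ only gives $\hat{w}_{N''}\geq\hat{W}_+-c$ with $c=\log(\frac{b}{a})$, rather than $\hat{n}=N''$ exactly.

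I would first try to avoid the shift altogether. One option is to read $\Ee_1(\lambda)$ directly at the site $N''$, which is the intended identification of the localization center $\argmax(\hat{w})\approx\argmin(\tilde{w})$. The other is to observe that the proof of Proposition~\ref{prop-PsiLower} via \cite{DK} works equally with $\hat{n}$ replaced by $N''$: it only needs the expectation of $\sum_n e^{2(\hat{w}_n-\hat{w}_{N''})}$ to be bounded, and this sum is at most $e^{2c}\sum_n e^{2(\hat{w}_n-\hat{W}_+)}$.

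If the bound must be obtained at $\hat{n}$, then $\hat{\Psi}_{N''}^2\geq e^{-2c}\hat{\Psi}_{\hat{n}}^2$. The resulting constant loss is absorbed by replacing $\lambda$ with $e^{2c}\lambda$ in the definition of $\Ee_1$. Combining either version with $(1-\epsilon_N)^2\geq\frac12$ yields $(\hat{\Phi}^{E^N_1}_{N''})^2\geq\frac{1}{2\lambda}$.
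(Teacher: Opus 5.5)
Your argument is correct and has the same structure as the paper's proof: the second bound of Corollary~\ref{coro-QuotientRandomWalkBound} at $k=N''$, a correction term made small by $\tilde W_-<-\mu\sqrt{N''}$ and $R_+\leq\eta\sqrt{N''}$, and $\Ee_1(\lambda)$ for the prefactor $\hat\Psi_{N''}^2$. You differ only in how you estimate the correction term. The paper bounds $\|\hat\psi\|^2\geq e^{2\hat W_+}\geq e^{-2c}e^{-2\tilde W_-}$, and each product $\tilde\psi_i^2\hat\psi_j^2$ with $i\leq j$ by $e^{2c}e^{2(\tilde w_i-\tilde w_j)}$ via \eqref{eq-HatTildeLink}. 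This gives $e^{4c}(N'')^2e^{2(\tilde W_-+\tilde\Delta_{N''})}$, with $\tilde\Delta_{N''}$ the largest drop of $\tilde w$ on the right half, and only afterwards the paper uses $\tilde\Delta_{N''}\leq R_+$. You absorb the inner sum into $\langle\hat\psi|\hat\psi\rangle$ and bound $\tilde\psi^2$ on the right half by $e^{2(\tilde W_-+R_+)}$, which yields $N''e^{-2(\mu-\eta)\sqrt{N''}}$. That is shorter, avoids the link between the two walks, and has better constants. The paper's route keeps the cancellation between $\tilde\psi$ and $\hat\psi$, so it would still work if the right walk climbed high but had no large drop. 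That flexibility is never used, since $\Ee_3(\eta)$ is imposed anyway.

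The step you flagged is a real issue, and the paper's proof passes over it. It reads $(\hat\Psi_{N''})^2\geq\frac{1}{\lambda}$ directly off the definition of $\Ee_1(\lambda)$, tacitly identifying $\argmax(\hat w)$ with $N''=\argmin(\tilde w)$. Your comparison $\hat w_{N''}\geq\hat W_+-c$ on $\{\argmin(\tilde w)=N''\}$ is the right repair, but check the direction of the constant. On $\Ee_1(\lambda)\cap\Ee_2(\mu)\cap\Ee_3(\eta)$ it gives $(\hat\Phi^{E^N_1}_{N''})^2\geq\frac{e^{-2c}}{2\lambda}$. To get $\frac{1}{2\lambda}$ you must assume $\Ee_1(e^{-2c}\lambda)$, not $\Ee_1(e^{2c}\lambda)$, which is a weaker condition.

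Your second option is not a separate argument either. The inequality $\sum_n e^{2(\hat w_n-\hat w_{N''})}\leq e^{2c}\sum_n e^{2(\hat w_n-\hat W_+)}$ holds only on $\{\argmin(\tilde w)=N''\}$. Unconditionally, the expectation of the left side grows exponentially in $N$, so the \cite{DK} argument of Proposition~\ref{prop-PsiLower} does not transfer to the site $N''$; it reduces to the same pointwise comparison. The fixed factor $e^{-2c}$ costs nothing in the application in Section~\ref{sec-RealEnergies}.
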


\noindent {\bf Proof.} Let us start out with Corollary~\ref{coro-QuotientRandomWalkBound} and use the definition of set $\Ee_1(\lambda)$:
\begin{align*}
(\hat{\Phi}^{E^N_1}_{N''})^2 
&
\;\geq\;
(\hat{\Psi}_{N''})^2
\Big(
1\,-\,\frac{1}{\|\hat{\psi}\|^2}
\sum_{p=1}^{N''}\tilde{\psi}_{N'-p+1}^2\sum_{k=1}^p\hat{\psi}_{N'-k+1}^2
\Big)^2
\\
&
\;\geq\;
\frac{1}{\lambda}
\Big(
1\,-\,\frac{1}{\|\hat{\psi}\|^2}
\sum_{p=1}^{N''}\tilde{\psi}_{N'-p+1}^2\sum_{k=1}^p\hat{\psi}_{N'-k+1}^2
\Big)^2
\;.
\end{align*}
Next note that
$$
\|\hat{\psi}\|^2\;\geq\; e^{2\hat{W}_+}\;\geq \;e^{-2c}e^{-2 \tilde{W}_-}
\;.
$$ 
Furthermore,
\begin{align*}
\sum_{p=1}^{N''}\sum_{k=1}^p\tilde{\psi}_{N'-p+1}^2 \hat{\psi}_{N'-k+1}^2
&
\;\leq\;
\,(N'')^2\,
\sup_{1\leq k\leq p\leq N''} 
\tilde{\psi}_{N'-p+1}^2 \hat{\psi}_{N'-k+1}^2
\\
&
\;\leq\;
\,(N'')^2\,
\sup_{1\leq k\leq p\leq N''} 
e^{2(\tilde{w}_{N'-p+1}+ \hat{w}_{N'-k+1})}
\\
&
\;\leq\;
\,(N'')^2\,e^{2c}
\sup_{1\leq k\leq p\leq N''} 
e^{2(\tilde{w}_{N'-p+1}- \tilde{w}_{N'-k+1})}
\\
&
\;\leq\;
\,(N'')^2\,e^{2c}\,e^{2 \tilde{\Delta}_{N''}}
\;,
\end{align*}
where $\tilde{\Delta}_l=\sup_{1\leq k\leq p\leq l}\tilde{w}_{N'-p+1}- \tilde{w}_{N'-k+1}$ is the biggest decrease of the random walk $\tilde{\psi}$ over the last $l$ sites. Then
$$
(\hat{\Phi}^{E^N_1}_{N''})^2 
\;\geq\;
\frac{1}{\lambda}
\Big(
1\,-\,
e^{2c}e^{2 \tilde{W}_-}
(N'')^2\,e^{2c}\,e^{2 \tilde{\Delta}_{N''}}
\Big)^2
\;\geq\;
\frac{1}{\lambda}
\Big(
1\,-\,
e^{4c}e^{-2\mu\sqrt{N''}}
(N'')^2\,e^{2 \tilde{\Delta}_{N''}}
\Big)^2
\;,
$$
where it was used that the configuration is taken from the set $\Ee_1(\mu)$. Now let us note that $\tilde{W}_-=w_{N''}$ only depends on the random variables in the first half, while $\tilde{\Delta}_{N''}$ only depends on the random walk on the right half. Deterministically one merely has the bound $\tilde{\Delta}_{N''}\leq \tilde{W}_+-\tilde{W}_-$ which is insufficient to show that the lower bound on $(\hat{\Phi}^{E^N_1}_{N''})^2$ stays positive. But for configurations in $\Ee_3(\eta)$, one has $\tilde{\Delta}_{N''}\leq R_+\leq \eta\sqrt{N''}$. Hence for configurations in $\Ee_2(\mu)\cap \Ee_3(\eta)$,
$$
(\hat{\Phi}^{E^N_1}_{N''})^2 
\;\geq\;
\frac{1}{\lambda}
\Big(
1\,-\,
e^{4c}e^{-2\mu\sqrt{N''}}
(N'')^2\,e^{2 \eta\sqrt{N''}}
\Big)^2
\;\geq\;
\frac{1}{2\lambda}
\;,
$$
for $N$ sufficiently large because by assumption $\mu>\eta$.
\hfill $\Box$

\section{Dynamics bound via lowest eigenstate}
\label{sec-RealEnergies}

In this section, the proof of Theorem~\ref{theo-Intro} is provided. Recall that the support of the random variables $t_n$ lies in $[a,b]$ with $a>0$ and $b<\infty$. Further recall that $M^N_q(T)$ is defined with the initial state $\phi(0)$ given by \eqref{eq-InitialState}. The chiral symmetry of $H^N$ allows to write $M^N_q(T)$ as the expectation value of a sum of two random variables involving the dynamics from initial state localized at just one site:
\begin{align*}
\tilde{m}_q^N(T)
&
\;= \;
\int_0^T \frac{dt}{T} \;
\langle N' | e^{\imath H^N t}
|X^N|^q e^{-\imath H^N t}|N'\rangle 
\;,
\\
\hat{m}_q^N(T)
&
\;= \;
\int_0^T \frac{dt}{T} \;
\langle N' +1| e^{\imath H^N t}
|X^N|^q e^{-\imath H^N t}|N'+1\rangle 
\;.
\end{align*}

\begin{lemma} 
\label{lem-BCH}
One has $M_q^N(T)=\frac{1}{2}\,\EM(\tilde{m}_q^N(T)+\hat{m}_q^N(T))$.
\end{lemma}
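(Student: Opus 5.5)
The plan is to expand the quadratic form defining $M^N_q(T)$ in the initial state \eqref{eq-InitialState} and to show that the mixed term vanishes identically, for every $t$ and every configuration, by combining the chiral symmetry with the reality of $H^N$.

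\textbf{Step 1 (expansion).} Inserting $\phi(0)=\frac{1}{\sqrt2}(|N'\rangle+|N'+1\rangle)$ into $\||X^N|^{\frac q2}\phi(t)\|^2$ gives
$$
\tfrac12\Big(\langle N'|e^{\imath H^Nt}|X^N|^qe^{-\imath H^Nt}|N'\rangle+\langle N'+1|e^{\imath H^Nt}|X^N|^qe^{-\imath H^Nt}|N'+1\rangle\Big)\,+\,\mbox{Re}\, c(t)\;,
$$
where $c(t)=\langle N'|e^{\imath H^Nt}|X^N|^qe^{-\imath H^Nt}|N'+1\rangle$. After time averaging and taking the expectation, the first two terms give $\frac12\EM(\tilde m^N_q(T)+\hat m^N_q(T))$. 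It therefore suffices to prove $\mbox{Re}\,c(t)=0$.

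\textbf{Step 2 (chiral symmetry).} The operator $J^N$ is diagonal in the position basis, so it commutes with $|X^N|^q$. From $J^NH^NJ^N=-H^N$ one gets $J^Ne^{\mp\imath H^Nt}J^N=e^{\pm\imath H^Nt}$. Since $N'$ and $N'+1$ have opposite parity, $J^N|N'\rangle=s|N'\rangle$ and $J^N|N'+1\rangle=-s|N'+1\rangle$ for some sign $s\in\{\pm1\}$. Inserting $J^N J^N=\one$ in three places yields
$$
c(t)\;=\;-\,\langle N'|e^{-\imath H^Nt}|X^N|^qe^{\imath H^Nt}|N'+1\rangle\;=\;-\,c(-t)\;.
$$

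\textbf{Step 3 (reality).} The matrices $H^N$ and $|X^N|^q$ are real and symmetric, and the basis vectors are real. Hence complex conjugation replaces $e^{\pm\imath H^Nt}$ by $e^{\mp\imath H^Nt}$ and leaves everything else unchanged, so $\overline{c(t)}=c(-t)$. Combining this with Step 2 gives $\overline{c(t)}=-c(t)$. Thus $c(t)$ is purely imaginary and $\mbox{Re}\,c(t)=0$ for all $t$ and all configurations, which proves the lemma.

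There is no real obstacle here. The only point needing care is the sign bookkeeping in Step 2: one must use that the two initial sites carry opposite $J^N$-eigenvalues, since equal eigenvalues would produce $c(t)=+c(-t)$ and the argument would fail.
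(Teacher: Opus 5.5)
Your proof is correct and is essentially the paper's argument. Both rely on the chiral symmetry $J^NH^NJ^N=-H^N$, together with $[J^N,|X^N|^q]=0$, to remove the part of the cross term that is even in $t$, and on the reality of $H^N$ and $|X^N|^q$ to make the remaining part purely imaginary. You encode these facts directly on the propagator as $c(t)=-c(-t)$ and $\overline{c(t)}=c(-t)$, whereas the paper expands $e^{\imath H^Nt}|X^N|^qe^{-\imath H^Nt}$ in iterated commutators and sorts the terms by the parity of their order; your version avoids the series expansion and loses nothing.
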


\noindent {\bf Proof:} Computing $M_q^N(T)$ gives, with $Y=|X^N|^q$,
$$
M_q(T)
\;=\;
\frac{1}{2}\,\EM\big(\tilde{m}_q^N(T)+\hat{m}_q^N(T)\big)
\,+\,
\EM\int_0^T\frac{dt}{T}\;\Re e\;\langle N'|Y(t)|N'+1\rangle
\;.
$$
To check that the last summand vanishes, let us use the (norm convergent) Baker-Campbell-Haussdorff formula in the form
$$
Y(t)
\;=\;
e^{\imath H^Nt}Ye^{-\imath H^Nt}
\;=\;
\sum_{n=0}^{\infty}\frac{(\imath t)^n}{n!}\,
[H^N,Y]_n
\;,
$$
where the iterated commutators are defined by
$$
[ H^N,Y]_n
\;=\;
[H^N,[ H^N,Y]_{n-1}]
\;,
\qquad
[ H^N,Y]_0
\;=\;
Y
\;.
$$
Replacing this sum in the above, one realizes that odd $n$ gives purely imaginary contributions which vanish when taking the real part. As to the summands with even $n$, they commute with $J^N$ because $H^N$ is odd and $Y=|X^N|^q$ is even w.r.t. $J$. But $J^N[ H^N,Y]_n J^N=[ H^N,Y]_n$ immediately implies that the matrix elements between even and odd sites vanish. 
\hfill $\Box$

\vspace{.2cm}

Let us next focus on proving a lower bound on $\tilde{m}_q^N(T)$. Suppose given a positive $C^{\infty}$-function $g\leq 1$ supported by $[0,1]$ satisfying $\int dx\,g(x) = \frac12$. Then for any $L\le N'$
\begin{align*}
\tilde{m}_q^N(T)
&
\;\geq\;
\sum_{|n-N'|> L} n^q
\int_0^T \frac{dt}{T} \;g\left(\frac{t}{T}\right)\;
\langle N' | e^{\imath H^N t}|n\rangle\langle n|e^{-\imath H^N t}|N'\rangle
\\
& \;> \;
L^q \int_0^T \frac{dt}{T} \;g\left(\frac{t}{T}\right)\,
 \sum_{|n-N'|> L} | \langle n| e^{-\imath H^N t} |N'\rangle|^2
\\
& 
\;= \;
L^q \int_0^T \frac{dt}{T} \;g\left(\frac{t}{T}\right)\,
 \left(1-\sum_{|n-N'|\le L} | \langle n| e^{-\imath H^N t} |N'\rangle|^2
\right)
\\
& 
\;= \;
L^q \int_0^T \frac{dt}{T} \;g\left(\frac{t}{T}\right)\,
 \left(1- \langle N'|e^{\imath H^N t}\Pi^Le^{-\imath H^N t} |N'\rangle
\right)
\;,
\end{align*}
where $\Pi^L=\sum_{|n-N'|\le L}|n\rangle\langle n|$ is the projection onto the central part $[N'-L,N'+L]\cap\ZM$ of the sample of size $N\geq 2L$. Next let $I$ be some energy interval centered at $0$. Later on, it will be chosen for each realization and thus depend on the disorder configuration. Let $I^c = \RM \setminus I$ denote the complementary set. Furthermore, let us set $\chi_{I}=\chi_{I}(H^N)$ and $\Pi^L(t)=e^{\imath H^N t}\Pi^Le^{-\imath H^N t}$. Splitting of the wave packet leads to
\begin{align*}
\tilde{m}_q^N(T) 
& 
\;> \; 
L^q \int_0^T \frac{dt}{T} \;g\left(\frac{t}{T}\right)
\,\Big( 
\langle N'| (\chi_{I}+ \chi_{I^c})|N' \rangle  
- 
\langle N'| (\chi_{I}+\chi_{I^c})\Pi^L(t)(\chi_{I}+\chi_{I^c})|N' \rangle
\Big)
\\
&
\;\geq\;
L^q \int_0^T \frac{dt}{T} \;g\left(\frac{t}{T}\right)
\,\Big( 
\langle N'| \chi_{I}|N' \rangle  
- 
\langle N'| \chi_{I}  \Pi^L(t)\chi_{I}|N' \rangle
-2\,\Re e \langle N'| \chi_{I} \Pi^L(t)\chi_{I^c}|N' \rangle
\Big)\;,
\end{align*}
because
$$
\langle N'| \chi_{I^c}|N' \rangle  
- 
\langle N'| \chi_{I^c}  \Pi^L(t)\chi_{I^c}|N' \rangle
\;=\;
\langle N'| \chi_{I^c} (\one- \Pi^L)(t)\chi_{I^c}|N' \rangle
\;\geq\;0
\;.
$$
The strategy in the following is to use time oscillations to bound $\langle N'| \chi_{I} \Pi^L(t)\chi_{I^c}|N' \rangle$ so that $\langle N'| \chi_{I}|N' \rangle  - \langle N'| \chi_{I}  \Pi^L(t)\chi_{I}|N' \rangle$ is the dominant positive term. For the latter purpose, choose $I$ sufficiently small so that it contains merely the smallest positive eigenvalue $E^N_1$, as then, with $\Phi^{E^N_1}$ denoting the corresponding normalized eigenstate,
\begin{align*}
\langle N'| \chi_{I}|N'\rangle 
- 
\langle N'| \chi_{I}  \Pi^L(t)\chi_{I}|N'\rangle
&
\;=\;
\langle N'| \chi_{I} \chi_{I}|N'\rangle 
- 
\langle N'| \chi_{I} e^{\imath E^N_1 t} \Pi^Le^{-\imath E^N_1 t}\chi_{I}|N'\rangle
\\
&
\;=\;
\langle N'| \chi_{I} (\one-\Pi^L)\chi_{I}|N'\rangle 
\\
&
\;=\;
|\langle N'|\Phi^{E^N_1}\rangle|^2
\langle \Phi^{E^N_1}|(\one-\Pi^L)|\Phi^{E^N_1}\rangle
\;.
\end{align*}
Further down it will be argued that the first factor $|\langle N'|\Phi^{E^N_1}\rangle|^2$ is large with a probability of order $\frac{1}{N}$ (due to Sparre-Anderson), and the second factor is large due to the high probability of separation between the extrema of the random walk for $L=\Oo(N)$. Of course, the two factors are not independent and a careful probabilistic analysis is needed. Note also that $I$ is still random (and has to be chosen such that there is just $E_1^N$ in it), and also the eigenfunctions are random at this point.

\vspace{.2cm}

For the treatment of the oscillatory contribution to $\tilde{m}_q^N(T)$,  let us next set
$$
A_{T,L}(I)
\;= \;
\int_0^T \frac{dt}{T}\; g\left(\frac{t}{T}\right) \,\langle N'|  \chi_I\Pi^L(t) \chi_{I^c}|N'\rangle
\;.
$$
Then the above becomes
\begin{align}
\tilde{m}_q^N(T)
& \;>\;
\frac{1}{2}\,L^q\, \Big( 
|\langle N'|\Phi^{E^N_1}\rangle|^2
\langle \Phi^{E^N_1}|(\one-\Pi^L)|\Phi^{E^N_1}\rangle
- 4\,|A_{T,L}(I)| \Big)
\;.
\label{eq-6.3}
\end{align}
This is a deterministic bound, and also the following estimate holds for every realization.

\begin{lemma} \label{Lemma61}
Suppose $I$ only contains one eigenvalue $E^N_1$ and let $p\in\NM$.  Denoting the $p$-th derivative of $g$ by $g^{(p)}$, one has
\begin{equation} 
\label{eq-6.4}
|A_{T,L}(I)|
\;\le \;
2\,L\,\max\big\{1,2\|g^{(p)}\|_{\infty}\big\}\,\frac{1}{1\,+\,\big(T (E_2^N-E^N_1)\big)^{p} }
\;.
\end{equation}
\end{lemma}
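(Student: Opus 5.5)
We expand $A_{T,L}(I)$ in the eigenbasis of $H^N$ and exploit the time oscillations through repeated integration by parts. Since $I$ contains only $E^N_1$, one has $\chi_I=|\Phi^{E^N_1}\rangle\langle\Phi^{E^N_1}|$, while $\chi_{I^c}=\sum_{m\neq 1}|\Phi^{E^N_m}\rangle\langle\Phi^{E^N_m}|$, where the sum runs over all other eigenvalues (including the negative ones). Inserting $\Pi^L=\sum_{|n-N'|\le L}|n\rangle\langle n|$ gives
$$
A_{T,L}(I)
\;=\;
\sum_{m\neq 1}\sum_{|n-N'|\le L}
\langle N'|\Phi^{E^N_1}\rangle\langle\Phi^{E^N_1}|n\rangle\langle n|\Phi^{E^N_m}\rangle\langle\Phi^{E^N_m}|N'\rangle
\int_0^T\frac{dt}{T}\,g\Big(\frac{t}{T}\Big)\,e^{\imath(E^N_1-E^N_m)t}
\;.
$$
The time integral equals $\hat{g}\big(T(E^N_m-E^N_1)\big)$ with $\hat g(\omega)=\int_0^1 dx\,g(x)e^{-\imath \omega x}$.

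We bound $\hat g$ in two ways. First, trivially, $|\hat g(\omega)|\le\int g=\frac12\le 1$. Second, since $g$ is $C^\infty$ and supported in $[0,1]$, all its derivatives vanish at the endpoints (we take this as the intended meaning of compact support in $[0,1]$). Integrating by parts $p$ times therefore gives $|\hat g(\omega)|\le\|g^{(p)}\|_\infty|\omega|^{-p}$. Combining the two bounds yields
$$
|\hat g(\omega)|\;\le\;\min\{1,\|g^{(p)}\|_\infty|\omega|^{-p}\}\;\le\;\frac{2\max\{1,\|g^{(p)}\|_\infty\}}{1+|\omega|^p}\;,
$$
which is accommodated by the constant $\max\{1,2\|g^{(p)}\|_\infty\}$ of the statement up to the prefactor bookkeeping.

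Next we need $|E^N_m-E^N_1|\ge E^N_2-E^N_1$ for all $m\neq1$. For positive eigenvalues this holds by ordering. For the negative ones, $E^N_1-E^N_m\ge E^N_1+E^N_1$, which could in principle be smaller than $E^N_2-E^N_1$; this is the \emph{main obstacle}. The way around it is the chiral symmetry. Write the matrix element against $|N'\rangle$: because $\Phi^{-E}=J^N\Phi^{E}$ and $\langle N'|J^N=\pm\langle N'|$, the contribution from $-E^N_1$ carries the factor
$$
\langle n|\Phi^{-E^N_1}\rangle\langle\Phi^{-E^N_1}|N'\rangle\;=\;\pm(-1)^{n-N'}\langle n|\Phi^{E^N_1}\rangle\langle\Phi^{E^N_1}|N'\rangle\;.
$$
We would then try to show that this term either cancels in pairs over $n$ or, more simply, is absorbed by enlarging $I$ symmetrically to contain $\pm E^N_1$, which is consistent with $I$ being centered at $0$. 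If neither works, we fall back on the bound $2E^N_1\ge 1/\|K^0_N\|$ from \eqref{eq-E1K}. Combined with Proposition~\ref{prop-LowerE2-1}, this gives that $2E^N_1$ dominates the gap lower bound, so that $E^N_1-E^N_m\gtrsim E^N_2-E^N_1$ for every $m\neq 1$ up to constants.

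Finally, the remaining sum over $m$ and $n$ is controlled by Cauchy--Schwarz. For fixed $n$,
$$
\sum_{m}\big|\langle n|\Phi^{E^N_m}\rangle\langle\Phi^{E^N_m}|N'\rangle\big|\;\le\;\|n\|\,\|N'\|\;=\;1
$$
by completeness of the orthonormal eigenbasis, and $|\langle N'|\Phi^{E^N_1}\rangle\langle\Phi^{E^N_1}|n\rangle|\le 1$. Summing over the at most $2L+1$ sites $n$ with $|n-N'|\le L$ produces a factor of order $2L$. Multiplying this by the uniform bound on $\hat g\big(T(E^N_m-E^N_1)\big)$ then yields \eqref{eq-6.4}.
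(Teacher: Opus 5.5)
Your skeleton is the paper's argument: eigen-expansion of $A_{T,L}(I)$, $p$-fold integration by parts, interpolation with the trivial bound via $\min\{1,1/x\}\le 2/(1+x)$, and a Cauchy--Schwarz count over the window. (The paper phrases the last step with spectral measures and $B_L(I)^{1/2}B_L(I^c)^{1/2}$, which is the same estimate.) The genuine gap is exactly the step you call the main obstacle. The eigenvalue $-E^N_1$ lies in $I^c$ and has $|E^N_1-E'|=2E^N_1$, and none of your three escapes works.

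(i) Its contribution is $\hat g(-2TE^N_1)\,(\Phi^{E^N_1}_{N'})^2\sum_{|n-N'|\le L}(-1)^{n-N'}(\Phi^{E^N_1}_n)^2$. The alternating sum is the signed difference of even and odd weight \emph{inside the window}. It vanishes only over the whole chain. In the configurations used in Section~\ref{sec-RealEnergies} (odd part near $\tilde n\le N'''$, even part near $N'$) it is close to $\frac12$ in modulus, so there is no cancellation.

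(ii) Putting $-E^N_1$ into $I$ contradicts the hypothesis. It does make $|E-E'|\ge E^N_2-E^N_1$ for all $E\in I$, $E'\in I^c$. But then $\chi_I$ has rank two, and the identity for the main term before \eqref{eq-6.3} acquires cross terms oscillating like $e^{\pm 2\imath E^N_1 t}$. The difficulty is only displaced.

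(iii) The fallback is a non sequitur. \eqref{eq-E1K} and Proposition~\ref{prop-LowerE2-1} are two \emph{lower} bounds, and the first dominating the second says nothing about $2E^N_1$ versus $E^N_2-E^N_1$. Typically $E^N_2/E^N_1$ is of order $e^{c\sqrt{N}}$, cf.\ \eqref{eq-EnergyBehav}.

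In fact \eqref{eq-6.4} is false as stated, so no argument can close this. Take $t_{2n}=a$, $t_{2n+1}=b$ for $n<N''$ and $t_{2n}=b$, $t_{2n+1}=a$ for $n\ge N''$. Then $\Phi^{E^N_1}$ has its odd component at the left edge and its even component at the domain wall near site $N'$. $E^N_1$ is exponentially small in $N$, while $E^N_2$ stays of order $b-a$. At $T=1/(2E^N_1)$ the $-E^N_1$ term makes $|A_{T,L}(I)|$ of order one, whereas the right-hand side of \eqref{eq-6.4} is exponentially small.

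The paper's proof has the same oversight: it asserts $E'\ge E^N_2$ for $E'\in I^c$. That is harmless for $E'=-E^N_m$ with $m\ge 2$, but not for $-E^N_1$.

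What your argument does prove is \eqref{eq-6.4} with $E^N_2-E^N_1$ replaced by $\min\{2E^N_1,E^N_2-E^N_1\}$. Your item (iii), read correctly, shows this costs nothing. By \eqref{eq-E1K} and \eqref{eq-K0Bound} one has $2E^N_1\ge \frac{a}{N}m\ge \frac{a^2}{2bN^2}m^2$, with $m$ the minimum in Proposition~\ref{prop-LowerE2-1}. Hence Lemma~\ref{lem-AverageA}, and with it Theorem~\ref{theo-Intro}, go through unchanged.

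Two minor points. Using $|\hat g|\le \frac12$ instead of $1$ gives exactly the constant $\max\{1,2\|g^{(p)}\|_\infty\}$. The window has $2L+1$ sites, an off-by-one the paper shares.
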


\noindent {\bf Proof:} 
Let us introduce the notation
$$
B_L(I)
\;=\;
\sum_{|n-N'|\le L} \langle N'| \chi_I(H^N) |N'\rangle
\,
\langle n| \chi_{I^c}(H^N)|n \rangle
\;.
$$
Define the complex spectral measures
$\mu_{l,k}$ by $\int\,d\mu_{l,k}(E) \,f(E) = \langle l|f(H^N)|
k\rangle$ for $f\in C_0(\RM)$, and write $\mu_k$ for $\mu_{k,k}$.
Then
\begin{equation}
\label{eq-6.5}
A_{T,L}(I)
\;=\;
\sum_{|n-N'|\le L} \int_I d\mu_{0,n}(E) \int_{I^c}
d\mu_{n,0}(E') \int_0^T \frac{dt}{T}\; g\left(\frac{t}{T}\right)
\;e^{\imath (E-E')t}
\;.
\end{equation}
Integrating by parts $p$ times gives
$$
\int_0^T
\frac{dt}{T}\; g\left(\frac{t}{T}\right) \;e^{\imath (E-E')t}
\;=
\;(-\imath T(E-E'))^{-p} \int_0^T \frac{dt}{T}
\;g^{(p)}\left(\frac{t}{T}\right) \;e^{\imath(E-E')t}
\mbox{ , }
$$
and thus, as $E=E^N_1$ and $E'\geq E^N_2$,
\begin{equation}
\label{eq-6.6}
|A_{T,L}(I)|
\;\leq\;
\min \Big\{
\frac{1}{2}\,,\, (T\cdot
(E^N_2-E^N_1))^{-p} \|g^{(p)}\|_{\infty} \Big\}
\sum_{|n-N'|\le L}
|\mu_{0,n}|(I) |\mu_{n,0}|(I^c)
\mbox{ , }
\end{equation}
where $|\mu_{j,k}|$ is the total variation of
$\mu_{j,k}$. For every Borel set $\Delta$, $|\mu_{j,k}|(\Delta) \le
\mu_j(\Delta)^{\frac{1}{2}} \mu_k(\Delta)^{\frac{1}{2}}$. Thus the Cauchy-Schwarz
inequality implies
\begin{align*}
\sum_{|n-N'|\le L} |\mu_{0,n}|(I) |\mu_{n,0}|(I^c)
\;\leq\; 
B_L(I)^{\frac{1}{2}} B_L(I^c)^{\frac{1}{2}}
\;.
\end{align*} 
Now using the rough bounds $\langle n| \chi_{I}(H^N)|n \rangle\leq 1$ and $\langle m| \chi_{I^c}(H^N)|m \rangle\leq 1$ for all $n,m$ to deduce
$$
|B_L(I)|
\;\leq \;
\sum_{|n-N'|\le L} 1
\;\leq \;
2L
\;,
\qquad
|B_L(I^c)|
\;\leq \;
2L
\;,
$$
one gets from (\ref{eq-6.6}).
$$
|A_{T,L}(I)|
\;\le \;
\min \left\{ \frac{1}{2}\,,\, \big(T
(E_2^N-E^N_1)\big)^{-p} \|g^{(p)}\|_{\infty} \right\}
\;2L
\;.
$$
Next one can use the bound
$$
\min\Big\{1,\frac{1}{x}\Big\}\,\leq\,\frac{2}{1+x}\;,
\qquad
x\geq 0
\;,
$$
to deduce 
$$
|A_{T,L}(I)|
\;\le \;
2\,L\, \frac{1}{1\,+\,
(E_2^N-E^N_1)^p\, T^p\, \frac{1}{2\|g^{(p)}\|_{\infty}} } 
\;,
$$
and finally, due to  $\frac{1}{1+\frac{x}{a}}\leq \max\{1,a\}\frac{1}{1+x}$ for $a>0$, the bound \eqref{eq-6.4} follows.
\hfill
\qed

\vspace{.2cm}

Replacing \eqref{eq-6.4} for $p=2$ into \eqref{eq-6.3} shows
\begin{align}
\tilde{m}_q^N(T)
& \;>\;
\frac{1}{2}\,L^q\, \left(
|\langle N'|\Phi^{E^N_1}\rangle|^2
\langle \Phi^{E^N_1}|(\one-\Pi^L)|\Phi^{E^N_1}\rangle
- 
C_5\,L\,\frac{1}{1\,+\,\big(T (E_2^N-E^N_1)\big)^{2} } \right)
\;,
\label{eq-MdetBound}
\end{align}
for $C_5=8\,\max\{1,2\|g^{(2)}\|_{\infty}\}$. The above above analysis directly carries over to $\hat{m}^N_q(T)$ providing the same lower bound \eqref{eq-MdetBound} with the state $|N'\rangle$ replaced by $|N'+1\rangle$. 

\vspace{.2cm}

Next let us condition on the set $\Ee_0$ which is overwhelming probability by \eqref{eq-Kolmogorov}. Due to Lemma~\ref{lem-BCH} and because the random variable is positive,
$$
M_q^N(T)
\;\geq\;
\frac{1}{2}\,\EM\big(\tilde{m}_q^N(T)+\hat{m}_q^N(T)\big|\Ee_0\big)
\;.
$$
Now the deterministic estimate \eqref{eq-MdetBound} and its counterpart for $\hat{m}^N_q(T)$ are replaced. Let us next focus on the second summand  in  \eqref{eq-MdetBound}, which actually also appears in $\hat{m}^N_q(T)$.

\begin{lemma} 
\label{lem-AverageA}
$$
\EM\Big(\frac{1}{1\,+\,T^2 (E_2^N-E^N_1)^{2} }\,\Big|\,\Ee_0\Big) 
\;\leq\;
\frac{1}{T^2}\;\frac{4b^6N^4}{a^8}\;e^{4\sigma N^{\frac{1}{2}+\delta}}
\;.
$$
\end{lemma}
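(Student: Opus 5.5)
The plan is to bound the integrand pointwise on $\Ee_0$ using the deterministic gap estimate, so that no genuine probabilistic input is needed beyond the definition of $\Ee_0$. First I use the elementary inequality
$$
\frac{1}{1+T^2(E^N_2-E^N_1)^2}
\;\leq\;
\frac{1}{T^2(E^N_2-E^N_1)^2}\;,
$$
valid whenever $E^N_2>E^N_1$, which holds almost surely by simplicity of the spectrum.

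Second, I invoke the lower bound on the gap from Proposition~\ref{prop-EnergyBoundOBC}, namely
$$
E^N_2-E^N_1\;\geq\;\frac{a^4}{2b^3N^2}\,e^{-2(\tilde{W}_+-\tilde{W}_-)}\;.
$$
On $\Ee_0$ one has $\tilde{W}_+\leq\sigma(N')^{\frac12+\delta}$ and $|\tilde{W}_-|\leq \sigma(N')^{\frac12+\delta}$. Hence $\tilde{W}_+-\tilde{W}_-\leq 2\sigma (N')^{\frac12+\delta}\leq 2\sigma N^{\frac12+\delta}$, and therefore
$$
E^N_2-E^N_1\;\geq\;\frac{a^4}{2b^3N^2}\,e^{-4\sigma N^{\frac12+\delta}}\;.
$$
This is exactly the second statement of Corollary~\ref{coro-EnergyBoundOBC}.

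Squaring and inverting gives, pointwise on $\Ee_0$,
$$
\frac{1}{1+T^2(E^N_2-E^N_1)^2}\;\leq\;\frac{1}{T^2}\,\frac{4b^6N^4}{a^8}\,e^{8\sigma N^{\frac12+\delta}}\;.
$$
Taking the conditional expectation then preserves this deterministic bound.

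The main obstacle is the exponent. Squaring the gap bound naively produces $e^{8\sigma N^{\frac12+\delta}}$, while the statement claims $e^{4\sigma N^{\frac12+\delta}}$. To close this gap I would sharpen the estimate of $\tilde{W}_+-\tilde{W}_-$ on $\Ee_0$ by exploiting $N'=N/2$, which gives $2\sigma(N')^{\frac12+\delta}=2^{\frac12-\delta}\sigma N^{\frac12+\delta}$. If that is still insufficient, I would re-examine the proof of Proposition~\ref{prop-LowerE2-1}, where the minimum of quotients enters squared but could plausibly be bounded by $e^{-(\tilde{W}_+-\tilde{W}_-)}$ only once. Failing that, I would accept the constant in the exponent as a convention of the paper, since only the form $T^{-2}e^{C N^{\frac12+\delta}}$ matters for the later choice $T\geq e^{c_1N^\alpha}$ with $\alpha>\frac12+\delta$.
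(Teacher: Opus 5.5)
Your argument is the paper's argument: drop the $1$ in the denominator, insert the gap bound of Proposition~\ref{prop-EnergyBoundOBC}, and control $\tilde W_+-\tilde W_-$ on $\Ee_0$. The discrepancy you flag is real, and the paper's proof contains the same slip. It asserts that the definition of $\Ee_0$ implies $\EM\big(e^{4(\tilde W_+-\tilde W_-)}\,\big|\,\Ee_0\big)\le e^{4\sigma N^{\frac12+\delta}}$. That definition only gives $4(\tilde W_+-\tilde W_-)\le 8\sigma(N')^{\frac12+\delta}=2^{\frac52-\delta}\sigma N^{\frac12+\delta}$.

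Your first repair is exactly this computation. Since $2^{\frac52-\delta}\le 4$ only for $\delta\ge\frac12$, it misses the stated constant in the relevant regime of small $\delta$.

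Averaging over $\Ee_0$ does not help either. By moderate deviations, configurations in $\Ee_0$ with $\tilde W_+\approx|\tilde W_-|\approx\sigma(N')^{\frac12+\delta}$ have probability at least $e^{-CN^{2\delta}}$. Since $N^{2\delta}\ll N^{\frac12+\delta}$, that intermediate inequality is actually false for $\delta<\frac12$ and large $N$.

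Your second repair would need a genuinely new estimate, not bookkeeping. In the proof of Proposition~\ref{prop-LowerE2-1}, the components of $\phi^{\epsilon,l}$ not covered by the upper bound in \eqref{eq-LowestStateBound1} are estimated through the recurrence by $t\,|\tilde\psi^l_k|/\epsilon$. The bound $1/\epsilon=1/E^N_1\le 2\|K^0_N\|$ then brings in a second factor of the maximum of zero-mode quotients, i.e.\ a second $e^{\tilde W_+-\tilde W_-}$. Nothing in the paper removes it.

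So what you, and the paper, actually prove is the lemma with $e^{2^{\frac52-\delta}\sigma N^{\frac12+\delta}}$ (or, more crudely, $e^{8\sigma N^{\frac12+\delta}}$) in place of $e^{4\sigma N^{\frac12+\delta}}$. Your third option is the correct resolution, and it is harmless downstream. In the proof of Theorem~\ref{theo-Intro}, the threshold $a^4T\ge 2b^3e^{(2+\delta')\sigma N^{\alpha}}$ must be raised to $a^4T\ge 2b^3e^{(2^{\frac32-\delta}+\delta')\sigma N^{\alpha}}$, or to $(4+\delta')$ with the crude constant. The corresponding upper threshold must be raised as well. The error term remains $\mathcal{O}(N^{q+5}e^{-2\delta'\sigma N^{\alpha}})$, and only the constants of the time window and of $A_{q,\alpha}$ change.
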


\noindent {\bf Proof:} Due to Proposition~\ref{prop-EnergyBoundOBC} one has
$$
\EM\left(\frac{1}{(E^N_2-E^N_1)^2}\,\Big|\,\Ee_0\right)
\;\leq\;
\frac{4b^6N^4}{a^8}
\EM\big(e^{4(\tilde{W}_+-\tilde{W}_-)}\,\big|\,\Ee_0\big)
$$
Now the definition of the event $\Ee_0$ implies
$$
\EM\big(e^{4(\tilde{W}_+-\tilde{W}_-)}\,\big|\,\Ee_0\big)
\;\leq\; 
e^{4\sigma N^{\frac{1}{2}+\delta}}
\;.
$$
This directly gives the bound.
\hfill $\Box$

\vspace{.2cm}

Replacing this, one gets
\begin{align*}
M_q^N(T)
\;\geq\;&
\frac{1}{2}
\, 
\,\EM\Big(L^q
\big(|\langle N'|\Phi^{E^N_1}\rangle|^2+|\langle N'+1|\Phi^{E^N_1}\rangle|^2\big)
\langle \Phi^{E^N_1}|(\one-\Pi^L)|\Phi^{E^N_1}\rangle\,
\Big|\,\Ee_0\Big)
\\
&
\hspace{1cm}
-C_5\,L^{q+1}\,\frac{1}{T^2}\;\frac{4b^6N^4}{a^8}\;e^{4\sigma N^{\frac{1}{2}+\delta}}
\;.
\end{align*}
The time $T$ will later on be chosen such that the error term is small. Furthermore, let us choose $L=\frac{N}{4}$ from now on and recall $N''=\frac{N'}{2}$ as well as $N'''=\frac{N''}{2}$. In the last expression, let us now first focus on the following of the two positive summands:
$$
Q_N\;=\;
\EM\Big(
|\langle N'+1|\Phi^{E^N_1}\rangle|^2 \langle \Phi^{E^N_1}|(\one-\Pi^L)|\Phi^{E^N_1}\rangle\,
\Big|\,\Ee_0\Big)
\;.
$$
Then
\begin{align*}
Q_N
&
\;=\;
\EM\Big(
(\hat{\Phi}^{E^N_1}_{N''})^2 
\sum_{l=1}^{N'''}\,(\tilde{\Phi}^{E^N_1}_l)^2
\,\Big|\,\Ee_0\Big)
\;\geq\;
\sum_{l=1}^{N'''}\,
\EM\Big(
(\hat{\Phi}^{E^N_1}_{N''})^2 
(\tilde{\Psi}_l)^2\,
\Big|\,\Ee_0\Big)
\;\geq\;
\EM\Big(
(\hat{\Phi}^{E^N_1}_{N''})^2 
(\tilde{\Psi}_{\tilde{n}})^2
\,\Big|\,\Ee_0\Big)
\;,
\end{align*}
where the first inequality follows from Proposition~\ref{prop-PhiPsiBound}. Next let us consider the events $\Ee_1(\lambda)$, $\Ee_2(\mu)$ and $\Ee_3(\eta)$ introduced in \eqref{eq-Ee1Def}, \eqref{eq-Ee2bisDef} and \eqref{eq-E3Def} respectively. Also conditioning on these events for $\eta<\mu$ and setting 
$$
\Ee(\lambda,\mu,\eta)
\;=\;
\Ee_0\cap\Ee_1(\lambda)\cap\Ee_2(\mu)\cap\Ee_3(\eta)
\;,
$$
one has after appealing to Proposition~\ref{prop-PhiLower}
$$
Q_N
\;\geq\;
\EM\Big(
(\hat{\Phi}^{E^N_1}_{N''})^2 
(\tilde{\Psi}_{\tilde{n}})^2
\,\Big|\,\Ee(\lambda,\mu,\eta)\Big)
\;\geq\;
\frac{1}{\lambda}\;
\EM\Big(
(\hat{\Phi}^{E^N_1}_{N''})^2 
\,\Big|\,\Ee(\lambda,\mu,\eta)\Big)
\;\geq\;
\frac{1}{2\lambda^2}\;
\PM\big(\Ee(\lambda,\mu,\eta)\big)
\;.
$$
Now both $\Ee_0$ and $\Ee_1(\lambda)$ are of overwhelming probability for $\lambda$ sufficiently large, and the probability of $\Ee_2(\mu)\cap\Ee_3(\eta)$ is bounded from below in \eqref{eq-E2CapE3}.

\vspace{0.2cm}

\begin{figure}
\includegraphics[width=0.49\textwidth]{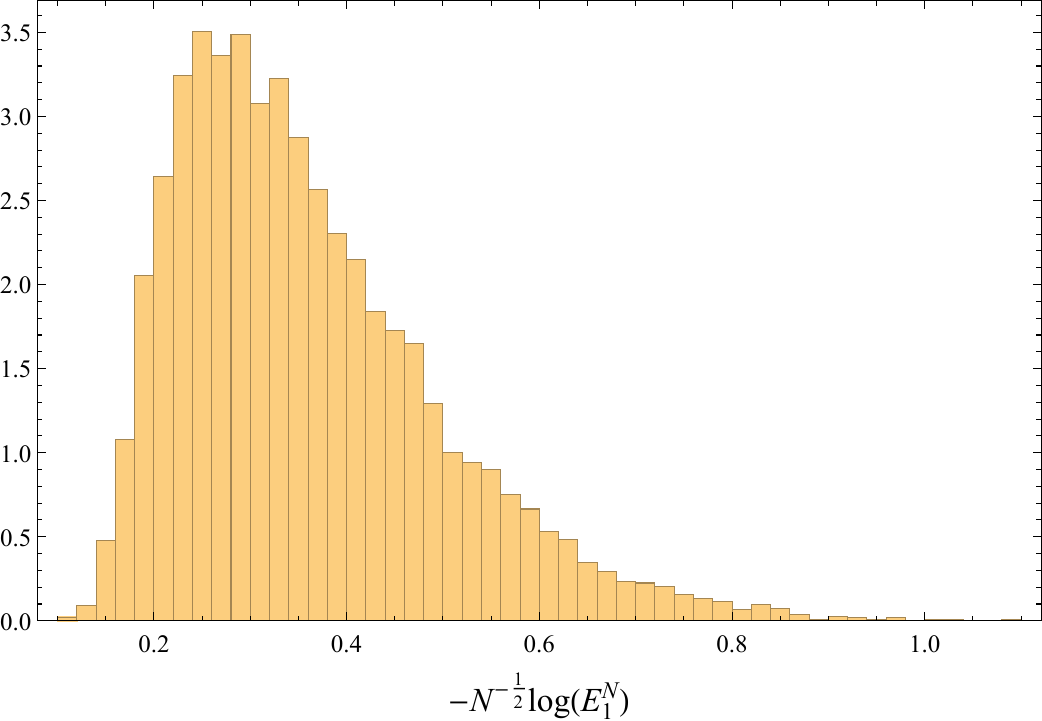}
\includegraphics[width=0.49\textwidth]{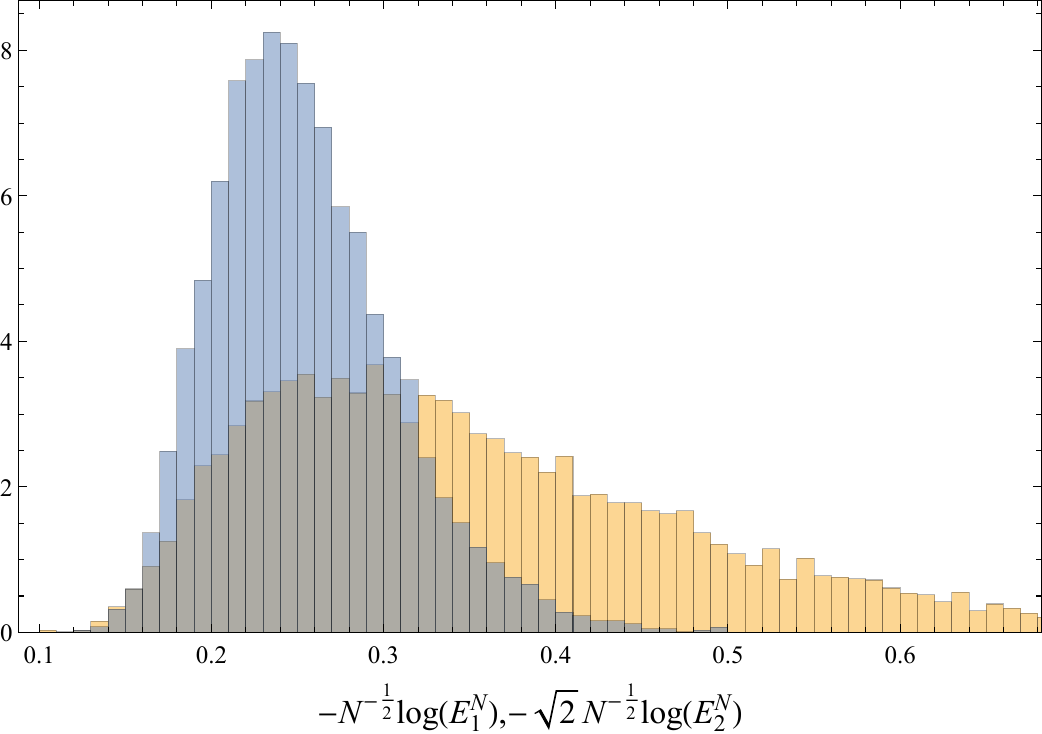}
\caption{\textit{The first histogram shows the suitably scaled logarithm of the lowest energy $E^N_1$ for $N=1000$ and $10000$ samples. The distribution of the hopping parameters are drawn uniformly from $[1.7,4.5]$. The second plot shows also shows the distribution of the second eigenvalue $E^N_2$ {\rm (}in blue{\rm )}, taking into account the factor $\sqrt{2}$ in \eqref{eq-EnergyBehav}.
\label{fig:wrapfig}
}}
\end{figure}

\noindent \textbf{Proof} of Theorem~\ref{theo-Intro}. By Lemma~\ref{lem-BCH}
$$
M_q^N(T)
\;\geq\;
\frac{1}{2}\,\EM\big(
\max\{\tilde{m}_q^N(T),\hat{m}_q^N(T)\}
\big)
\;.
$$
Let $\alpha\geq\frac{1}{2}+\delta$ and $a^4T\geq 2b^3 e^{(2+\delta')\sigma N^{\alpha}} $, for some $\delta'>0$ that does not scale with $N$. Combining the above results, one finds
\begin{align*}
M^N_q(T)
&
\;\geq\;
\frac{1}{4\lambda^2}\,\PM\big(\Ee(\lambda,\mu,\eta)\big)\,(N''')^q\;-\;
\frac{C_5}{2}\,(N''')^{q+1}\,N^4\,e^{-2\delta' \sigma N^{\alpha}}
\\
&
\;=\; 
A_q(\lambda,\mu,\eta)N^{q-1}
\,-\,\mathcal{O}\big(N^{q+5}e^{-2\delta' \sigma N^{\alpha}}\big)
\;,
\end{align*}
where $A_q(\lambda,\mu,\eta)=N\frac{8^{-q}}{4\lambda^2}\PM\big(\Ee(\lambda,\mu,\eta)\big)$ does not grow  with $N$ due to \eqref{eq-E2CapE3}. Bounding $T$ from above by $a^4T\leq 2b^3e^{(2+\delta'')\sigma N^{\alpha}}$ for some $\delta''>\delta'$, this leads to
$$
M^N_q(T)
\;\geq\; 
A_{q,\alpha}(\lambda,\mu,\eta) \log(aT)^{\frac{q-1}{\alpha}}
\,-\,
\mathcal{O}\big(N^{q+5}e^{-2\delta' \sigma N^{\alpha}}\big)
\;,
$$
with $A_{q,\alpha}(\lambda,\mu,\eta)=A_q(\lambda,\mu,\eta)\left(\frac{1}{(2+\delta'')\sigma}\log\left(\frac{a^3}{2b^3}\right)\right)^\frac{q-1}{\alpha}$.
\hfill $\Box$

\begin{figure}
\includegraphics[width=0.49\textwidth]{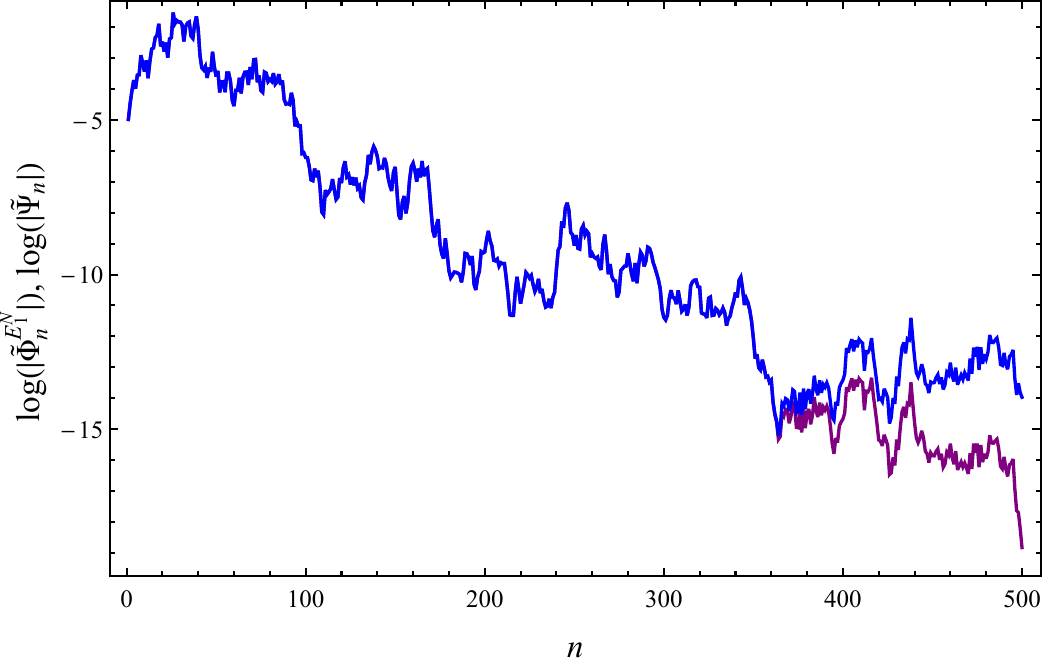}
\includegraphics[width=0.49\textwidth]{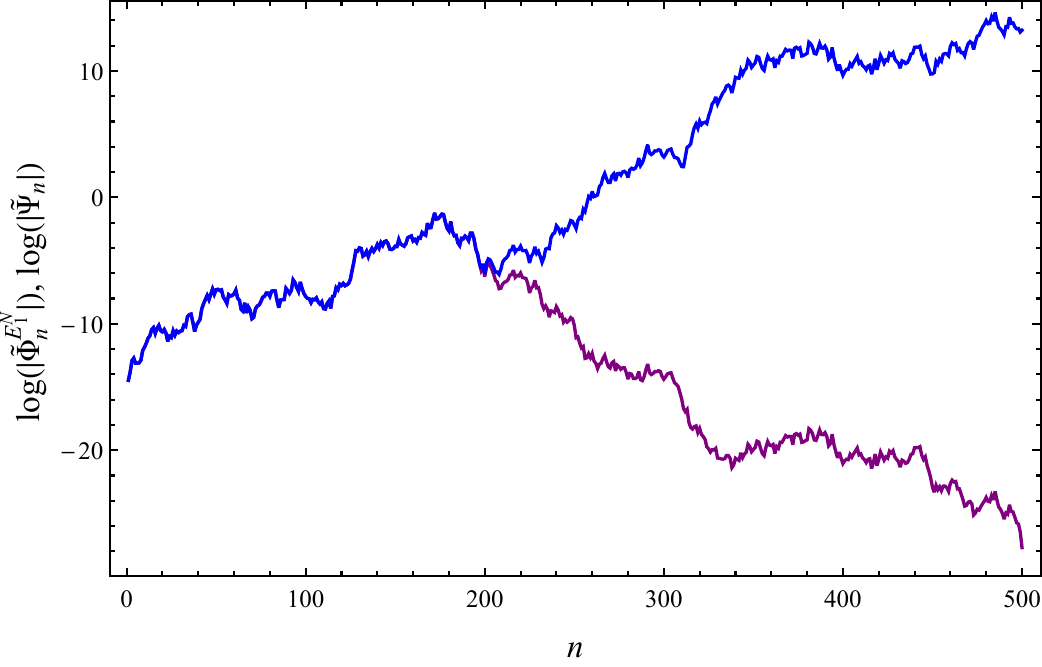}
\caption{\textit{Comparison of $\log(|\tilde{\Phi}^{E^N_1}_{n}|)$ {\rm (}in purple{\rm )} with $\log(|\tilde{\Psi}_n|)= \tilde{w}_n-\log(\sqrt{2}\|\tilde{\psi}\|)$ {\rm (}in blue{\rm )} for two different realizations again with $N=1000$. In the one in the left  plot, the position of the maximum of the random walk is smaller than the position of the minimum of the random walk. The sample is actually the same one as in {\rm Figure~\ref{fig-intro}}. In the second one, the position of the maximum is larger than that of the minimum.  The random walk {\rm (}still in blue{\rm )} is shifted by $c$ to $\tilde{w}_n+c$ with $c$ chosen such that the values at $n=1$ coincide.
}}
\label{fig:wrapfig2}
\end{figure}

\section{Numerical illustrations and comments}
\label{sec-NumIll}

This brief section provides some supplementary numerical information about the small eigenvalues and the eigenfunction of the smallest eigenvalue.  Let us start out by considering the positive eigenvalues $0<E^N_1<E^N_2<\ldots < E^N_k<\ldots$ of $H^N$. If one supposes that the small positive eigenvalues of $H^N$ are roughly distributed according to $\Nn$, one deduces from the Dyson peak behavior in \eqref{eq-Asymptotics} that
\begin{equation}
\label{eq-EnergyBehav}
\frac{k}{2N}\;\approx\;\frac{C}{\ln(E^N_k)^2}
\quad
\Longleftrightarrow
\quad
E^N_k\;\approx\;e^{-\frac{1}{\sqrt{k}}\sqrt{2CN}}
\;.
\end{equation}
For the smallest eigenvalue $E^N_1$, namely $k=1$, this roughly agrees with Corollary~\ref{coro-EnergyBoundOBC}. The factor $1/\sqrt{k}$ indicates what one may expect for the next eigenvalues.  Of course, all eigenvalues are random. Figure~\ref{fig:wrapfig} shows the distribution of $-N^{-\frac{1}{2}} \log(E_1^N)$ and, for sake of comparison, that of $-\sqrt{2} N^{-\frac{1}{2}} \log(E_2^N)$. It roughly confirms \eqref{eq-EnergyBehav}, but shows that these quantities have non-trivial distributions, implying that the smallest energies themselves have giant fluctuations. The numerical method is based on oscillation theory for chiral Jacobi matrices which will be discussed in detail in a future work. Let us point out though that this method is considerably more efficient than sparse exact diagonalization techniques. It is an interesting problem to determine the distribution of $-N^{-\frac{1}{2}} \log(E_1^N)$. Numerics indicate that is depends on the variance of $\log(\tilde{\kappa})$, but not the details of the distribution.

\vspace{.2cm}

Next let us address the normalized eigenfunction $\Phi^{E^N_1}$ and focus on the component $\tilde{\Phi}^{E^N_1}$. It is one of the key insights of Section~\ref{sec-Probabilistic}, based on the deterministic Corollary~\ref{coro-QuotientRandomWalkBound}, that $\tilde{\Phi}^{E^N_1}$ is well-approximated by the normalized zero mode $\tilde{\Psi}$ whenever the position of the maximum of the random walk $\tilde{w}$  is smaller than the position of the minimum. This happens with probability $\frac{1}{2}$. The left plot of Figure~\ref{fig:wrapfig2} shows such a favorable configuration and indeed one sees excellent agreement with $\log(|\tilde{\Phi}^{E^N_1}|)$, except at the right end of the sample. Note, however, that the state has negligible weight on this part of the sample. Nevertheless, this deviation is crucial because $\tilde{\Phi}^{E^N_1}$ has to satisfy the right boundary condition, which $\tilde{\Psi}$ does not do. As will be explained elsewhere, this becomes particularly transparent in oscillation theory.  In the second sample considered in the right plot of Figure~\ref{fig:wrapfig2}, the minimum lies to the left of the maximum. Based on Corollary~\ref{coro-QuotientRandomWalkBound}, one does not expect  good agreement of $\tilde{\Phi}^{E^N_1}$ with $\tilde{\Psi}$, and this is confirmed by the plot. Nevertheless, after a shift of the random walk by a suitable constant, one does have good agreement of $\tilde{\Phi}^{E^N_1}$ with the shifted random walk on the left side of the sample. Hence the zero mode reproduces the smallest eigenvector very well locally. Of course, this comes as no surprise because both sequences satisfy almost the same recurrence relation, so locally near the left boundary they only differ by a factor which depends on the normalization constants. Summing up, for roughly half of the realizations (those with the maximum appearing before the minimum) there is a good agreement of the eigenfunction with the zero energy mode on the relevant parts of the sample where there is most of the weight of the normalized states.

\vspace{.2cm}

Based on this latter fact, one can use probabilistic information on random walks for the analysis of the eigenfunction of the smallest eigenvalue. One of the key facts, proved in Proposition~\ref{prop-PhiLower} and used in Section~\ref{sec-RealEnergies}, is that the maximal value of the normalized eigenfunction $\Phi^{E^N_1}$ (taken at the localization center) is of order $1$ with high probability. Note that the probability of the set in Proposition~\ref{prop-PhiLower} is only of order $\frac{1}{N}$, but this factor is due to the fact that the localization center is fixed precisely at the center, which by the Sparre-Anderson theorem gives the factor $\frac{1}{N}$. Taking a sum over all possible localization centers indeed shows that the maximum of $\Phi^{E^N_1}$  is larger than a positive constants with positive probability. Figure~\ref{fig:wrapfig3} shows the distribution of this maximal value, and the second plot compares it to the distribution of the maximal value of the normalized zero energy state $\tilde{\Psi}$. One sees that there are merely small discrepancies, even though no conditioning on the good configurations (with maximum before minimum) was taken. 

\vspace{.2cm}

Finally, let us compare the bound of Theorem~\ref{theo-Intro} with the predictions of \cite{BAK} which states that $x\sim \log(t)^2$, see eq.~(1) in \cite{BAK}. In terms of the second instead of first moments, this corresponds to  $M_2^N(T)\sim \log(T)^4$, with the time scales being $T\sim e^{\sqrt{N}}$, both here and in \cite{BAK}. The lower bound in Theorem~\ref{theo-Intro} merely gives $M_2^N(T)\geq C \log(T)^2$ and is hence weaker. In fact, the proof in Section~\ref{sec-RealEnergies} only uses the one diagonal contribution \eqref{eq-heuristics} stemming from the smallest energy $E^N_1$. Improving the bound would require showing that the $N^\beta$ states with the smallest energies contribute to the transport. If this contribution is the same as by the lowest energy state (which is definitely and overestimate), one could then conclude that $M_q^N(T)\geq C N^{\beta} \log(T)^{2(q-1)}=C\log(T)^{2(q-1+\beta)}$, which only for $\beta=1$ and hence a finite fraction of contributing states would confirm the prediction of \cite{BAK}. In our opinion, this is not realistic and we expect that much fewer states can be expected to contribute to quantum transport.

\begin{figure}
\includegraphics[width=0.49\textwidth]{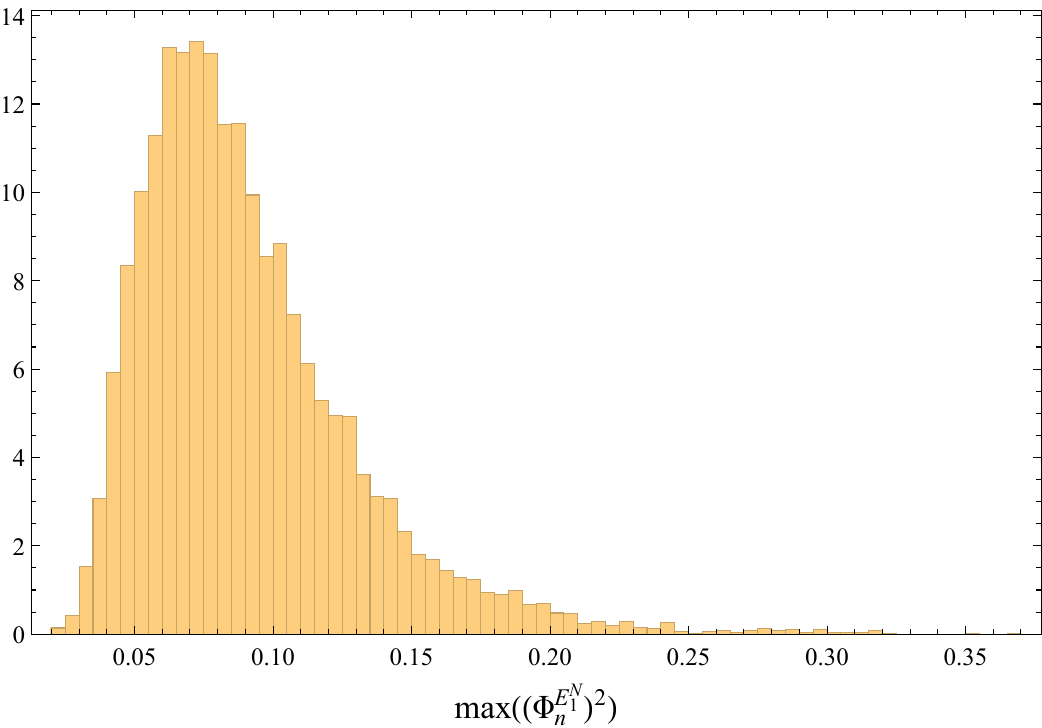}
\includegraphics[width=0.49\textwidth]{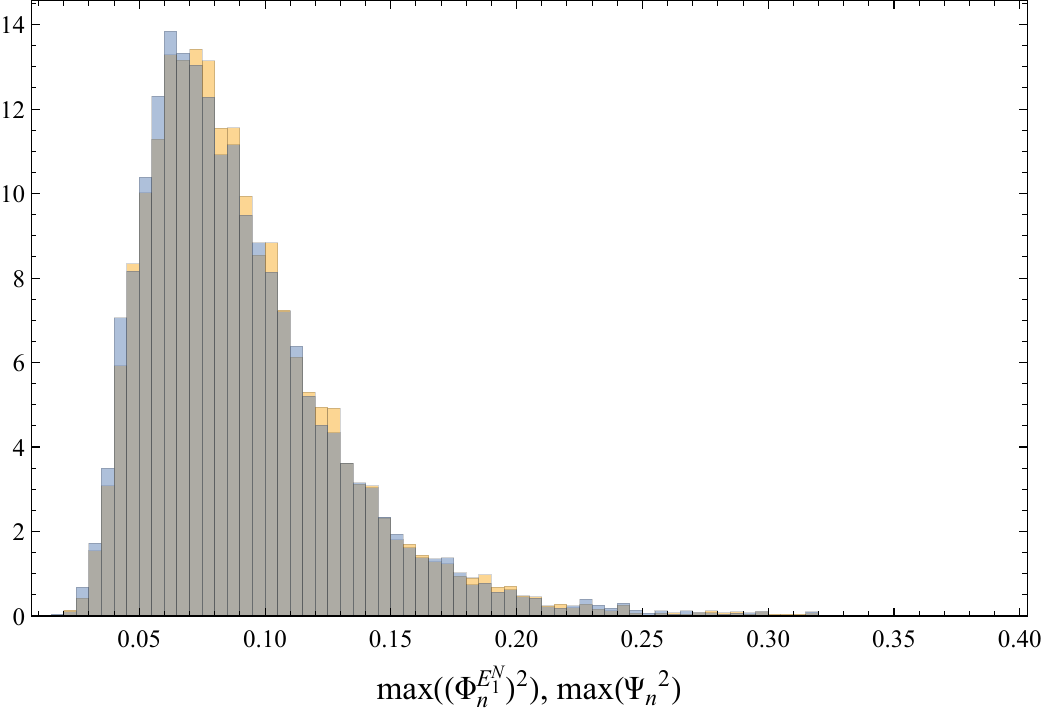}
\caption{\textit{Histogram of the value of the normalized eigenstate $\Phi^{E^N_1}$ at the localization center for $10000$ samples (computed with the Arnoldi algorithm) and, in the second plot, comparison with the distribution with the maximal value of the normalized random walk $\tilde{\Psi}$.
}}
\label{fig:wrapfig3}
\end{figure}



\begin{thebibliography}{99}

\bibitem{BAK} D.~Bagrets, A.~Altland, A.~Kamenev, {\sl Sinai Diffusion at Quasi-1D Topological Phase Transitions}, Phys. Rev. Lett. {\bf 117}, 196801 (2016).

\bibitem{BF} L.~Balents,  M.~P.~A.~Fisher, {\sl Delocalization transition via supersymmetry in one dimension}, Phys. Rev. {\bf B 56}, 12970 (1997).

\bibitem{BSS} S.~Barkhofen, D.~Syamsundar, S.~Sperling, C.~Silberhorn, A.~Altland, D.~Bagrets, K.~W.~Kim, T.~Micklitz, {\sl Experimental observation of topological quantum criticality}, Phys. Rev. Research {\bf 6},  033194 (2024).

\bibitem{BG} S.~de~Bi{\`e}vre, F.~Germinet, {\sl Dynamical Localization for the Random Dimer Schr{\"o}dinger Operator}, J. Stat. Phys. {\bf 98}, 1135-1148 (2000).

\bibitem{Bol} E.~Bolthausen, {\sl On a functional central limit theorem for random walks conditioned to stay
positive}, Ann. Prob. {\bf 4},  480-485  (1976).

\bibitem{BS} A.~N.~Borodin, P.~Salminen, {\sl Handbook of Brownian motion-facts and formulae}, 2nd corrected printing of 2nd edition, (Springer, Basel, 2015).

\bibitem{DK} P.~L.~Davies, W.~Kr\"amer, {\sl The Dickey-Fuller test for exponential random walks}, Econometric Theory {\bf 19}, 865-877 (2003).

\bibitem{DSS} J.~De Moor, C.~Sadel, H.~Schulz-Baldes, {\sl Footprint of a topological phase transition on the density of states}, Lett. Math. Phys. {\bf 113}, 96 (2023).

\bibitem{DSS2} J.~De Moor, C.~Sadel, H.~Schulz-Baldes, {\sl Scaling of the Lyapunov exponent at a balanced hyperbolic critical point}, Annales H. Poincar\'e {\bf 27}, 2199-2242 (2026).

\bibitem{DKS} M.~Drabkin, W.~Kirsch, H.~Schulz-Baldes, {\sl Transport in the random Kronig-Penney model}, J. Math. Phys. {\bf 53}, 122109 (2012).

\bibitem{DWP} D.~H.~Dunlap, H.-L.~Wu, P.~W.~Phillips, {\sl Absence of Localization in  Random-Dimer Model}, Phys. Rev. Lett. {\bf 65}, 88-91 (1990).

\bibitem{Dys} F.~J.~Dyson,  {\sl The dynamics of a disordered linear chain}, Phys. Rev. {\bf 92}  1331-1334 (1953).

\bibitem{EM} F.~Evers, A.~D.~Mirlin, {\sl Anderson transitions}, Rev. Mod. Phys. {\bf 80}, 1355-1417 (2008).
 
\bibitem{Fel} W.~Feller, {\sl  An introduction to probability theory and its applications, Volume 2}, (John Wiley \& Sons, Hoboken NJ, 1991).
 
\bibitem{Fis} D.~S.~Fisher, {\sl Critical behavior of random transverse-field Ising spin chains}, Phys. Rev. {\bf B 51}, 6411 (1995).

\bibitem{GK} F.~Germinet, F.~Klopp, {\sl Spectral statistics for random Schr\"odinger operators in the localized regime}, J. European Math. Soc. {\bf 16}, 1967-2031 (2014).

\bibitem{Gua} I.~Guarneri, {\sl Singular continuous spectra and discrete wave packet dynamics}, J. Math. Phys. {\bf 37}, 5195-5206  (1996).

\bibitem{HJ} R.~Hayn,  W.~John, {\sl Effective equations for disordered one-dimensional systems}, Zeitschrift f\"ur Physik B Cond. Mat. {\bf 67}, 169-177  (1987).

\bibitem{Igl} D.~L.~Iglehart, {\sl Functional central limit theorems for random walks conditioned to stay positive}, Annals Prob. {\bf 2}, 608-619 (1974).

\bibitem{IM} F.~Igl\'oi, C.~Monthus, {\sl Strong disorder RG approach of random systems}, Physics Reports {\bf 412}, 277-431 (2005).

\bibitem{JLM} S.~Jitomirskaya, W.~Liu, L.~Mi, {\sl Sharp palindromic criterion for semi-uniform dynamical localization}, {\tt arXiv:2410.21700}.

\bibitem{JSS} S.~Jitomirskaya, H.~Schulz-Baldes, G.~Stolz, {\sl Delocalization in random polymer models}, Commun. Math. Phys. {\bf 233}, 27-48  (2003).

\bibitem{JS} S.~Jitomirskaya, H.~Schulz-Baldes,  {\sl Upper bounds on wavepacket spreading for random Jacobi matrices}, Commun. Math. Phys. {\bf 273}, 601-618  (2007).

\bibitem{Kal} O.~Kallenberg, {\sl Foundations of modern probability}, 2nd Edition, (Springer, New York, 2002). 

\bibitem{KHQ} I.~Komissarov, T.~Holder, R.~Queiroz, {\sl Quantum critical dynamics induced by topological zero modes}, Phys. Rev. Lett. {\bf 136},  136602 (2026).

\bibitem{KV} M.~Kotowski, B.~Vir\'ag, {\sl Dyson's spike for random Schr\"odinger operators and Novikov-Shubin invariants of groups}, Commun. Math. Phys. {\bf 352},  905-933 (2017).

\bibitem{LaSi} Y.~Last, B.~Simon, {\sl Fine structure of the zeros of orthogonal polynomials, IV. A priori bounds and clock behavior}, Comm. Pure Appl. Math. {\bf 61}, 486-538  (2008).

\bibitem{MHMD} J.~Mard, J.~A.~Hoyos, E.~Miranda, V.~Dobrosavljevi\'c, {\sl Strong-disorder renormalization-group study of the one-dimensional tight-binding model}, Phys. Rev.  {\bf B 90},  125141 (2014).

\bibitem{McK} R.~H.~McKenzie, {\sl Exact results for quantum phase transitions in random XY spin chains}, Phys. Rev. Lett. {\bf 77}, 4804 (1996).

\bibitem{MSHP} I.~Mondragon-Shem, J.~Song, T.~L. Hughes,  E.~Prodan, {\sl  Topological criticality in the chiral-symmetric AIII class at strong  disorder}, Phys. Rev. Lett. {\bf 113},  046802 (2014).

\bibitem{MMS} F.~Mori, S.~N.~Majumdar, G.~Schehr, {\sl Distribution of the time between maximum and minimum of random walks}, Phys. Rev. {\bf  E 101}, 052111 (2020).

\bibitem{Nak} F.~Nakano, {\sl Distribution of localization centers in some discrete random systems}, Rev. Math. Phys. {\bf 19}, 941-965 (2007).

\bibitem{ITA} M.~Inui, S.~A.~Trugman, E.~Abrahams, {\sl Unusual properties of midband states in systems with off-diagonal disorder}, Phys. Rev. {\bf B49}, 3190-3196 (1994).

\bibitem{PaS} S.~Palpacelli, S.~Succi, {\sl Numerical Evidence of Sinai Diffusion of Random-Mass Dirac Particles}, Commun. Comput. Phys. {\bf 23},  899-909 (2018).

\bibitem{PS} E.~Prodan, H.~Schulz-Baldes, {\sl Bulk and Boundary Invariants for Complex Topological Insulators: From $K$-Theory to Physics}, (Springer International, Cham, 2016).

\bibitem{Ran} N.~Rangamani,  {\sl Singular-unbounded random Jacobi matrices}, J. Math. Phys. {\bf 60}, 081904 (2019).

\bibitem{SSt1} H.~Schulz-Baldes, T.~Stoiber, {\sl Harmonic analysis in operator algebras and its applications to index theory and topological solid state systems}, (Springer International, Cham, 2022).

\bibitem{Sha} J. Shapiro, {\sl Incomplete localization for disordered chiral strips},  J. Math. Phys. {\bf 64}, 081902 (2023).

\bibitem{SH} K.~Somnatha, R.~Herbei, {\sl Joint exact simulation of the maximum and its location for
some constrained Brownian processes}, J. Stat. Comp. Simulation {\bf 96}, 2849-2872 (2026).

\bibitem{SE} C.~M.~Soukoulis, E.~N.~Economou, {\sl Off-diagonal disorder in one-dimensional systems},  Phys. Rev. {\bf B 24}, 5698-5701 (1981).

\bibitem{SSH} W.~P.~Su, J.~R.~Schrieffer, A.~J.~Heeger, {\sl Soliton excitations in polyacetylene}, Phys. Rev. {\bf B 22}, 2099-2111 (1980).

\bibitem{TC} G.~Theodorou, M.~H.~Cohen, {\sl Extended states in a one-demensional system with off-diagonal disorder}, Phys. Rev. {\bf B 13}, 4597-4601 (1976).

\bibitem{VV} B.~Valk\'o, B.~Vir\'ag, {\sl Random Schr\"odinger operators on long boxes, noise explosion and the GOE}, Trans. AMS {\bf  366}, 3709-3728 (2014).

\end{thebibliography}
\end{document}